\documentclass[onecolumn,draftclsnofoot]{IEEEtran}
\usepackage{indentfirst}
\usepackage[dvips]{graphicx}
\usepackage{amsfonts}
\usepackage{multirow}
\usepackage{booktabs}
\usepackage{amsmath,amsthm,amssymb}
\usepackage{color,changepage}
\usepackage{algorithm}
\usepackage{algorithmic}
\usepackage{bbm}
\usepackage{verbatim}
\usepackage{makecell}
\usepackage{subfigure}
\usepackage{mathrsfs}
\usepackage{arydshln}
\usepackage{cases}
\usepackage{threeparttable}
\usepackage{extarrows}
\usepackage{array}
\usepackage{bm}
\usepackage{pifont}
\usepackage[T1]{fontenc}
\usepackage{threeparttable}
\usepackage{booktabs}
\usepackage{mdwlist}
\usepackage{xr}
\usepackage{diagbox}
\usepackage{hyperref}
\usepackage{cite}
\usepackage{xr}
\usepackage{nicematrix}
\usepackage{hhline}
\allowdisplaybreaks[4]

\definecolor{newcolor}{rgb}{0.5,0,1}

\newcommand{\cmarkk}{\ding{56}}

\newcolumntype{I}{!{\vrule width 2pt}}

\newtheorem{Theorem}{Theorem}

\newtheorem{Definition}{Definition}

\newtheorem{Lemma}{Lemma}

\theoremstyle{remark}
\newtheorem{Remark}{Remark}

\DeclareMathAlphabet{\mathpzc}{OT1}{pzc}{m}{it}

\definecolor{newcolor}{rgb}{0.5,0,1}

\def\BibTeX{{\rm B\kern-.05em{\sc i\kern-.025em b}\kern-.08em
    T\kern-.1667em\lower.7ex\hbox{E}\kern-.125emX}}

\let\emptyset\varnothing

\begin{document}

\title{Fundamental Limits of Information-Theoretic Secure Aggregation with User Dropouts in Federated Learning}
\author{Jinbao Zhu, Kai Wan, and Giuseppe Caire
\thanks{Jinbao Zhu is 
with the Information Coding and Transmission (ICT) Key Laboratory of Sichuan Province,
Southwest Jiaotong University, Chengdu 611756, China (email: jinbaozhu@swjtu.edu.cn).
Kai Wan is with the School of Electronic Information and Communications,
Huazhong University of Science and Technology, Wuhan 430074, China (email: kai\_wan@hust.edu.cn).
Giuseppe Caire is with the Electrical Engineering and Computer Science
Department, Technische Universit\"at Berlin, 10587 Berlin, Germany (e-mail:
caire@tu-berlin.de).
}}
\maketitle

\begin{abstract}
Secure aggregation enables a server to compute the sum of users’ local computation results, such as gradients or model updates in federated learning, without revealing their individual values.
When users may drop out during protocol execution, secure aggregation generally relies on correlated random keys to mask the local results while allowing the server to recover the sum of the surviving users’ local results. In this federated learning setting, a secure aggregation protocol consists of three phases: key distribution, masked-gradient upload, and mask elimination. In the key-distribution phase, users establish correlated random keys to protect their local gradients.
Users then use these keys to mask their gradients before uploading them to the server. Because user dropouts may prevent the masks from canceling completely, the users provide additional information in the mask-elimination phase, enabling the server to remove the residual masks and recover the desired gradient sum.
Existing information-theoretic studies typically design the mask-elimination communication for a prescribed dropout threshold and evaluate its cost under the worst-case combination of dropouts across the masked-gradient upload and mask-elimination phases. Such a characterization does not reveal how the required communication varies with the actual dropout scenario, and their schemes cannot exploit such variation to reduce communication. Moreover, the fundamental randomness and communication costs of establishing the correlated keys remain unexplored.
Motivated by these limitations, we formulate an adaptive secure aggregation framework under the $T$-colluding security constraint that supports all feasible dropout scenarios and adapts its communication strategy to the actual dropout scenario observed during protocol execution. We develop an adaptive secure aggregation scheme and derive converse bounds on the randomness needed to generate correlated keys, the communication needed to distribute them, the communication for uploading masked gradients, and the additional communication required to eliminate residual masks. In particular, the latter two bounds are characterized separately for each dropout set realized during the masked-gradient upload phase.
By comparing the achievable rates with the converse bounds, we show that the proposed scheme achieves the information-theoretically optimal masked-gradient communication rate for every feasible dropout scenario. It also achieves the optimal mask-elimination communication rate over almost all feasible dropout regimes.
In addition, the achieved key rate and key-distribution communication rate are each either information-theoretically optimal or within a multiplicative factor of two of their respective optima.
These results establish the first fundamental characterization of the communication and randomness resources required for secure aggregation with user dropouts.
\end{abstract}
\begin{IEEEkeywords}
Federated learning, secure aggregation, information-theoretic security, user dropouts, fundamental limits. 
\end{IEEEkeywords}

\section{Introduction}
Federated learning (FL) has emerged as a prominent distributed machine learning paradigm that enables multiple users to collaboratively train a global model under the coordination of a central server without directly sharing their raw data \cite{mcmahan2017communication,konevcny2016federated,li2020federated,yang2019federated}. 
In a typical FL system, each user performs local training on its private data and uploads only the resulting gradients or model updates to the server, which aggregates them, typically through a linear operation such as summation or averaging, to update the global model for the next training iteration.
By keeping raw training data on local devices, FL significantly reduces the risk of direct privacy leakage. However, recent studies on gradient inversion attacks have demonstrated that sensitive training data can still be reconstructed from the shared gradients \cite{zhu2019deep,zhao2020idlg,geiping2020inverting,yin2021see}. Consequently, protecting individual gradients from disclosure is essential for preserving user privacy in FL. 

Secure aggregation \cite{bonawitz2017practical,bell2020secure,kairouz2021advances} was introduced to provide such privacy protection by enabling the server to recover the aggregate of users' gradients without revealing any individual gradient or compromising model accuracy, making it a key privacy-preserving primitive in federated learning systems. 
In practical deployments, however, secure aggregation protocols must also tolerate user dropouts, which occur when participating users become unavailable during protocol execution.
Such dropouts may result from network failures, limited computational resources, battery depletion, or temporary disconnections \cite{bonawitz2019towards,dean2013tail,ananthanarayanan2013effective}. 
Since both the number and identities of dropout users are unknown in advance and may vary across protocol executions, such uncertainty substantially complicates the design of efficient secure aggregation protocols that can handle all feasible dropout scenarios while preserving privacy guarantees.

To address this difficulty, secure aggregation schemes supporting user dropouts \cite{bonawitz2017practical,bell2020secure,choi2020communication,liu2022efficient} generally rely on correlated random keys held by the users. These keys serve two complementary purposes: each user’s key provides the randomness needed to mask its individual gradient, while the correlation among users’ keys provides the structure needed to eliminate the resulting masks during aggregation. To establish these correlated keys, users first communicate with one another. Each user then uses the established keys to mask its local gradient and uploads the resulting masked gradient to the server.
When some users drop out, however, the resulting masks may not fully cancel, leaving a residual aggregate mask that prevents the server from directly recovering the desired aggregate.
Consequently, additional information must be communicated to the server to eliminate the residual aggregate mask and complete the aggregation.
These operations naturally divide the communication process into three phases: a key-distribution phase, a masked-gradient upload phase, and a mask-elimination phase. The key-distribution phase is performed as an offline setup before the aggregation begins and is therefore assumed to be completed while all users are available. In contrast, both the masked-gradient upload and mask-elimination phases occur during online aggregation and are affected by user dropouts. Dropouts during the masked-gradient upload phase determine which masked gradients are available for aggregation, whereas dropouts during the mask-elimination phase determine which users remain available to provide the information required to eliminate the residual aggregate mask. 
Characterizing the fundamental communication costs incurred in these phases, together with the randomness for establishing the correlated keys, has become a central problem in the information-theoretic study of secure aggregation with dropouts.

The information-theoretic secure aggregation problem with user dropouts was first investigated by Zhao and Sun \cite{zhao2022information} under the $T$-colluding security constraint, where the server aims to recover the sum of the local gradients of the surviving users while learning no information about the individual gradients of the non-colluding users beyond what is revealed by the desired sum and the gradients of the colluding users. 
Their dropout model assumes that only users that successfully upload their masked gradients can participate in the subsequent mask-elimination phase; equivalently, the dropout set in the masked-gradient upload phase is required to be a subset of that in the mask-elimination phase.
Under this dropout model, Zhao and Sun \cite{zhao2022information} characterized the optimal communication overheads of both online phases. Specifically, the communication overhead of the masked-gradient upload phase was characterized directly, whereas that of the mask-elimination phase was characterized by taking the worst case over the admissible dropout scenarios across the two online phases. They also constructed an achievable scheme, albeit under a key-distribution model that assumes the existence of a trusted third party.
Subsequently, So \textit{et al.} \cite{so2022lightsecagg} proposed a lightweight secure aggregation scheme based on Shamir’s secret sharing \cite{shamir1979share} for correlated random key distribution, thereby removing the need for a trusted third party and reducing the required random key size.
Further work by Wan \textit{et al.} \cite{wan2024information,wan2024capacity,zhang2025securegroup} studied the secure aggregation problem under the symmetric groupwise key-distribution model, in which every subset of users of the same size shares one uncoded common random key and all such shared keys have the same entropy size. 
Under this prescribed key-distribution structure, they characterized the optimal communication overheads of both online phases, with the mask-elimination communication evaluated under the worst-case dropout scenarios across the two phases. Beyond the secure aggregation problem with user dropouts, numerous variants of information-theoretic secure aggregation have been investigated, including secure aggregation without user dropouts \cite{Yi2026Capacity,zhao2023secure}, hierarchical secure aggregation \cite{zhang2025optimal,li2026fundamental}, vector secure aggregation \cite{yuan2025vector,hu2026capacity}, and sparse secure aggregation \cite{Tang2026Sparse,sami2024secure}. 
Collectively, these works have substantially advanced the understanding of information-theoretic secure aggregation under various system models and privacy constraints.

Despite these advances, existing information-theoretic secure aggregation works with user dropouts \cite{zhao2022information,so2022lightsecagg,wan2024information,wan2024capacity,zhang2025securegroup} still suffer from three fundamental limitations.
\begin{itemize}
\item Existing secure aggregation schemes design the mask-elimination procedure for a predetermined dropout threshold. To guarantee successful aggregation under any dropout scenario up to this threshold, their mask-elimination communication is determined by the dropout scenario associated with the prescribed threshold, rather than by the dropout scenario actually encountered during protocol execution. This design leads to a fundamental inefficiency. When the actual number of dropouts is smaller than the prescribed threshold, the communication capability of surviving users is underutilized, resulting in unnecessary communication overhead. On the other hand, if the actual number of dropouts exceeds the predetermined threshold, the
aggregation process fails. Therefore, existing schemes cannot simultaneously guarantee
robustness against all feasible dropout scenarios and achieve communication efficiency tailored to the actual number of dropout users during protocol execution. The benefit of adapting communication to the actual number of dropout users has been demonstrated in gradient coding \cite{tandon2017gradient}. In particular, Cao \textit{et al.} \cite{cao2021adaptive} proposed an adaptive gradient coding scheme that achieves
the optimal communication cost for each realized number of user dropouts under a given computation-load constraint. Although both gradient coding and secure aggregation aim to compute the sum of gradients in the presence of user dropouts, they address different challenges. Gradient coding relies on data allocation and redundant computation across users to recover the gradient sum despite user dropouts. In contrast, secure aggregation allows users to compute their gradients locally without sharing their private training data and relies on correlated random keys to protect individual gradients while enabling the recovery
of the desired gradient sum.
Consequently, adapting the mask-elimination communication to the actual dropout scenario while preserving information-theoretic security requires a different approach.
\item Existing works of secure aggregation with user dropouts have primarily focused on characterizing the communication overheads of the masked-gradient upload and mask-elimination phases, while largely overlooking the resources required to establish correlated random keys. Although the fundamental limits of key randomness and key-distribution communication have been fully characterized for secure aggregation without user dropouts \cite{Yi2026Capacity,zhao2023secure}, these results do not directly extend to the setting with user dropouts. This is because the correlated keys must be established before the identities of the surviving users are known, while ensuring both correctness and information-theoretic security under all feasible dropout scenarios. These additional requirements introduce new challenges in determining the minimum randomness and communication resources needed for key establishment. Consequently, the fundamental limits of key randomness and key-distribution communication for secure aggregation with user dropouts remain largely unexplored.
\item Existing works characterize the communication overhead of the
mask-elimination phase by taking the worst case over the dropout scenarios in both the masked-gradient upload and mask-elimination phases. Such a joint worst-case characterization overlooks the fact that the dropout set in the masked-gradient upload phase is already known when mask elimination phase begins and can therefore be used to adapt the subsequent communication.  
Characterizing the optimal mask-elimination communication overhead conditioned on each realized dropout set in the masked-gradient upload phase remains an open problem.
\end{itemize}


Motivated by these limitations, we revisit the $T$-colluding information-theoretic secure aggregation problem with user dropouts in a federated learning system consisting of $N$ users and a central server. 
We consider a general dropout model in which $U$ users may drop out during the masked-gradient upload phase and $S$ users may drop out during the subsequent mask-elimination phase. Unlike prior dropout formulations \cite{zhao2022information,so2022lightsecagg,wan2024information,wan2024capacity,zhang2025securegroup}, we do not impose a set-inclusion relation between the dropout sets in these two phases, allowing user availability to vary during protocol execution. Such variations in user availability may arise from intermittent connectivity, network delays, heterogeneous computation times, or temporary communication failures. For example, a user that does not participate in the masked-gradient upload phase may become available again during the subsequent mask-elimination phase. 
Under this model, we aim to characterize the fundamental randomness and communication costs of secure aggregation with user dropouts. In particular, we investigate the communication overheads of the masked-gradient upload and mask-elimination phases for each realized dropout set in the masked-gradient upload phase, rather than considering only the joint worst-case dropout scenarios across both phases. The main contributions of this paper are summarized as follows.
\begin{itemize}

\item We formulate an adaptive secure aggregation framework with user dropouts, in which a single scheme accommodates all feasible dropout scenarios and its communication strategy dynamically adapts to the actual dropout scenario encountered during protocol execution. We introduce four normalized performance measures: the key rate, key-distribution communication rate, masked-gradient communication rate, and mask-elimination communication rate. For each realized dropout set in the masked-gradient upload phase, we derive converse bounds on the masked-gradient communication rate and the mask-elimination communication rate. Moreover, we establish information-theoretic lower bounds on the key rate and key-distribution communication rate.
\item We propose an adaptive secure aggregation scheme that simultaneously achieves a key rate of $N(T+1)$, a key-distribution communication rate of $N(N-1)$, a masked-gradient communication rate of $N-U$, and a mask-elimination communication rate of $\frac{N-S}{N-S-T}$ for all feasible dropout scenarios satisfying $0\le U\le N-1$ and $0\le S\le N-T-1$. 

\item By comparing the achievable rates with the converse bounds, we show that, for every dropout set $\mathcal{U}$ realized during the masked-gradient upload phase, the proposed scheme achieves the information-theoretically optimal masked-gradient communication rate. For each such dropout set $\mathcal{U}$, it also achieves the optimal mask-elimination communication rate whenever $1\leq S\leq N-T-1$ or $(S=0 \text{ and } T<N-U)$. In these regimes, the optimal mask-elimination communication rate is independent of the dropout set realized during the masked-gradient upload phase. The optimal mask-elimination communication rate remains unknown in the remaining regime $S=0$ and $T\geq N-U$. Finally, the achieved key rate and key-distribution communication rate are each shown to be either optimal or within a multiplicative factor of two of their respective information-theoretic optima.
\end{itemize}

The remainder of this paper is organized as follows.
Section \ref{Problem:formulation} formulates the adaptive secure aggregation problem with user dropouts. Section~\ref{Main:Result} presents the main results, including the achievable rate region, the converse bounds, and the corresponding gap characterization. Section~\ref{Achievability:proof} develops an adaptive secure aggregation scheme. Section \ref{Converse:Proof} establishes the converse bounds. Section~\ref{section:Comparison} provides a detailed comparison with existing information-theoretic secure aggregation works. Finally, this paper is concluded in Section \ref{conclusion}.

\subsubsection*{Notation}
Let boldface capital letters denote matrices and calligraphic letters denote sets. The cardinality of a set $\mathcal{U}$ is denoted by $|\mathcal{U}|$.
For any non-negative integers $m$ and $n$ satisfying $m\leq n$, let $[n]$ and $[m:n]$ denote the sets $\{1,2,\ldots,n\}$ and $\{m,m+1,\ldots,n\}$, respectively.

\section{Problem Formulation}\label{Problem:formulation}
Consider a federated learning system consisting of a central server and $N$ users, where each user $i \in [N]$ holds a private local gradient $W_i$, represented as a vector of $L$ symbols drawn independently and uniformly from a finite field $\mathbb{F}_q$ of size $q$, i.e.,\footnote{It is worth noting that the uniformity and independence assumptions on the gradient vectors play a key role in the converse proof, but are not essential for establishing achievability.}
\begin{IEEEeqnarray}{rCl}
H(W_1,\ldots,W_N)=\sum_{i \in [N]} H(W_i), \label{model:file:inden} \\
H(W_1)=\ldots=H(W_N)=L. \label{infor:indenpe}
\end{IEEEeqnarray}    
In the federated learning system, we assume the existence of error-free private channels among users and between each user and the server to facilitate information-theoretic privacy guarantees. The objective is to recover the sum of the users’ local gradients at the server in the presence of user dropouts, while guaranteeing information-theoretic privacy of each individual gradient against any collusion between the server and up to $T$ users. We consider a passive (honest-but-curious) adversary model, under which the server and colluding users follow the prescribed protocol but may combine all information available to them in an attempt to infer the non-colluding users' gradients. 

To this end, a secure aggregation scheme is divided into three phases: \emph{the key-distribution phase}, \emph{the masked-gradient upload phase}, and \emph{the mask-elimination phase}. The details are presented below.

    \textbf{Key-Distribution Phase:} Each user $i\in[N]$ locally generates a private random key $\mathcal{Z}_i$ over $\mathbb{F}_q$ and then shares an encoded version $\mathcal{Z}_{i,j}$ of this key with every other user $j\in[N]\backslash\{i\}$, i.e.,  
    \begin{equation}\label{model:key:distribution}
        H(\mathcal{Z}_{i,j} |\mathcal{Z}_i) = 0,\quad \forall\, i\in[N], j\in [N]\backslash\{i\}.
    \end{equation}
Since the users’ local random keys are generated independently of their gradient vectors, 
the resulting independence relationships can be stated as follows:
\begin{IEEEeqnarray}{c}\label{model:key:inden}
H(W_1,\ldots,W_N,\mathcal{Z}_1,\ldots,\mathcal{Z}_N)=\sum_{i \in [N]} H(W_i)+\sum_{i \in [N]} H(\mathcal{Z}_i).
\end{IEEEeqnarray}

\textbf{Masked-Gradient Upload Phase:} Upon receiving the encoded keys $\{\mathcal{Z}_{j,i}\}_{j\in[N]\backslash\{i\}}$ from all other users, user $i\in[N]$ generates a masked version $X_i$ of its local gradient $W_i$ using its local private key $\mathcal{Z}_i$ together with the received encoded keys, i.e., 
\begin{equation}\label{model:masked:graidient:2}
        H(X_{i} | W_i,\mathcal{Z}_i, \{\mathcal{Z}_{j,i}\}_{j\in[N]\backslash\{i\}}) = 0, \quad \forall\, i \in [N].
    \end{equation}
    User $i$ then uploads the masked gradient $X_i$ to the server. However, due to user dropouts, the server may only wait to receive masked gradients from a subset of the fastest users. 
    Let $\mathcal{U}\subseteq[N]$ with size $0\leq U\leq N-1$ denote the set of dropout users in the masked-gradient upload phase, i.e., the users whose masked gradients are not received by the server by the prescribed upload deadline.
    Although both the identities of the dropouts $\mathcal{U}$ and their total number $U$ are initially unknown, the server learns them upon completion of the masked-gradient upload phase. The server then receives the masked gradients $\{X_i\}_{i\in[N]\backslash\mathcal{U}}$ and aims to recover the sum of the gradients $W_{sum}^{\mathcal{U}}\triangleq\sum_{i\in[N]\backslash\mathcal{U}}W_i$ of all surviving users $[N]\backslash\mathcal{U}$.
                                                
    \textbf{Mask-Elimination Phase:} 
    To recover the desired gradient sum $W_{sum}^{\mathcal{U}}$, the server broadcasts the identities of the dropout users $\mathcal{U}$ to all users. 
    Upon receiving $\mathcal{U}$, each user $i\in[N]$ stops transmitting the masked gradient $X_i$ and instead generates an ordered sequence of $M$ messages $\mathcal{Y}_{i,1}^{\mathcal{U}},\mathcal{Y}_{i,2}^{\mathcal{U}},\ldots,\mathcal{Y}_{i,M}^{\mathcal{U}}$. These messages are transmitted to the server sequentially in their prescribed order until sufficient messages have been received to recover $W_{sum}^{\mathcal{U}}$, at which point the server broadcasts a stop signal to terminate any further  transmission.     
    These messages are deterministic functions of the information currently available to user $i$, i.e., for any $i\in[N]$,
    \begin{IEEEeqnarray}{c}\label{model:masked:gradients}
  H(\mathcal{Y}_{i,1}^{\mathcal{U}},\mathcal{Y}_{i,2}^{\mathcal{U}},\ldots,\mathcal{Y}_{i,M}^{\mathcal{U}}|W_i,\mathcal{Z}_i,\{\mathcal{Z}_{j,i}\}_{j\in[N]\backslash\{i\}},\mathcal{U})=0. \IEEEeqnarraynumspace
    \end{IEEEeqnarray}
    Due to user dropouts in the mask-elimination phase, the server may receive messages from only a subset of fastest users. More precisely, let $\mathcal{S}\subseteq[N]$ with size $0\leq S\leq S_{max}$ denote the set of dropout users in the mask-elimination phase, i.e., the users that fail to return all mask-elimination messages requested by the server before the phase terminates, where $S_{max}$ denotes the maximum number of tolerated dropouts. We impose no inclusion relation between $\mathcal{U}$ and $\mathcal{S}$. In particular, a user may drop out in one phase but participate in the other. 
    For a realized dropout set $\mathcal{S}$, the server collects the first $M_S$ messages from each surviving user in $[N]\backslash\mathcal{S}$. Here, $M_S$ is a predetermined parameter specifying the number of messages required from each surviving user for successful decoding under $S$ dropouts in the mask-elimination phase. Since $S$ is unknown in advance, the server does not know beforehand which $M_S$ will apply and therefore collects the ordered messages sequentially until sufficient messages have been received for decoding. To ensure communication efficiency, it is natural to require $M_0\leq M_1\leq\ldots\leq M_{S_{\max}}$, reflecting that fewer surviving users may require each of them to provide more messages for successful decoding.

The secure aggregation scheme must satisfy the following two information-theoretic constraints. 
\begin{itemize}
    \item \textbf{Correctness:} It must be guaranteed that the server can decode the desired sum $W_{sum}^{\mathcal{U}}$ for all feasible dropout scenarios satisfying $\mathcal{U}\subseteq[N]$ with size $0\leq U\leq N-1$ and $\mathcal{S}\subseteq[N]$ with size $0\leq S\leq S_{max}$, i.e.,
    \begin{IEEEeqnarray}{c}
        H(W_{sum}^{\mathcal{U}} |\{X_i\}_{i\in[N]\backslash \mathcal{U}},\{\mathcal{Y}_{i,m}^{\mathcal{U}}\}_{i\in[N]\backslash\mathcal{S},m\in[M_S]}) =0. \IEEEeqnarraynumspace\label{eq:correctness}
    \end{IEEEeqnarray}
This constraint allows the server to recover the sum of the gradients of all surviving users $[N]\backslash\mathcal{U}$ in the masked-gradient upload phase, while adapting the mask-elimination communication to the realized dropout scenario in the subsequent phase. In particular, the number of mask-elimination messages required from each surviving user is adjusted according to the realized number of dropouts $S$. Together, these features ensure that the proposed scheme can dynamically adapt to the actual dropout scenario encountered during protocol execution.
    
    \item \textbf{Security:} The collusion between the server and any subset of up to $T$ users must not learn any additional information about the gradients of all users, beyond what is inherently revealed by the desired sum and the local gradients of the $T$ colluding users. Formally, for any subset of colluding users $\mathcal{T}\subseteq[N]$ with $|\mathcal{T}|\leq T$, the following security constraint must hold for all feasible dropout scenarios satisfying $\mathcal{U}\subseteq[N]$ with size $0\leq U\leq N-1$ and $\mathcal{S}\subseteq[N]$ with size $0\leq S\leq S_{max}$:
    \begin{equation}
        I(\{{W}_i\}_{i \in [N]} ; \mathsf{View}_{\mathcal{T}}^{\mathcal{U}} | W_{sum}^{\mathcal{U}}, \{W_i\}_{i\in\mathcal{T}} )=0,
        \label{eq:security}
    \end{equation}
    where $\mathsf{View}_{\mathcal{T}}^{\mathcal{U}}=\left\{\{\mathcal{Z}_i\}_{i\in\mathcal{T}},\{\mathcal{Z}_{i,j},\mathcal{Z}_{j,i}\}_{j\in[N]\backslash\{i\},i\in\mathcal{T}}, \{X_i\}_{i\in[N]}, \{\mathcal{Y}_{i,m}^{\mathcal{U}}\}_{i\in[N],m\in[M]}\right\}$ 
    represents the joint view of the server and the colluding users in $\mathcal{T}$, consisting of all information that may be available to them in the worst case during protocol execution.\footnote{In practical communication systems, a user may be regarded as a dropout because its transmitted data are not received in time, while the same data may still arrive at the server after some delay. To preclude any potential privacy leakage arising from such delayed transmissions, we adopt a stronger security requirement in which the server is assumed to have access to all masked gradients and all mask-elimination messages that may be generated by the users, regardless of whether they are received in time or used for the current aggregation.}   
    \end{itemize}


The performance of the secure aggregation scheme is measured by the following four normalized metrics:\footnote{To simplify the rate definitions, users in the dropout set of each phase are treated as contributing no usable information to the current aggregation, while all surviving users provide the same number of messages.}
\begin{enumerate}
\item[1.] The key rate $R_Z$, defined as the number of key symbols required per desired aggregation symbol to ensure information-theoretic security, is given by
   \begin{IEEEeqnarray}{c}
     R_Z \triangleq\frac{\sum_{i\in[N]}H(\mathcal{Z}_i)}{H(W_{sum}^{\mathcal{U}})}.\label{definition:RZ}
    \end{IEEEeqnarray}
  \item[2.] The key-distribution communication rate $R_K$, defined as the number of  transmitted key symbols per desired aggregation symbol during the key-distribution phase, is given by
    \begin{IEEEeqnarray}{c}  
    R_K\triangleq\frac{\sum_{i\in[N]}\sum_{j\in[N]\backslash\{i\}}H(\mathcal{Z}_{i,j})}{H(W_{sum}^{\mathcal{U}})}. \label{definition:RK}
    \end{IEEEeqnarray}
  \item[3.] The masked-gradient communication rate $R_{\mathcal{U}}$, defined as the number of  transmitted symbols per desired aggregation symbol under the dropout set $\mathcal{U}$ during the masked-gradient upload phase, is given by
    \begin{IEEEeqnarray}{c}  
    R_{\mathcal{U}}=\frac{\sum\limits_{i\in [N]\backslash \mathcal{U}}H(X_i)}{H(W_{sum}^{\mathcal{U}})},\quad \mathcal{U}\subseteq[N], 0\leq |\mathcal{U}|\leq N-1. \label{performance:RU} \label{definition:RKRA}
    \end{IEEEeqnarray}
  \item[4.] The mask-elimination communication rate $R_S^{\mathcal{U}}$ for any given dropout set $\mathcal{U}\subseteq[N]$ with $0\leq|\mathcal{U}|\leq N-1$ in the masked-gradient upload phase, defined as the worst-case number of transmitted symbols per desired aggregation symbol during the mask-elimination phase with $S$ dropout users, is given by
  \begin{IEEEeqnarray}{c}
R_S^{\mathcal{U}}=\frac{\max\limits_{\mathcal{S}\subseteq[N],|\mathcal{S}|=S}\sum\limits_{i\in[N]\backslash\mathcal{S}}\sum\limits_{m\in[M_S]}H(\mathcal{Y}_{i,m}^{\mathcal{U}})}{H(W_{sum}^{\mathcal{U}})},\quad 0\leq S\leq S_{max}. \label{performance:RS}
  \end{IEEEeqnarray}
\end{enumerate}

In Section \ref{App:Infeasibility}, we show that under the correctness and security constraints in \eqref{eq:correctness} and \eqref{eq:security}, the maximum number of dropouts that can be tolerated during the mask-elimination phase is $N-T-1$ for every feasible security parameter $T$, regardless of the dropout set $\mathcal{U}$ during the masked-gradient upload phase. Therefore, throughout this paper, we focus on the nontrivial regime $0\leq S\leq N-T-1$ and set $S_{max}=N-T-1$, so that the secure aggregation scheme is robust against all feasible dropout scenarios.

\begin{Definition}[Achievable Schemes and Rate Tuples for Secure Aggregation with User Dropouts]
\label{def:achi}
A secure aggregation scheme is said to be \emph{achievable} if it satisfies the correctness constraint \eqref{eq:correctness} and the security constraint \eqref{eq:security} for all feasible dropout scenarios specified by $\mathcal{U}\subseteq[N]$ with $0\leq U\leq N-1$ and $\mathcal{S}\subseteq[N]$ with $0\leq S\leq N-T-1$. 
Furthermore, for any given dropout set $\mathcal{U}\subseteq[N]$ of size $0\leq U\leq N-1$ and any given dropout parameter $0\leq S\leq N-T-1$, the rate tuple $(R_Z,R_K,R_{\mathcal{U}},R_S^{\mathcal{U}})$ is said to be \emph{achievable} if there exists an adaptive secure aggregation scheme whose key rate and communication rates do not exceed the prescribed values $R_Z$, $R_K$, $R_{\mathcal{U}}$, and $R_S^{\mathcal{U}}$, respectively, under the dropout scenario in which the dropout set during the masked-gradient upload phase is $\mathcal{U}$ and the number of dropout users during the mask-elimination phase is $S$.
\end{Definition}

Let $R_Z^*,R_K^*,R_{\mathcal{U}}^*$, and $R_S^{\mathcal{U}*}$ denote the optimal key rate, key-distribution communication rate, masked-gradient communication rate, and mask-elimination communication rate, respectively, each defined as the infimum of the corresponding rate over all achievable rate tuples.
The objective of this paper is to characterize the fundamental limits of these optimal rates and to develop adaptive secure aggregation schemes that achieve these limits as closely as possible across all feasible dropout scenarios.




\section{Main Results}\label{Main:Result}

The main results of this paper are summarized in the following three theorems. Theorem~\ref{Theorem:1} establishes an achievable rate region for the secure aggregation problem with user dropouts. Theorem~\ref{Theorem:2} provides information-theoretic lower bounds on the optimal rates. Finally, Theorem~\ref{Theorem:3} characterizes the multiplicative gap between the achievable rates and the information-theoretic optima. 
Throughout this paper, we focus on the nontrivial parameter regime $N>1$ and $0\leq T<N-1$.\footnote{
When $T=N$, the server colludes with all users and therefore has direct access to every local gradient, rendering the security constraint vacuous. 
Moreover, when $T=N-1$, the maximum number of tolerated dropouts in the mask-elimination phase satisfies $S_{\max}=N-T-1=0$, which requires all users to participate in this phase. 
In this case, the server need not rely on the messages transmitted during the masked-gradient upload phase to decode $W_{sum}^{\mathcal{U}}$, since it can recover the desired sum $W_1+W_2+\cdots+W_N$ solely from the messages transmitted by all $N$ users during the mask-elimination phase, as in secure aggregation without user dropouts \cite{Yi2026Capacity,zhao2023secure}.
Consequently, the server colluding with any $T=N-1$ users can recover the remaining user's gradient from the aggregation result and the gradients of the colluding users. Therefore, no individual gradient remains protected, rendering the privacy constraint vacuous.}

\begin{Theorem}\label{Theorem:1}
For the secure aggregation problem with user dropouts and any system parameters $N>1$ and $0\leq T<N-1$, there exists a secure aggregation scheme that achieves the rate tuple
\begin{IEEEeqnarray}{c}
\left(R_Z,R_K,R_{\mathcal{U}},R_S^{\mathcal{U}}\right)=\left(N(T+1), N(N-1), N-U, \frac{N-S}{N-T-S} \right) \notag
\end{IEEEeqnarray}
for every feasible dropout scenario encountered during protocol execution, i.e.,
for every dropout set $\mathcal{U}\subseteq[N]$ of size $0\leq U\leq N-1$ during the masked-gradient upload phase and every feasible  number $0\leq S\leq N-T-1$ of dropouts during the
mask-elimination phase. Furthermore, the following rate region is achievable: 
\begin{IEEEeqnarray}{c}
\mathcal{R}_{\mathcal{U},S}=\left\{(R_Z, R_K, R_{\mathcal{U}}, R_S^{\mathcal{U}})\in\mathbb{R}^{4}:R_Z\geq N(T+1), R_K\geq N(N-1), R_{\mathcal{U}}\geq N-U, R_S^{\mathcal{U}}\geq\frac{N-S}{N-T-S}\right\}. \notag
\end{IEEEeqnarray}
\end{Theorem}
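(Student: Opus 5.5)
The plan is an explicit linear scheme over $\mathbb{F}_q$ that combines one-time-pad masking with Shamir/LightSecAgg-style encoding of the masks. To make the mask-elimination phase adaptive, I will use a nested ordering of the messages, in the spirit of the adaptive gradient coding of Cao \textit{et al.}

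Let $L$ be divisible by $\mathrm{lcm}\{1,2,\ldots,N-T\}$, and split every length-$L$ vector into sub-packets accordingly.

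\textbf{Key distribution.} Each user $i$ draws a uniform mask $Z_i\in\mathbb{F}_q^L$ and $T$ independent uniform noise vectors $N_{i,1},\ldots,N_{i,T}\in\mathbb{F}_q^{L}$. It sets $\mathcal{Z}_i=(Z_i,N_{i,1},\ldots,N_{i,T})$, so $H(\mathcal{Z}_i)=(T+1)L$ and $R_Z=N(T+1)$.

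User $i$ then sends user $j$ the share $\mathcal{Z}_{i,j}=\mathbf{G}_j[Z_i;N_{i,1};\ldots;N_{i,T}]$, where $\mathbf{G}_j\in\mathbb{F}_q^{L\times (T+1)L}$ is a public matrix common to all $i$. This gives $H(\mathcal{Z}_{i,j})=L$ and $R_K=N(N-1)$.

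\textbf{Masked-gradient upload.} Each user sends $X_i=W_i+Z_i$, so $R_{\mathcal{U}}=N-U$.

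\textbf{Mask elimination.} Once $\mathcal{U}$ is announced, each user $j$, including users in $\mathcal{U}$ since no inclusion is imposed, forms the aggregated share
$$s_j^{\mathcal{U}}=\sum_{i\in[N]\backslash\mathcal{U}}\mathcal{Z}_{i,j}=\mathbf{G}_j\Big[\sum_{i\notin\mathcal{U}} Z_i;\sum_{i\notin\mathcal{U}}N_{i,1};\ldots\Big].$$
Every $\mathcal{Y}^{\mathcal{U}}_{j,m}$ is a fixed linear function $\mathbf{A}_{j,m}s_j^{\mathcal{U}}$. The messages are ordered and sized so that the first $M_S$ of them have total length $L/(N-T-S)$ for each $S\in[0:N-T-1]$. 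This gives $R_S^{\mathcal{U}}=(N-S)/(N-T-S)$, and monotonicity of $M_S$ is automatic.

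The server subtracts the recovered $\sum_{i\notin\mathcal{U}}Z_i$ from $\sum_{i\notin\mathcal{U}}X_i$ to obtain $W_{sum}^{\mathcal{U}}$. The rate region then follows trivially by monotonicity, since rates can always be padded upward.

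\textbf{Conditions on the code.} The proof reduces to choosing $\{\mathbf{G}_j\}$ and the nested $\{\mathbf{A}_{j,m}\}$ so that three conditions hold.

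First, \emph{secrecy of shares}: for every $\mathcal{T}$ with $|\mathcal{T}|=T$, the stacked matrix $[\mathbf{G}_j]_{j\in\mathcal{T}}$ restricted to the noise columns is invertible ($TL\times TL$). Then the $T$ shares held by the colluders are independent of $Z_i$.

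Second, \emph{decodability}: for every $\mathcal{S}$ with $|\mathcal{S}|=S$, the $(N-S)\frac{L}{N-T-S}=L+TL\frac{N-S}{N-T-S}\cdot\frac{N-T-S}{N-S}\cdot\ldots$ received combinations $\{\mathbf{A}_{j,m}\mathbf{G}_j\}_{j\notin\mathcal{S},m\le M_S}$ determine the $Z$-component. A natural realization uses packed-Shamir structure. Split the mask into $K=N-T-S$ pieces and read each user's $L/K$ symbols as evaluations at $\alpha_j$ of a polynomial of degree $K+T-1$ whose $K$ "secret" evaluation points carry the pieces of the aggregated $Z$. Then $N-S\ge K+T$ evaluations interpolate it.

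Third, \emph{global security}: conditioned on $W_{sum}^{\mathcal{U}}$ and $\{W_i\}_{i\in\mathcal{T}}$, the full view is independent of $\{W_i\}$. For this I will show that
$$\big(\{X_i\}_{i\in[N]},\{\mathcal{Y}_{j,m}^{\mathcal{U}}\}_{j,m},\text{colluders' keys}\big)$$
depends on the non-colluders' masks only through the individual one-time pads $W_i+Z_i$ and through $\sum_{i\notin\mathcal{U}}Z_i$, with the aggregated noise acting as fresh uniform randomness. This is a counting step: I show that the entropy of the view given all $W$ equals its entropy given only $W_{sum}^{\mathcal{U}}$ and $\{W_i\}_{i\in\mathcal{T}}$. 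It relies on the rank conditions above, since all the $\mathcal{Y}$'s together span only $L$ dimensions of aggregated $Z$ plus at most $TL$-dimensional aggregated noise, the noise being exposed only through $T$-wise-independent combinations.

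\textbf{Main obstacle.} The hard part is building a \emph{single} nested family $\{\mathbf{A}_{j,m}\mathbf{G}_j\}$ that is simultaneously MDS-decodable for every $S\in[0:N-T-1]$, with prefix lengths $L/(N-T-S)$, while the noise-part secrecy still holds for every $T$-subset. I would first try the explicit layered construction of adaptive gradient coding, using successive packed-Shamir layers with Vandermonde evaluation points. If that is awkward, I would fall back on choosing all entries generically over a large field. Each required condition is the nonvanishing of a nonzero determinant polynomial, which can be exhibited by one explicit instance per $S$. A Schwartz--Zippel/union-bound argument over the finitely many $(\mathcal{S},\mathcal{T})$ pairs then shows that a good choice exists for $q$ large enough.
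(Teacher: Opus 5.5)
Your architecture matches the paper's. It uses the one-time pad $X_i=W_i+Z_i$, shares of $L$ symbols linear in $Z_i$ and $TL$ private noise symbols, aggregated shares $s_j^{\mathcal{U}}$ as the source of all mask-elimination messages, nested prefixes of length $\frac{L}{N-T-S}$, and secrecy from invertibility of the noise part at any $T$ points. Your rate bookkeeping is correct; note $(N-S)\frac{L}{N-T-S}=L+\frac{TL}{N-T-S}$, since your displayed count is garbled.

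\textbf{The gap.} The substance of the theorem is a single nested family $\{\mathbf{A}_{j,m}\mathbf{G}_j\}$ that is decodable for every $S\in[0:N-T-1]$ simultaneously. You leave this open, and your fallback cannot close it, because decodability is not a nonvanishing-determinant condition. To extract the $L$ symbols of $\sum_{i\notin\mathcal{U}}Z_i$ from $L+\frac{TL}{N-T-S}$ received combinations, the noise columns of the received matrix must have rank at most $\frac{TL}{N-T-S}$; the noise must be aligned. For generic $\mathbf{G}_j,\mathbf{A}_{j,m}$ that rank is $\min\{L+\frac{TL}{N-T-S},TL\}$, which is strictly larger whenever $T\geq 1$ and $S<N-T-1$. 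For example, with $N=3$, $T=1$, $S=0$, $L=2$, you receive three rows in $\mathbb{F}_q^4$ whose $3\times 2$ noise block would need rank $1$. So each decodability requirement defines a proper subvariety that a generic choice misses. Exhibiting one good instance per $S$ and applying Schwartz--Zippel says nothing about a common point of all these subvarieties.

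Per-$S$ packed-Shamir codes also do not nest. A fresh level-$(N-T-S)$ encoding costs $\frac{L}{N-T-S}$ new symbols per user, but the prefix lengths only allow an increment of $\frac{L}{(N-T-S)(N-T-S+1)}$. Stacking independent encodings would also push the share size and noise above $L$ and $TL$.

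\textbf{The missing idea: staircase reuse, as in the paper.}
\begin{itemize}
\item Arrange $C_i$ into layers $\mathbf{P}^{(i,m)}$, $m\in[N-T]$, of size $(N-T-m+1)\times\lambda_m$, with $\lambda_1=\frac{L}{N-T}$ and $\lambda_m=\frac{L}{(N-T-m+1)(N-T-m+2)}$.
\item Fill layer $m$ with the $(N-T-m+2)$-th rows of layers $1,\ldots,m-1$.
\item Encode each column as a polynomial of degree at most $N-m$ whose $T$ lowest coefficients are fresh noise.
\item Let $\mathcal{Y}_{j,m}^{\mathcal{U}}$ be the layer-$m$ evaluations of the aggregated polynomials at $\alpha_j$.
\end{itemize}
With $S$ dropouts, the server first interpolates layer $S+1$ directly from $N-S$ points, since its degree is at most $N-S-1$. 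Then, for $m=S,\ldots,1$, the last $S-m+1$ rows of layer $m$ (its highest-degree coefficients) are already known from layers $m+1,\ldots,S+1$. This leaves exactly $N-S$ unknown low-order coefficients, again a Vandermonde system.

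The prefix lengths telescope to $\sum_{m\le S+1}\lambda_m=\frac{L}{N-T-S}$. Each share has $\sum_m\lambda_m=L$ symbols and the noise totals $TL$. Your condition on $[\mathbf{G}_j]_{j\in\mathcal{T}}$ then holds column by column via $T\times T$ Vandermonde blocks.

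Your security sketch then goes through as in the paper's appendix, with one correction. The aggregated noise is not fresh randomness: it is determined by the colluders' aggregated shares together with $\sum_{i\notin\mathcal{U}}Z_i$. That is exactly why revealing it leaks nothing.
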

The achievability of Theorem~\ref{Theorem:1} is established in Section~\ref{Achievability:proof} through the construction of an adaptive secure aggregation scheme that achieves the rate tuple $\big(N(T+1), N(N-1), N-U, \frac{N-S}{N-T-S} \big)$ for every feasible dropout scenario satisfying $0\leq U\leq N-1$ and $0\leq S\leq N-T-1$.
We next establish converse bounds on the four rates and then compare them with the achievable rates to characterize the optimality and multiplicative gaps of the secure aggregation scheme established in Theorem~\ref{Theorem:1}. 

\begin{Theorem}\label{Theorem:2}
For the secure aggregation problem with user dropouts and any system parameters $N>1$ and $0\leq T<N-1$, the optimal rate tuple $(R_Z^*,R_K^*,R_{\mathcal{U}}^*,R_S^{\mathcal{U}*})$ satisfies the following lower bounds for every dropout set $\mathcal{U}\subseteq[N]$ of size $0\leq U\leq N-1$ during the masked-gradient upload phase and every feasible  number $0\leq S\leq N-T-1$ of dropouts during the mask-elimination phase:
\begin{IEEEeqnarray}{c}
\left\{
\begin{array}{@{}ll}
R_Z^*\geq\left\{
\begin{array}{@{}ll}
\frac{N(N-1)(N-T)}{2(N-T-1)},&\text{if}\,\,\, T+1< N<2T+1\\
N(T+1),&\text{if}\,\,\, N\geq 2T+1
\end{array}\right. \\
R_K^*\geq\frac{N(N-1)(N-T)}{2(N-T-1)}, \,\,\,\quad \text{for all feasible parameters} \\
(N-1)R_Z^*+(N-2T-1)R_K^*\geq N(N-1)(N-T), \quad \text{if}\,\,\, N\geq 2T+1 \\
R_{\mathcal{U}}^*\geq N-U, \,\,\quad \text{for all feasible parameters} \\
R_S^{\mathcal{U}*}\geq \frac{N-S}{N-S-T},\,\,\quad \text{if}\,\,\, 1\leq S\leq N-T-1\,\, \text{or} \,\, \big(S=0\,\, \text{and} \,\, T<N-U\big)
\end{array}\right.  \notag
\end{IEEEeqnarray} 
\end{Theorem}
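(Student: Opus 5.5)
The plan is to prove each of the five bounds separately with entropy arguments. Each argument combines the correctness constraint \eqref{eq:correctness}, the security constraint \eqref{eq:security}, and the independence/uniformity assumptions \eqref{model:file:inden}--\eqref{model:key:inden}. Throughout, we normalize by $H(W_{sum}^{\mathcal{U}})=L$ and allow $q$ large. We treat the easier online-phase bounds first and then the key-related ones.

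\textbf{Bound on $R_{\mathcal{U}}^*$.} Fix $\mathcal{U}$ and a user $k\in[N]\backslash\mathcal{U}$. Choose a mask-elimination dropout set $\mathcal{S}$ containing $k$, which is possible since $S_{max}\ge 1$ when $T<N-1$, and let the server collude with $\mathcal{T}=\emptyset$. By correctness, $W_{sum}^{\mathcal{U}}$ is recoverable from $\{X_i\}_{i\notin\mathcal{U}}$ and the $\mathcal{Y}$'s of users outside $\mathcal{S}$. None of these $\mathcal{Y}$'s depends on $W_k$, and the $X_i$ with $i\neq k$ do not depend on $W_k$ either. Hence
\[ L\le I(W_k;X_k,\{X_i\}_{i\ne k},\{\mathcal{Y}\}\mid \{W_i\}_{i\ne k})\le H(X_k), \]
where the second inequality uses key independence to bound the contribution of everything except $X_k$. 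Summing over $k\notin\mathcal{U}$ gives $R_{\mathcal{U}}\ge N-U$.

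\textbf{Bound on $R_S^{\mathcal{U}*}$.} Fix $\mathcal{S}$ with $|\mathcal{S}|=S$ and any $k\notin\mathcal{S}$. Pick $\mathcal{T}\subseteq[N]\backslash(\mathcal{S}\cup\{k\})$ with $|\mathcal{T}|=T$, and set $\mathcal{S}'=\mathcal{S}\cup\{k\}$, so that $|\mathcal{S}'|\le N-T-1$ remains feasible. Correctness under $\mathcal{S}$ gives
\[ H(W_{sum}^{\mathcal{U}}\mid X_{\bar{\mathcal{U}}},\mathcal{Y}_{\bar{\mathcal{S}}})=0 . \]
Security conditioned on $\mathcal{T}$, combined with correctness under $\mathcal{S}'$ with $M_{S+1}\ge M_S$, should show that the messages $\{\mathcal{Y}_{k,m}\}_{m\in[M_S]}$ must carry at least $L/(N-S-T)$ symbols. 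The idea is that the users in $[N]\backslash(\mathcal{S}\cup\mathcal{T})$ must jointly resolve a residual mask of entropy $L$ that is independent of the colluders' view, and they are symmetric. The step is: conditioned on $X_{\bar{\mathcal{U}}}$ and $\mathcal{Y}_{\mathcal{T}}$, security makes $\mathcal{Y}_{\bar{\mathcal{S}}\backslash\mathcal{T}}$ necessary to reveal $W_{sum}$. Averaging over the choice of $\mathcal{T}$ inside $[N]\backslash\mathcal{S}$ yields $\sum_{i\notin\mathcal{S}}H(\mathcal{Y}_i)\ge \frac{N-S}{N-S-T}L$.

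For $S=0$, the same argument needs a surviving upload set large enough that the colluders cannot absorb it, i.e.\ $T<N-U$. For $S\ge1$, the dropout of a user outside $\mathcal{U}$ in the second phase supplies this room. Making this residual-mask argument rigorous is where the case condition enters, and I expect it to be delicate.

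\textbf{Key bounds (main obstacle).} For $R_Z$ and $R_K$, consider an upload dropout set $\mathcal{U}=\{j\}$ and a colluding set $\mathcal{T}$. The message $X_i$ must be independent of $W_i$ given the view of any $T$ colluders that includes the keys shared with user $i$. This gives $H(\mathcal{Z}_i\mid\{\mathcal{Z}_{j,i}\}_{j\in\mathcal{T}})\ge L$ and, via chaining over nested colluder sets, $H(\mathcal{Z}_i)\ge (T+1)L$ when $N\ge 2T+1$. Summing over $i$ gives $R_Z\ge N(T+1)$.

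For $R_K$, use the fact that user $i$'s mask must be cancellable through the other users' mask-elimination messages even when user $i$ has dropped out of the second phase. This forces the keys $\mathcal{Z}_{i,j}$ to carry $\frac{N-T}{N-T-1}\cdot\frac{L}{2}$-type amounts. Averaging over pairs yields $\frac{N(N-1)(N-T)}{2(N-T-1)}$. The same averaging applied to $H(\mathcal{Z}_i)$ gives the $R_Z$ bound when $T+1<N<2T+1$.

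The combined inequality is obtained by writing $H(\mathcal{Z}_i)$ as the sum of the shared part and the residual $H(\mathcal{Z}_i\mid \mathcal{Z}_{i,\cdot})$ and weighting these with coefficients $N-1$ and $N-2T-1$. This symmetrized counting over pairs and colluding sets is the step I expect to be hardest. I would first try a Han-type averaging over all $T$-subsets.
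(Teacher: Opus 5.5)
Your bound on $R_{\mathcal{U}}^*$ is correct and is essentially the paper's argument; the paper conditions on $W_{sum}^{\mathcal{U}\cup\{k\}}$ instead of on all $\{W_i\}_{i\neq k}$. The other bounds have genuine gaps.

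\textbf{The mask-elimination bound.} Your skeleton for $R_S^{\mathcal{U}*}$ matches the paper's: condition on the uploads and the colluders' messages, force the remaining $N-S-T$ survivors to carry $L$, then average over $\mathcal{T}\subseteq[N]\backslash\mathcal{S}$. Your intuition about where the case condition enters is also right. The gap is the key step $I(W_{sum}^{\mathcal{U}};\{X_i\}_{i\in[N]},\{\mathcal{Y}^{\mathcal{U}}_{i,m}\}_{i\in\mathcal{T},m\in[M_S]})=0$, which you assert rather than derive. It does not follow from the security constraint directly, because that constraint conditions on $W_{sum}^{\mathcal{U}}$ and so permits the view to reveal $W_{sum}^{\mathcal{U}}$; by correctness, the full view in fact does.

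The missing idea is the paper's Lemma 1:
\begin{itemize}
\item Invoke security for a \emph{different} upload dropout set: $\mathcal{U}'=\mathcal{U}\cup\{k\}$ with $k\in[N]\backslash(\mathcal{U}\cup\mathcal{T})$, or $\mathcal{U}'=\{k\}$ when $|\mathcal{U}|=N-1$.
\item Keep only the part of the view that does not depend on the dropout set, namely all $X_i$ and $\{\mathcal{Z}_i,\mathcal{D}_i\}_{i\in\mathcal{T}}$.
\item Use that $W_{sum}^{\mathcal{U}}$ is independent of $(W_{sum}^{\mathcal{U}'},\{W_i\}_{i\in\mathcal{T}})$.
\item Absorb the colluders' messages $\mathcal{Y}^{\mathcal{U}}_{i,m}$, $i\in\mathcal{T}$, as functions of $\{W_i,\mathcal{Z}_i,\mathcal{D}_i\}_{i\in\mathcal{T}}$.
\end{itemize}
This requires $[N]\backslash\mathcal{T}\nsubseteq\mathcal{U}$ for every $\mathcal{T}\subseteq[N]\backslash\mathcal{S}$. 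So for $S\geq 1$ you must \emph{choose} $\mathcal{S}$ to contain a user outside $\mathcal{U}$ rather than fix it arbitrarily; this is legitimate since $R_S^{\mathcal{U}}$ is a maximum over $\mathcal{S}$. For $S=0$ you need $T<N-U$.

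Drop the auxiliary set $\mathcal{S}'=\mathcal{S}\cup\{k\}$. It is infeasible when $S=N-T-1$. Also, correctness under $\mathcal{S}'$ concerns the first $M_{S+1}\geq M_S$ messages, so it cannot lower-bound the first $M_S$.

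\textbf{The key bounds: your per-user claims are false.} Since $\mathcal{Z}_{j,i}$ is a function of $\mathcal{Z}_j$, your inequality $H(\mathcal{Z}_i\mid\{\mathcal{Z}_{j,i}\}_{j\in\mathcal{T}})\geq L$ reduces to $H(\mathcal{Z}_i)\geq L$. Neither this nor $H(\mathcal{Z}_i)\geq (T+1)L$ holds in general. Take $N=2$, $T=0$: user $1$ generates $C_1,C_2$ and sends both to user $2$, with $X_i=W_i+C_i$ and $\mathcal{Y}^{\mathcal{U}}_{i,1}=\sum_{j\in[2]\backslash\mathcal{U}}C_j$. This scheme is correct and secure, yet $H(\mathcal{Z}_2)=H(\mathcal{Z}_{2,1})=0$, so per-pair bounds on $H(\mathcal{Z}_{i,j})$ fail too. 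A user's mask can be built entirely from received keys. Hence only symmetrized sums can be bounded, and the split at $N=2T+1$ cannot come from chaining at a single user.

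\textbf{The key bounds: the missing lemma.} What you need is the paper's Lemma 4: for $|\mathcal{T}|=T$ and $k\notin\mathcal{T}$,
\[
H(\mathcal{D}_k\mid\{\mathcal{D}_i\}_{i\in\mathcal{T}})\geq (N-T)L,
\]
where $\mathcal{D}_k$ collects all keys sent or received by user $k$. It follows from two facts.
\begin{itemize}
\item[(i)] $H(\{W_i\}_{i\notin\mathcal{T}}\mid\{\mathcal{D}_i\}_{i\in\mathcal{T}},\{X_i\}_{i\notin\mathcal{T}})=(N-T)L$. This comes from security under an upload dropout set whose survivors all lie in $\mathcal{T}$, or from Lemma 1 when $T=0$.
\item[(ii)] This entropy drops to $0$ once $\mathcal{D}_k$ is added. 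This comes from correctness with $\mathcal{U}=[N]\backslash\{i\}$ and $\mathcal{S}=[N]\backslash(\mathcal{T}\cup\{k\})$: only the $T+1$ users in $\mathcal{T}\cup\{k\}$ survive the second phase, yet they must unmask any single $X_i$. The case $i=k$ is handled similarly.
\end{itemize}
Your remark about a user dropping out of the second phase points toward (ii), but not toward this extremal configuration, and not toward (i).

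\textbf{How the key bounds follow from Lemma 4.} Expanding via independence of the $\mathcal{Z}_i$ splits the left side into an outgoing term $H(\{\mathcal{Z}_{k,j}\}_{j\notin\mathcal{T}\cup\{k\}}\mid\{\mathcal{Z}_{k,j}\}_{j\in\mathcal{T}})$ and incoming terms $H(\mathcal{Z}_{i,k}\mid\{\mathcal{Z}_{i,j}\}_{j\in\mathcal{T}})$. Then sum over all $(\mathcal{T},k)$.
\begin{itemize}
\item Dropping the conditioning gives the $R_K^*$ bound. Each $\mathcal{Z}_{i,j}$ appears once as outgoing and once as incoming, which is where the factor $2$ comes from.
\item Writing the conditionals as differences of joint entropies and applying Han's inequality leaves a term with coefficient $\frac{N-2T-1}{N-T-1}$. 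Its sign is what splits the regimes.
\item When $N\geq 2T+1$, this yields $R_Z^*\geq N(T+1)$.
\item When $T+1<N<2T+1$, Han's inequality in the opposite direction yields $R_Z^*\geq\frac{N(N-1)(N-T)}{2(N-T-1)}$.
\item Bounding subset entropies by pairwise sums yields the combined inequality.
\end{itemize}
Your split of $H(\mathcal{Z}_i)$ into a shared part and a residual $H(\mathcal{Z}_i\mid\mathcal{Z}_{i,\cdot})$ plays no role in any of these.
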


The proof of Theorem~\ref{Theorem:2} is provided in Section~\ref{Converse:Proof}. 
Comparing these lower bounds with the achievable rates presented in Theorem \ref{Theorem:1} immediately yields the following characterization of the multiplicative gaps.
\begin{Theorem}\label{Theorem:3}
For the secure aggregation scheme, which achieves the rate tuple $\big(N(T+1), N(N-1), N-U, \frac{N-S}{N-T-S} \big)$ for every dropout set $\mathcal{U}\subseteq[N]$ of size $0\leq U\leq N-1$ during the masked-gradient upload phase and every feasible  number $0\leq S\leq N-T-1$ of dropouts during the
mask-elimination phase, the multiplicative gaps between the four achievable rates and their respective information-theoretic optima are characterized as follows:
\begin{IEEEeqnarray}{c}
\left\{
\begin{array}{@{}ll}
\frac{R_Z}{R_Z^*}<2,\quad &\text{if}\,\,\, T+1< N<2T+1 \\
\frac{R_Z}{R_Z^*}=1,\quad &\text{if}\,\,\, N\geq 2T+1 \\
\frac{R_K}{R_K^*}<2,\quad &\text{if}\,\,\, 0\leq T< N-2\\
\frac{R_K}{R_K^*}=1,\quad &\text{if}\,\,\, T= N-2\\
\frac{R_{\mathcal{U}}}{R_{\mathcal{U}}^*}=1, \quad &\text{for all feasible parameters} \\
\frac{R_S^{\mathcal{U}}}{R_S^{\mathcal{U}*}}=1,\quad &\text{if}\,\,\, 1\leq S\leq N-T-1\,\, \text{or} \,\, \big(S=0\,\, \text{and} \,\, T<N-U\big)
\end{array}\right.  \notag
\end{IEEEeqnarray} 
\end{Theorem}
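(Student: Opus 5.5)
Theorem~\ref{Theorem:3} follows by dividing the achievable rates of Theorem~\ref{Theorem:1} by the converse bounds of Theorem~\ref{Theorem:2}, one rate at a time. No new information-theoretic argument is needed, only elementary inequalities in $N$ and $T$ under the standing assumption $N>1$, $0\le T<N-1$.

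\emph{Masked-gradient and mask-elimination rates.} The achievable value $R_{\mathcal{U}}=N-U$ equals the lower bound $R_{\mathcal{U}}^*\ge N-U$, and the bound holds for all feasible parameters. Hence $R_{\mathcal{U}}/R_{\mathcal{U}}^*=1$. Similarly, $R_S^{\mathcal{U}}=\frac{N-S}{N-T-S}$ coincides with the lower bound whenever $1\le S\le N-T-1$, or $S=0$ and $T<N-U$. This gives ratio one in exactly that regime.

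\emph{Key rate.} If $N\ge 2T+1$, the bound $R_Z^*\ge N(T+1)$ matches the achievable value, so the ratio is one. If $T+1<N<2T+1$, then
\[
\frac{R_Z}{R_Z^*}\le \frac{2(T+1)(N-T-1)}{(N-1)(N-T)},
\]
and we must show the right-hand side is below $2$. This is equivalent to $(T+1)(N-T-1)<(N-1)(N-T)$. I would verify it by writing $(N-1)(N-T)-(T+1)(N-T-1)=(N-T)(N-T-2)+(T+1)$, which is positive since $N-T\ge 2$. The hypothesis $N<2T+1$ is not even needed for this step.

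\emph{Key-distribution rate.} We have
\[
\frac{R_K}{R_K^*}\le \frac{2(N-T-1)}{N-T}.
\]
When $T=N-2$ this equals $2\cdot 1/2=1$, and since the bound is a lower bound on the optimum, the ratio is exactly one. When $T<N-2$, the expression $2\bigl(1-\frac{1}{N-T}\bigr)$ is strictly less than $2$.

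The only step needing care is the key-rate gap in the regime $T+1<N<2T+1$. Even that reduces to the short algebraic identity above, so I expect no real obstacle. The substance lies in Theorems~\ref{Theorem:1} and \ref{Theorem:2}, which I take as given.
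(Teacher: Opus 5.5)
Your proposal is correct and takes the same approach as the paper, which only says that comparing the achievable rates of Theorem~\ref{Theorem:1} with the converse bounds of Theorem~\ref{Theorem:2} ``immediately yields'' the gaps. Your explicit algebra, namely the identity $(N-1)(N-T)-(T+1)(N-T-1)=(N-T)(N-T-2)+(T+1)>0$ and the bound $2(N-T-1)/(N-T)$, fills in what the paper leaves implicit; just state that each ratio is at least one because the optimal rates are infima over achievable schemes and Theorem~\ref{Theorem:1} supplies one, which is what turns the upper bounds of one into equalities.
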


In summary, for each given dropout set realized during the masked-gradient upload phase, the secure aggregation scheme established in Theorem \ref{Theorem:1} achieves the optimal masked-gradient communication rate for all feasible parameters. It also achieves the optimal mask-elimination communication rate whenever $1\leq S\leq N-T-1$ or
$(S=0 \text{ and } T<N-U)$. 
In these regimes, the optimal mask-elimination communication rate is 
$\frac{N-S}{N-T-S}$, which is independent of both the identity and the number of dropout users during the masked-gradient upload phase. 
When $S=0$ and $T\geq N-U$, the
optimality of the mask-elimination communication rate remains open. Moreover, the key rate is optimal whenever $N\geq 2T+1$ and is within a factor of two of  the information-theoretic optimum when $T+1<N<2T+1$. The key-distribution communication rate is within a factor of two of the optimum for all $0\leq T<N-2$ and is optimal when $T=N-2$. 
More specifically, Fig.~\ref{figure:region} provides a graphical illustration of the achievable and converse regions under two representative parameter settings, namely $(N,T)=(6,3)$ and $(N,T)=(10,3)$, corresponding to the regimes $T+1<N<2T+1$ and $N\geq 2T+1$, respectively. Since the achievable and converse bounds coincide for the masked-gradient communication rate and the mask-elimination communication rate in almost all feasible parameters, the figure focuses exclusively on the tradeoff between the key rate $R_Z$ and the key-distribution communication rate $R_K$. The blue region represents the achievable rate region established in Theorem~\ref{Theorem:1}, while the green region corresponds to rate tuples ruled out by the converse bounds in Theorem~\ref{Theorem:2}. The remaining red region represents the unresolved gap between the achievable and converse regions. As characterized in Theorem~\ref{Theorem:3}, this gap is at most a factor of two.

\begin{figure}[htbp]
	\centering
		\subfigure[$(N,T)=(6,3)$]
		{ 
		\includegraphics[width=0.470\textwidth]{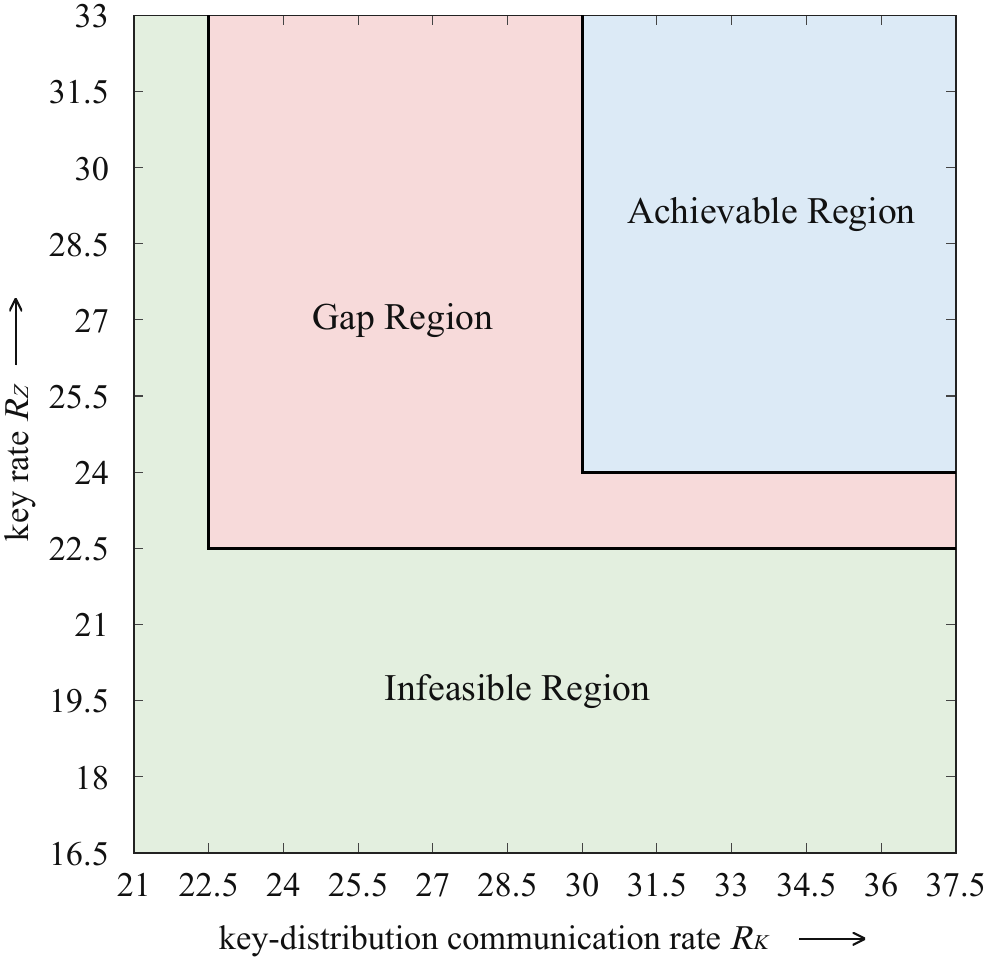}
		}
		\subfigure[$(N,T)=(10,3)$]
		{ 
		\includegraphics[width=0.481\textwidth]{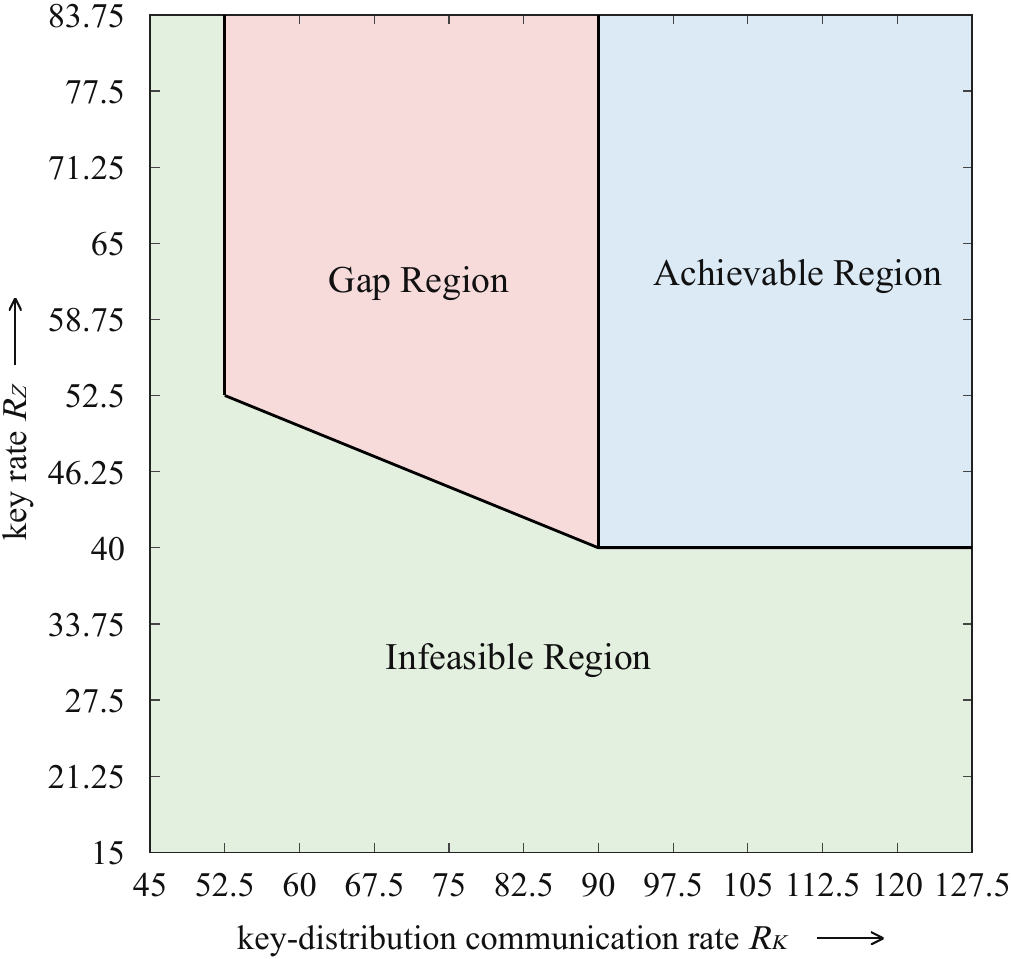}
		}
	\caption{Achievable and converse regions for the key rate $R_Z$ and the key-distribution communication rate $R_K$ under two representative parameter settings: $(N,T)=(6,3)$ and $(N,T)=(10,3)$. The blue region is achievable, whereas the green region is ruled out by the converse bounds. The remaining red region represents a gap of at most a factor of two.}
	\label{figure:region}
\end{figure}

\begin{Remark}
For the $T$-colluding secure aggregation problem without user dropouts (i.e., $U=0$), references~\cite{Yi2026Capacity,zhao2023secure} showed that the desired gradient sum can be recovered by having each user upload only one masked gradient, after which the server sums all received masked gradients to obtain it, without requiring any additional mask-elimination communication. This naturally raises the following question: 
\emph{For the $T$-colluding secure aggregation problem with user dropouts, can one design an adaptive secure aggregation scheme that requires no mask-elimination messages when $U=0$ and invokes the mask-elimination phase only when $U>0$?}
Our converse result answers this question in the negative. Specifically, the information-theoretic lower bounds in Theorem~\ref{Theorem:2} show that $R_{\mathcal{U}}^*\geq N-U$ and $R_S^{\mathcal{U}*}\geq \frac{N-S}{N-T-S}$ for each given dropout set $\mathcal{U}\subseteq[N]$ of size $0\leq U\leq N-1$ during the masked-gradient upload phase. Since $R_S^{\mathcal{U}*}$ remains strictly positive even when $U=0$, the mask-elimination phase is information-theoretically necessary, regardless of whether any dropout actually occurs in the masked-gradient upload phase. 
\end{Remark}


\section{Proof of Theorem~\ref{Theorem:1}: Achievability} \label{Achievability:proof}
In this section, we develop an adaptive secure aggregation scheme that achieves the rate tuple $\big(N(T+1), N(N-1), N-U, \frac{N-S}{N-T-S}\big)$ for all feasible dropout scenarios. The key idea is to establish correlated random keys through a layered polynomial encoding structure that enables the mask-elimination communication to adapt to the actual number of dropouts.
Specifically, each user generates a random mask of length $L$ to protect its local gradient, together with $TL$ additional random symbols to securely distribute the random mask among users while protecting it against up to $T$ colluding users, resulting in a key rate of $N(T+1)$. 
The random mask is then organized into $N-T$ matrices, each representing an encoding layer. In particular, each subsequent layer is constructed by reusing selected symbols from the preceding layers, such that the first $m$ layers collectively enable the recovery of the aggregate mask in the presence of $m-1$ dropouts during the mask-elimination phase, for every $m\in[N-T]$.
The layers are securely encoded into polynomials, whose evaluations are distributed among the users during the key-distribution phase. Each user distributes a total of $L$ encoded symbols to every other user, yielding a key-distribution communication rate of $N(N-1)$.
During the masked-gradient upload phase, each user uploads a masked gradient of length $L$. When $U$ users drop out, the server receives $N-U$ masked gradients, resulting in a communication rate of $N-U$. 
After the server identifies the dropout set $\mathcal{U}$, the users generate an ordered sequence of $N-T$ mask-elimination messages, each constructed from the encoded keys corresponding to one of the $N-T$ encoding layers. Rather than transmitting all these messages according to a prescribed worst-case dropout threshold, each user transmits them sequentially until the server has received sufficient information to recover the aggregate mask. When $S$ users drop out during the mask-elimination phase, the server only needs to collect the first $S+1$ messages from each of the $N-S$ surviving users. The server first recovers the information encoded in the last received layer and then uses the recovered information to successively decode the preceding layers through polynomial interpolation, thereby reconstructing the aggregate mask and recovering the desired gradient sum.
By construction, the first $S+1$ messages contain a total of $\frac{L}{N-T-S}$ symbols per surviving user, yielding a mask-elimination communication rate of $\frac{N-S}{N-T-S}$. Consequently, the established correlated keys can support all feasible dropout scenarios, while the mask-elimination communication dynamically adapts to the actual number of dropouts.

To illustrate the underlying construction, we first present a detailed example.


\subsection{Illustrative Example}
In this example, we consider the secure aggregation problem with system parameters $N=5$ and $T=2$, where the server aims to sum the gradients of all surviving users, while guaranteeing the privacy of each individual gradient against any collusion between the server and up to $T=2$ users.   
To ensure that the proposed scheme can dynamically tolerate all feasible dropouts, we require that $S_{max}=N-T-1=2, M=S_{max}+1=3$, and $L=6$.


\textbf{Key-Distribution Phase:} 
Each user $i\in[5]$ locally generates a random key vector of length $L=6$ over the finite field $\mathbb{F}_q$, given by 
\begin{IEEEeqnarray}{c}
C_i=\left(C_{i,1},C_{i,2},C_{i,3},C_{i,4},C_{i,5},C_{i,6}\right),\quad\forall\,i\in[5].\label{key:example}
\end{IEEEeqnarray}
The key vector $C_i$ can be carefully rearranged into $M=3$ matrices $\mathbf{P}^{(i,1)}$, $\mathbf{P}^{(i,2)}$, and $\mathbf{P}^{(i,3)}$, given by
\begin{IEEEeqnarray}{c}
\mathbf{P}^{(i,1)}=
\left[
\begin{array}{@{\;}cc@{\;}}
C_{i,1} & C_{i,4} \\
C_{i,2} & C_{i,5} \\
C_{i,3} & C_{i,6} \\
\end{array}
\right], \quad
\mathbf{P}^{(i,2)}=
\left[
\begin{array}{@{\;}c@{\;}}
C_{i,3} \\
C_{i,6} \\
\end{array}
\right], \quad
\mathbf{P}^{(i,3)}=
\left[
\begin{array}{@{\;}ccc@{\;}}
C_{i,2} & C_{i,5} & C_{i,6} \\
\end{array}
\right]. \notag
\end{IEEEeqnarray}
Then, for each $m=1,2,3$, the matrix $\mathbf{P}^{(i,m)}$ is secretly shared with all other users to ensure robustness against the presence of $S=m-1$ dropouts during the subsequent mask-elimination phase. More precisely, based on these three matrices, user $i\in[5]$ constructs the following secure encoding polynomials:
\begin{IEEEeqnarray}{rCl}
f_{1}^{(i,1)}(x) & = & D_{1,1}^{(i,1)} + D_{2,1}^{(i,1)}x + C_{i,1}x^2 + C_{i,2}x^3 + C_{i,3}x^4, \label{encoding:poly:1} \IEEEeqnarraynumspace \\
f_{2}^{(i,1)}(x) &  = & D_{1,2}^{(i,1)} + D_{2,2}^{(i,1)}x + C_{i,4}x^2 + C_{i,5}x^3 + C_{i,6}x^4, \label{encoding:poly:2}  \\
f_{1}^{(i,2)}(x) &  = & D_{1,1}^{(i,2)} + D_{2,1}^{(i,2)}x + C_{i,3}x^2 + C_{i,6}x^3, \label{encoding:poly:3}  \\
f_{1}^{(i,3)}(x) &  = & D_{1,1}^{(i,3)} + D_{2,1}^{(i,3)}x + C_{i,2}x^2, \label{encoding:poly:4} \\
f_{2}^{(i,3)}(x) &  = & D_{1,2}^{(i,3)} + D_{2,2}^{(i,3)}x + C_{i,5}x^2,  \label{encoding:poly:5} \\
f_{3}^{(i,3)}(x) &  = & D_{1,3}^{(i,3)} + D_{2,3}^{(i,3)}x + C_{i,6}x^2, \label{encoding:poly:6}
\end{IEEEeqnarray}
where $D_{1,1}^{(i,1)},\ldots, D_{2,3}^{(i,3)}$ are independent random noise symbols locally generated by user $i$, which are used to guarantee the privacy of the key vector $C_i$ against any $T=2$ colluding users. Accordingly, the private random key $\mathcal{Z}_i$ required by  user $i$ is given by
\begin{IEEEeqnarray}{c}
\mathcal{Z}_i=\left\{ C_i, \{D^{(i,1)}_{k,\ell}\}_{k\in[2],\ell\in[2]},\{D^{(i,2)}_{k,1}\}_{k\in[2]},\{D^{(i,3)}_{k,\ell}\}_{k\in[2],\ell\in[3]} \right\}, \quad\forall\, i\in[5]. \notag
\end{IEEEeqnarray}

Let $\alpha_1,\alpha_2,\alpha_3,\alpha_4,\alpha_5$ be distinct elements in $\mathbb{F}_q$.
For any $i,j\in[5]$, the encoded key $\mathcal{Z}_{i,j}$ is obtained by evaluating these secure encoding polynomials at $x=\alpha_j$, i.e.,
\begin{IEEEeqnarray}{c}
\mathcal{Z}_{i,j} = \left\{
f_{1}^{(i,1)}(\alpha_j), f_{2}^{(i,1)}(\alpha_j), f_{1}^{(i,2)}(\alpha_j),
f_{1}^{(i,3)}(\alpha_j), f_{2}^{(i,3)}(\alpha_j), f_{3}^{(i,3)}(\alpha_j)
\right\}. \notag
\end{IEEEeqnarray}
Then, user $i\in [5]$ shares the encoded key $\mathcal{Z}_{i,j}$ with each other user $j\in[5]\backslash\{i\}$.

\textbf{Masked-Gradient Upload Phase:}
Without loss of generality, assume that the local gradient $W_i$ of each user $i\in[5]$ is represented as a row vector. User $i$ locally computes a masked gradient $X_i = W_i + C_i$ using the random key $C_i$ and then uploads it to the server. Suppose that users $3$ and $4$ drop out during this phase, i.e., the dropout set is $\mathcal{U}=\{3,4\}$. Consequently, the server receives only the masked gradients $\{X_1, X_2, X_5\}$ from the surviving users and seeks to recover the desired gradient sum $W_1+W_2+W_5$. To this end, the server computes the sum of the received masked gradients as
\begin{IEEEeqnarray}{c}\notag
X_1+X_2+X_5 = (W_1+W_2+W_5)+(C_1+C_2+C_5). 
\label{eq:aggregation}
\end{IEEEeqnarray}
Therefore, recovering the desired sum $W_1+W_2+W_5$ reduces to eliminating the aggregate mask $C_1+C_2+C_5$.

\textbf{Mask-Elimination Phase:} To eliminate the aggregate mask, the server broadcasts the dropout set $\mathcal{U}=\{3, 4\}$ to all users. Upon receiving $\mathcal{U}$, each user $i\in[5]$ generates and sends the $M=3$ messages $\mathcal{Y}_{i,1}^{\{3,4\}},\mathcal{Y}_{i,2}^{\{3,4\}},\mathcal{Y}_{i,3}^{\{3,4\}}$ in the given order, based on the locally available encoding keys $\{\mathcal{Z}_{j,i}\}_{j \in \{1, 2, 5\}}$, given by
\begin{IEEEeqnarray}{rCl}
\mathcal{Y}_{i,1}^{\{3,4\}} & = & \left\{
    \begin{aligned}
        & f_1^{(1,1)}(\alpha_i) + f_1^{(2,1)}(\alpha_i) + f_1^{(5,1)}(\alpha_i) \\
        & f_2^{(1,1)}(\alpha_i) + f_2^{(2,1)}(\alpha_i) + f_2^{(5,1)}(\alpha_i)
    \end{aligned}
\right\}, \notag \\ 
\mathcal{Y}_{i,2}^{\{3,4\}} & = & \left\{
    f_1^{(1,2)}(\alpha_i) + f_1^{(2,2)}(\alpha_i) + f_1^{(5,2)}(\alpha_i)
\right\}, \notag \\ 
\mathcal{Y}_{i,3}^{\{3,4\}} & = & \left\{
    \begin{aligned}
        & f_1^{(1,3)}(\alpha_i) + f_1^{(2,3)}(\alpha_i) + f_1^{(5,3)}(\alpha_i) \\
        & f_2^{(1,3)}(\alpha_i) + f_2^{(2,3)}(\alpha_i) + f_2^{(5,3)}(\alpha_i) \\
        & f_3^{(1,3)}(\alpha_i) + f_3^{(2,3)}(\alpha_i) + f_3^{(5,3)}(\alpha_i)
    \end{aligned}
\right\}. \notag
\end{IEEEeqnarray}

The proposed secure aggregation scheme can dynamically tolerate the presence of $S=0,1,2$ dropouts during the mask-elimination phase. 
Specifically, when exactly $S$ users drop out for all $S=0,1,2$, the server can recover the desired gradient sum by collecting the first $M_\mathcal{S}=S+1$ messages from each user in any set of $5-S$ surviving users. The corresponding decoding procedures are described below.

\emph{In the presence of $S=0$ dropouts}, the server waits for the first message from each of users $1,2,3,4$, and $5$. It suffices to show that the aggregate mask $C_1+C_2+C_5$ can be decoded from the following received messages:
\begin{IEEEeqnarray}{c}
\left\{\mathcal{Y}_{i,1}^{\{3,4\}}\right\}_{i\in[5]} = \left\{
    \begin{aligned}
        & f_1^{(1,1)}(\alpha_i) + f_1^{(2,1)}(\alpha_i) + f_1^{(5,1)}(\alpha_i) \\
        & f_2^{(1,1)}(\alpha_i) + f_2^{(2,1)}(\alpha_i) + f_2^{(5,1)}(\alpha_i)
    \end{aligned}
\right\}_{i\in[5]}. \notag
\end{IEEEeqnarray}

Since the polynomial $f_{1}^{(1,1)}(x)+f_{1}^{(2,1)}(x)+f_{1}^{(5,1)}(x)$ has degree $4$ by \eqref{encoding:poly:1}, the server can interpolate this polynomial from the received data $\{f_{1}^{(1,1)}(\alpha_i)+f_{1}^{(2,1)}(\alpha_i)+f_{1}^{(5,1)}(\alpha_i)\}_{i\in[5]}$. The coefficients of the interpolated polynomial then yield 
the sums $\{C_{1,1}+C_{2,1}+C_{5,1},C_{1,2}+C_{2,2}+C_{5,2},C_{1,3}+C_{2,3}+C_{5,3}\}$. Similarly, by \eqref{encoding:poly:2}, the server can recover the sums $\{C_{1,4}+C_{2,4}+C_{5,4},C_{1,5}+C_{2,5}+C_{5,5},C_{1,6}+C_{2,6}+C_{5,6}\}$ from the received data $\{f_{2}^{(1,1)}(\alpha_i)+f_{2}^{(2,1)}(\alpha_i)+f_{2}^{(5,1)}(\alpha_i)\}_{i\in[5]}$. Therefore, the aggregate mask $C_1+C_2+C_5$ can be recovered according to \eqref{key:example}.

\emph{In the presence of $S=1$ dropouts}, without loss of generality, assume that user $5$ drops out. The server waits for the first two messages from each of users $1,2,3$, and $4$.
It remains to show that the aggregate mask $C_1+C_2+C_5$ can be recovered from the following received messages:
\begin{IEEEeqnarray}{c}
\left\{
    \begin{aligned}
        \mathcal{Y}_{i,1}^{\{3,4\}} \\
        \mathcal{Y}_{i,2}^{\{3,4\}}
    \end{aligned}
\right\}_{i\in[4]}
= \left\{
    \begin{aligned}
        & f_1^{(1,1)}(\alpha_i) + f_1^{(2,1)}(\alpha_i) + f_1^{(5,1)}(\alpha_i) \\
        & f_2^{(1,1)}(\alpha_i) + f_2^{(2,1)}(\alpha_i) + f_2^{(5,1)}(\alpha_i) \\
        & f_1^{(1,2)}(\alpha_i) + f_1^{(2,2)}(\alpha_i) + f_1^{(5,2)}(\alpha_i)
    \end{aligned}
\right\}_{i\in[4]}. \notag
\end{IEEEeqnarray}
From \eqref{encoding:poly:3}, the polynomial $f_{1}^{(1,2)}(x)+f_{1}^{(2,2)}(x)+f_{1}^{(5,2)}(x)$ of degree $3$ can be interpolated from the received data $\{f_{1}^{(1,2)}(\alpha_i)+f_{1}^{(2,2)}(\alpha_i)+f_{1}^{(5,2)}(\alpha_i)\}_{i\in[4]}$, thereby allowing the server to recover the sums $C_{1,3}+C_{2,3}+C_{5,3}$ and $C_{1,6}+C_{2,6}+C_{5,6}$. Furthermore, the polynomial $f_{1}^{(1,1)}(x)+f_{1}^{(2,1)}(x)+f_{1}^{(5,1)}(x)$ (resp. $f_{2}^{(1,1)}(x)+f_{2}^{(2,1)}(x)+f_{2}^{(5,1)}(x)$) of degree $4$ can be interpolated from the received data $\{f_{1}^{(1,1)}(\alpha_i)+f_{1}^{(2,1)}(\alpha_i)+f_{1}^{(5,1)}(\alpha_i)\}_{i\in[4]}$ (resp. $\{f_{2}^{(1,1)}(\alpha_i)+f_{2}^{(2,1)}(\alpha_i)+f_{2}^{(5,1)}(\alpha_i)\}_{i\in[4]}$) along with the previously recovered sums $C_{1,3}+C_{2,3}+C_{5,3}$ (resp. $C_{1,6}+C_{2,6}+C_{5,6}$). Therefore, the aggregate mask  $C_1+C_2+C_5$ can be recovered by \eqref{encoding:poly:1}-\eqref{encoding:poly:2}.

\emph{In the presence of $S=2$ dropouts}, without loss of generality, assume that users $4$ and $5$ drop out. The server waits for the first three message from each of users $1,2$, and $3$.
It remains to show that the aggregate mask $C_1+C_2+C_5$ can be decoded from the following received messages:
\begin{IEEEeqnarray}{c}
\left\{
    \begin{aligned}
        \mathcal{Y}_{i,1}^{\{3,4\}} \\
        \mathcal{Y}_{i,2}^{\{3,4\}} \\
        \mathcal{Y}_{i,3}^{\{3,4\}}
    \end{aligned}
\right\}_{i\in[3]}
= \left\{
    \begin{aligned}
        & f_1^{(1,1)}(\alpha_i) + f_1^{(2,1)}(\alpha_i) + f_1^{(5,1)}(\alpha_i) \\
        & f_2^{(1,1)}(\alpha_i) + f_2^{(2,1)}(\alpha_i) + f_2^{(5,1)}(\alpha_i) \\
        & f_1^{(1,2)}(\alpha_i) + f_1^{(2,2)}(\alpha_i) + f_1^{(5,2)}(\alpha_i) \\
        & f_1^{(1,3)}(\alpha_i) + f_1^{(2,3)}(\alpha_i) + f_1^{(5,3)}(\alpha_i) \\
        & f_2^{(1,3)}(\alpha_i) + f_2^{(2,3)}(\alpha_i) + f_2^{(5,3)}(\alpha_i) \\
        & f_3^{(1,3)}(\alpha_i) + f_3^{(2,3)}(\alpha_i) + f_3^{(5,3)}(\alpha_i)
    \end{aligned}
\right\}_{i\in[3]}. \notag
\end{IEEEeqnarray}
The decoding procedure is similar to the above cases and is based on Lagrange polynomial interpolation. It proceeds in three steps: 1) By \eqref{encoding:poly:4}--\eqref{encoding:poly:6}, the sums $C_{1,2}+C_{2,2}+C_{5,2}, C_{1,5}+C_{2,5}+C_{5,5}$, and $C_{1,6}+C_{2,6}+C_{5,6}$ are recovered from the received messages. 
2) By \eqref{encoding:poly:3}, the sum $C_{1,3}+C_{2,3}+C_{5,3}$ is recovered using the received messages along with the previously recovered sum $C_{1,6}+C_{2,6}+C_{5,6}$. 
3) By \eqref{encoding:poly:1}--\eqref{encoding:poly:2}, the polynomials $f_{1}^{(1,1)}(x)+f_{1}^{(2,1)}(x)+f_{1}^{(5,1)}(x)$ and $f_{2}^{(1,1)}(x)+f_{2}^{(2,1)}(x)+f_{2}^{(5,1)}(x)$ are interpolated from the received messages along with the currently available sums $\{C_{1,r}+C_{2,r}+C_{5,r}\}_{r\in\{2,3,5,6\}}$, thereby recovering the aggregate mask $C_1+C_2+C_5$.

In general, during the mask-elimination phase, each user successively computes and transmits up to $M=3$ messages to the server. For any realized dropout set $\mathcal{S}\subseteq[5]$ with size $S\in\{0,1,2\}$, the server can recover the desired gradient sum once every surviving user in $[5]\backslash\mathcal{S}$ has successfully uploaded its first $M_S=S+1$ messages. Therefore, in the mask-elimination phase, the proposed scheme dynamically adapts the number of messages transmitted by each surviving user to the realized dropout scenario.

It is straightforward to verify that the proposed secure aggregation scheme achieves a key rate of $15$, a key-distribution communication rate of $20$, a masked-gradient communication rate of $3$ in the presence of $U=2$ dropouts, and a mask-elimination  communication rate of $\frac{5-S}{3-S}$ in the present of $S$ dropouts for all $S=0,1,2$.

\subsection{General Construction}
In this subsection, we present the general construction of the proposed secure aggregation scheme, which can recover the desired gradient sum $W_{sum}^{\mathcal{U}}$ for all feasible dropout scenarios satisfying $\mathcal{U}\subseteq[N]$ with size $0\leq U\leq N-1$ and $\mathcal{S}\subseteq[N]$ with size $0\leq S\leq N-T-1$. The proposed construction builds upon the principles of Shamir's secret sharing \cite{shamir1979share,mceliece1981sharing} and Staircase secret sharing \cite{bitar2017staircase,Bitar2016ISIT}.

To ensure that the server can adaptively tolerate the presence of $S$ dropouts for all feasible $0\leq S\leq N-T-1$ during the mask-elimination phase, we set $M\!=\!S_{max}\!+\!1\!=\!N\!-\!T$ and choose the length $L$ of the gradient vector as
\begin{IEEEeqnarray}{rCl}
L&=&\mathrm{lcm}\{N-T,N-T-1,\ldots,1\}, \notag
\end{IEEEeqnarray}
where $\mathrm{lcm}(\cdot)$ denotes the least common multiple. 
Let $\alpha_{1},\alpha_{2},\ldots,\alpha_{N}$ be $N$ distinct elements in $\mathbb{F}_q$ that are publicly available to all users and the server.

\textbf{Key-Distribution Phase:} In this phase, each user $i\in[N]$ first locally generates a random key vector $C_i$ of length $L$ over $\mathbb{F}_q$, and then rearranges $C_i$ into $M=N-T$ matrices, denoted by $\mathbf{P}^{(i,1)},\mathbf{P}^{(i,2)},\ldots,\mathbf{P}^{(i,M)}$. For any given $m\in[M]$, the matrix $\mathbf{P}^{(i,m)}$ has dimensions $(N-T-m+1)\times\lambda_m$ and is introduced to provide robustness against the presence of $S=m-1$ dropouts during the mask-elimination phase, where 
\begin{IEEEeqnarray}{c}
\lambda_m = 
\begin{cases}
\dfrac{L}{N-T} &\text{if\,\,} m=1 \\
\dfrac{L}{(N-T-m+1)(N-T-m+2)}  &\text{if\,\,} 2 \leq m \leq M
\end{cases}. \IEEEeqnarraynumspace \label{parameter:lambda}
\end{IEEEeqnarray}
More specifically, these matrices are designed in an iterative manner as follows: 
\begin{itemize}
\item In the initial case, the matrix $\mathbf{P}^{(i,1)}$ with dimensions $(N-T)\times\frac{L}{N-T}$ is constructed by distributing the $L$ elements in $C_i$ across all entries of $\mathbf{P}^{(i,1)}$, given by
\begin{IEEEeqnarray}{c}\label{intical:case:P}
P^{(i,1)}_{k,\ell}=C_{i,(\ell-1)(N-T)+k},\quad\forall\,k\in[N-T],\ell\in\Big[\frac{L}{N-T}\Big]. \IEEEeqnarraynumspace
\end{IEEEeqnarray}
Here, $P^{(i,m)}_{k,\ell}$ denotes the entry in the $k$-th row and $\ell$-th column of the matrix $\mathbf{P}^{(i,m)}$ for any $i\in[N], m\in[M], k\in[N-T-m+1]$, and $\ell\in[\lambda_m]$, and $C_{i,\ell}$ denotes the $\ell$-th element of the vector $C_i$ for any $\ell\in[L]$.
\item 
    In the iterative phase, for any given $m\in[2:M]$, the matrix $\mathbf{P}^{(i,m)}$ with dimensions $(N-T-m+1)\times\lambda_m$ is constructed by distributing the elements from the $(N-T-m+2)$-th row of the previously constructed matrices $\mathbf{P}^{(i,1)},\mathbf{P}^{(i,2)},\ldots,\mathbf{P}^{(i,m-1)}$ across all entries of $\mathbf{P}^{(i,m)}$. This is feasible because the total number of elements in the $(N-T-m+2)$-th row of the matrices $\mathbf{P}^{(i,1)},\mathbf{P}^{(i,2)},\ldots,\mathbf{P}^{(i,m-1)}$ exactly matches the number of entries in $\mathbf{P}^{(i,m)}$, i.e., $\sum_{r\in[m-1]}\lambda_r=\frac{L}{N-T-m+2}=(N-T-m+1)\lambda_m$ by \eqref{parameter:lambda}. 
    Therefore, we can set the entry $P^{(i,m)}_{k,\ell}$ of the matrix $\mathbf{P}^{(i,m)}$ as
    \begin{IEEEeqnarray}{c}\label{matrix:P}  
        P^{(i,m)}_{k,\ell}=P^{(i,m')}_{(N-T-m+2),\ell'}, \;\forall\, k\in[N-T-m+1],\ell\in[\lambda_m], \IEEEeqnarraynumspace
    \end{IEEEeqnarray}
    where for any $k\in[N-T-m+1]$ and $\ell\in[\lambda_m]$, $(m',\ell')$ is a unique pair such that $(\ell-1)(N-T-m+1)+k=\sum_{r\in[m'-1]}\lambda_{r}+\ell'$ for some $m'\in[m-1]$ and $\ell'\in[\lambda_{m'}]$. 
\end{itemize}
We complete the construction of  $\mathbf{P}^{(i,1)},\mathbf{P}^{(i,2)},\ldots,\mathbf{P}^{(i,M)}$ by executing the iterative procedure for $m=2,3,\ldots,M$.

Then, for each $m\in[M]$ and $\ell\in[\lambda_m]$, we construct a secure encoding polynomial $f^{(i,m)}_{\ell}(x)$ of degree at most $N-m$ to encode the $N-T-m+1$ entries in the $\ell$-th column of matrix $\mathbf{P}^{(i,m)}$, given by
\begin{IEEEeqnarray}{c}\label{encoding:polynomial}
f^{(i,m)}_{\ell}(x)=\sum\limits_{k\in[T]} D^{(i,m)}_{k,\ell} x^{k-1}+\sum\limits_{k\in[N-T-m+1]} P^{(i,m)}_{k,\ell}x^{T+k-1}, \IEEEeqnarraynumspace
\end{IEEEeqnarray}
where $D^{(i,m)}_{1,\ell},\ldots,D^{(i,m)}_{T,\ell}$ are $T$ independent random symbols generated locally by user $i$ over $\mathbb{F}_q$, which serve as privacy-preserving noise against any set of up to $T$ colluding users. Overall, the collection of all private random symbols generated by user $i$ is given by
\begin{IEEEeqnarray}{c}\label{scheme;random:key}
\mathcal{Z}_i=\left\{ C_i, \{D^{(i,m)}_{k,\ell}\}_{m\in[M],k\in[T],\ell\in[\lambda_m]}  \right\},\quad\forall\,i\in[N]. \IEEEeqnarraynumspace
\end{IEEEeqnarray}
For any given $i,j\in[N]$, the encoded key $\mathcal{Z}_{i,j}$ is constructed by evaluating the secure encoding polynomial $f^{(i,m)}_{\ell}(x)$ at $x=\alpha_j$ for all $m\in[M]$ and $\ell\in[\lambda_m]$, i.e., 
\begin{IEEEeqnarray}{c}\label{encoded:key}
\mathcal{Z}_{i,j}=\left\{f^{(i,m)}_{\ell}(\alpha_j):m\in[M],\ell\in[\lambda_m]\right\}.
\end{IEEEeqnarray}  
User $i\in[N]$ then shares the encoded key $\mathcal{Z}_{i,j}$ with user $j\in[N]\backslash\{i\}$. This completes the key-distribution phase.

\textbf{Masked-Gradient Upload Phase:} Each user $i\in[N]$ computes the masked gradient $X_i$ by masking its local gradient $W_i$ with the random key $C_i$, i.e.,
\begin{IEEEeqnarray}{c}\label{uplaod:X}
X_i = W_i + C_i,\quad\forall\,i\in[N].
\end{IEEEeqnarray}
User $i$ then uploads $X_i$ to the server. After collecting the masked gradients, the server identifies the set of dropout users $\mathcal{U}$ and computes
\begin{IEEEeqnarray}{c}
\sum_{i\in[N]\backslash\mathcal{U}}X_i=\sum_{i\in[N]\backslash\mathcal{U}}W_i +  \sum_{i\in[N]\backslash\mathcal{U}}C_i, \label{aggregate:mask}
\end{IEEEeqnarray}
where the term $\sum_{i\in[N]\backslash\mathcal{U}}C_i$ corresponds to the aggregate mask. Therefore, the server can recover the desired gradient sum $\sum_{i\in[N]\backslash\mathcal{U}}W_i$ by removing the aggregate mask $\sum_{i\in[N]\backslash\mathcal{U}}C_i$.

\textbf{Mask-Elimination Phase:} 
In order to eliminate the aggregate mask $\sum_{i\in[N]\backslash\mathcal{U}}C_i$, the server broadcasts the dropout set $\mathcal{U}$ to all users. Then, for any $i\in[N]$ and $m\in[M]$,  user $i$ can generate the message $\mathcal{Y}_{i,m}^{\mathcal{U}}$ using the locally available encoded keys $\{\mathcal{Z}_{j,i}\}_{j\in[N]\backslash \mathcal{U}}$ in \eqref{encoded:key}, given by
\begin{IEEEeqnarray}{c}\label{encoding:message}
\mathcal{Y}_{i,m}^{\mathcal{U}}=\left\{f^{(\mathcal{U},m)}_{\ell}(\alpha_i):\ell\in[\lambda_m]\right\},   
\end{IEEEeqnarray}
where for any $\ell\in[\lambda_m]$,
\begin{IEEEeqnarray}{c}
f^{(\mathcal{U},m)}_{\ell}(\alpha_i)=\sum\limits_{j\in[N]\backslash\mathcal{U}} f^{(j,m)}_{\ell}(\alpha_i), \notag
\end{IEEEeqnarray}
which corresponds to the evaluation of the following polynomial $f^{(\mathcal{U},m)}_{\ell}(x)$ of degree at most $N-m$ at $x=\alpha_i$.
\begin{IEEEeqnarray}{l}\label{encoding:polynomial:sum}
f^{(\mathcal{U},m)}_{\ell}(x)=\sum\limits_{k\in[T]} \Big(\sum\limits_{j\in[N]\backslash\mathcal{U}}D^{(j,m)}_{k,\ell}\Big) x^{k-1} +\sum\limits_{k\in[N-T-m+1]} \Big(\sum\limits_{j\in[N]\backslash\mathcal{U}}P^{(j,m)}_{k,\ell}\Big)x^{T+k-1}. \IEEEeqnarraynumspace
\end{IEEEeqnarray}
Notably, the messages $\mathcal{Y}_{i,1}^{\mathcal{U}},\mathcal{Y}_{i,2}^{\mathcal{U}},\ldots,\mathcal{Y}_{i,M}^{\mathcal{U}}$ form an ordered transmission sequence. User $i$ transmits these messages to the server sequentially in the prescribed order until it receives a stop signal from the server. 

Let $M_S=S+1$, i.e., in the presence of $S$ dropouts for any given $0\leq S\leq N-T-1$, the server waits to receive the first $S+1$ messages from each of the surviving users in $[N]\backslash\mathcal{S}$ for any dropout set $\mathcal{S}\subseteq[N]$ with size $S$.
Since $S$ is unknown in advance, the stopping point depends on the realized dropout scenario. For any $0\leq S\leq N-T-1$, once the required $M_S=S+1$ messages have been received from all surviving users under the corresponding dropout scenario, the server broadcasts a stop signal to terminate further message transmission.
In total, for any dropout set $\mathcal{S}\subseteq[N]$ with size $0\leq S\leq N-T-1$, the server collects the following messages:
\begin{IEEEeqnarray}{c}
\left\{\mathcal{Y}_{i,m}^{\mathcal{U}}:i\in[N]\backslash\mathcal{S},m\in[S+1]\right\},\quad\forall\,\mathcal{S}\subseteq[N],|\mathcal{S}|=S. \label{upload:mask:message}
\end{IEEEeqnarray}
By \eqref{encoding:message}, this collection can be equivalently expressed as
\begin{IEEEeqnarray}{c}\label{received:messages}
\left\{f^{(\mathcal{U},m)}_{\ell}(\alpha_i):m\in[S+1],\ell\in[\lambda_m],i\in[N]\backslash\mathcal{S}\right\}. \IEEEeqnarraynumspace
\end{IEEEeqnarray}

Next, we prove that for any given $0\leq S\leq N-T-1$ and any dropout set $\mathcal{S}\subseteq[N]$ with size $S$, the server can decode the aggregate mask $\sum_{j\in[N]\backslash\mathcal{U}}C_j$ from the received messages in \eqref{received:messages}.
Similar to the key-distribution phase, the decoding is carried out in an iterative manner below:
\begin{itemize}
\item In the initial case, we demonstrate that the server can decode the sum of the matrices $\{\mathbf{P}^{(j,S+1)}\}_{j\in[N]\backslash\mathcal{U}}$, i.e., $\sum_{j\in[N]\backslash\mathcal{U}}\mathbf{P}^{(j,S+1)}$, from the received data $\{f^{(\mathcal{U},S+1)}_{\ell}(\alpha_i)\}_{\ell\in[\lambda_{S+1}],i\in[N]\backslash\mathcal{S}}$. By \eqref{encoding:polynomial:sum}, the polynomial $f^{(\mathcal{U},S+1)}_{\ell}(x)$ has degree at most $N-S-1$ for every $\ell\in[\lambda_{S+1}]$. Since the server knows the $N-S$ evaluations $\{f^{(\mathcal{U},S+1)}_{\ell}(\alpha_i)\}_{i\in[N]\backslash\mathcal{S}}$, it can uniquely interpolate $f^{(\mathcal{U},S+1)}_{\ell}(x)$ and then recover the coefficients of the interpolated  polynomial, which contain  the entries $\{\sum_{j\in[N]\backslash\mathcal{U}}P^{(j,S+1)}_{k,\ell}\}_{k\in[N-T-S]}$ in the $\ell$-th column of $\sum_{j\in[N]\backslash\mathcal{U}}\mathbf{P}^{(j,S+1)}$. Repeating this procedure for all $\ell\in[\lambda_{S+1}]$ recovers the entire matrix $\sum_{j\in[N]\backslash\mathcal{U}}\mathbf{P}^{(j,S+1)}$.
\item We next consider the iterative step. For any $m\in[S]$, suppose that the matrices
$\{\sum_{j\in[N]\backslash\mathcal{U}}\mathbf{P}^{(j,m+r)}\}_{r\in[S-m+1]}$ 
have already been recovered. We show that the server can then recover $\sum_{j\in[N]\backslash\mathcal{U}}\mathbf{P}^{(j,m)}$ from the received messages $\{f^{(\mathcal{U},m)}_{\ell}(\alpha_i)\}_{\ell\in[\lambda_m],i\in[N]\backslash\mathcal{S}}$.
By the construction of the matrices $\{\mathbf{P}^{(j,m)}\}_{j\in[N],m\in[M]}$ in \eqref{matrix:P}, for every $r\in[S-m+1]$, the matrix $\sum_{j\in[N]\backslash\mathcal{U}}\mathbf{P}^{(j,m+r)}$ contains the entries in the $(N-T-m-r+2)$-th row of the previous matrices $\sum_{j\in[N]\backslash\mathcal{U}}\mathbf{P}^{(j,1)},\sum_{j\in[N]\backslash\mathcal{U}}\mathbf{P}^{(j,2)},\ldots, \sum_{j\in[N]\backslash\mathcal{U}}\mathbf{P}^{(j,m+r-1)}$. Hence, for any $\ell\in[\lambda_m]$, the entries $\{\sum_{j\in[N]\backslash\mathcal{U}}P^{(j,m)}_{N-T-m-r+2,\ell}\}_{r\in[S-m+1]}$ can be obtained from the already recovered sum matrices $\{\sum_{j\in[N]\backslash\mathcal{U}}\mathbf{P}^{(j,m+r)}\}_{r\in[S-m+1]}$.
Moreover, by \eqref{encoding:polynomial:sum}, the polynomial $f^{(\mathcal{U},m)}_{\ell}(x)$ has degree at most $N-m$. Since the server knows the $N-S$ evaluations $\{f^{(\mathcal{U},m)}_{\ell}(\alpha_i)\}_{i\in[N]\backslash\mathcal{S}}$ together with the $S-m+1$ coefficients $\{\sum_{j\in[N]\backslash\mathcal{U}}P^{(j,m)}_{N-T-m-r+2,\ell}\}_{r\in[S-m+1]}$, the polynomial $f^{(\mathcal{U},m)}_{\ell}(x)$ can be uniquely interpolated.
Finally, the coefficients of the polynomial $f^{(\mathcal{U},m)}_{\ell}(x)$ yield the $\ell$-th column of
$\sum_{j\in[N]\backslash\mathcal{U}}\mathbf{P}^{(j,m)}$ and repeating this procedure for all $\ell\in[\lambda_m]$ recovers the entire matrix $\sum_{j\in[N]\backslash\mathcal{U}}\mathbf{P}^{(j,m)}$. 
\end{itemize}
By iteratively applying the above argument for $m=S,S-1,\ldots,1$, the server recovers the matrix $\sum_{j\in[N]\backslash\mathcal{U}}\mathbf{P}^{(j,1)}$ from the received messages in \eqref{received:messages}. Since this matrix is precisely the aggregate mask $\sum_{j\in[N]\backslash\mathcal{U}}C_j$ by \eqref{intical:case:P}, the server can eliminate the aggregate mask and recover the desired gradient sum  $\sum_{j\in[N]\backslash\mathcal{U}}W_j$ by \eqref{aggregate:mask}.

We now characterize the rates achieved by the proposed scheme. By \eqref{scheme;random:key} and \eqref{parameter:lambda}, the total amount of random keys generated by all users is $N(L+T\sum_{m\in[M]}\lambda_m)=N(T+1)L$. Therefore, 
the achieved key rate is $R_Z=N(T+1)$.
For any distinct $i,j\in[N]$, the encoded key $\mathcal{Z}_{i,j}$ contains $\sum_{m\in[M]}\lambda_m=L$ symbols by \eqref{encoded:key} and \eqref{parameter:lambda}. Since each user sends an encoded key to every other user, the achieved key-distribution communication rate is $R_K=N(N-1)$.
In the masked-gradient upload phase, for any given dropout set  $\mathcal{U}\subseteq[N]$ with size $0\leq U\leq N-1$, the server receives $N-U$ masked gradients, each consisting of $L$ symbols. Hence, the achieved masked-gradient communication rate is $R_{\mathcal{U}}=N-U$. In the mask-elimination phase, for each such  $\mathcal{U}$ and any given dropout set $\mathcal{S}\subseteq[N]$ with size $0\leq S\leq N-T-1$, the server collects the first $S+1$ mask-elimination messages from each of the $N-S$ surviving users by \eqref{upload:mask:message}. Since the $m$-th mask-elimination message contains $\lambda_m$ symbols for $m\in[S+1]$ by \eqref{encoding:message}, the achieved mask-elimination communication rate is $\frac{(N-S)\sum_{m\in[S+1]}\lambda_m}{L}=\frac{N-S}{N-T-S}$.
Consequently, for every feasible pair of dropout sets
$(\mathcal{U},\mathcal{S})$ satisfying $0\!\leq\! U\!\leq\! N\!-\!1$ and $0\!\leq\! S\!\leq\! N\!-\!T\!-\!1$, the proposed scheme achieves the rate tuple:
\begin{IEEEeqnarray}{c}
\left(N(T+1), N(N-1), N-U, \frac{N-S}{N-T-S} \right).   \notag
\end{IEEEeqnarray}

The privacy of the proposed scheme is formally established in the Appendix. Here, we provide  intuition by explaining how the three communication phases jointly ensure the privacy constraint. In the key-distribution phase, each user $i\in[N]$ distributes each symbol of its private random key $C_i$ using $T$ additional independent random noise symbols, ensuring that $C_i$ remains information-theoretically secure against any set of up to $T$ colluding users. In the masked-gradient upload phase, the local gradient $W_i$ is protected by the one-time-pad style mask $C_i$, so that the uploaded masked gradient $X_i=W_i+C_i$ reveals no information about $W_i$. In the mask-elimination phase, the transmitted messages are deterministic functions of the desired gradient sum  $W_{sum}^{\mathcal{U}}$ and the information available to the server and any $T$ colluding users from previous phases. Therefore, conditioned on the desired sum $W_{sum}^{\mathcal{U}}$, these mask-elimination messages provide no additional information beyond what is already available from previous phases.
Together, these observations imply that the server colluding with any up to $T$ users cannot infer any information about individual gradients of the remaining users beyond the desired gradient sum.

\section{Proof of Theorem \ref{Theorem:2}: Converse}\label{Converse:Proof}
In this section, we derive converse bounds for the key rate, the key-distribution communication rate, the masked-gradient communication rate, the mask-elimination communication rate, and the maximum tolerable number of dropouts in the mask-elimination phase. 
Note that converse bounds on the masked-gradient communication rate, the mask-elimination communication rate, and the maximum tolerable number of dropouts were previously established in \cite{zhao2022information} under a joint worst case over the dropout scenarios in both the masked-gradient upload and mask-elimination phases. Since these bounds do not directly apply to our dropout-dependent performance characterization, new converse bounds must be
derived for each given dropout set realized during the masked-gradient upload phase.


Before presenting the formal proof, we first establish four useful lemmas, which provide the key technical tools needed for the subsequent converse argument.
To simplify notation, let $\mathcal{D}_i$ denote the collection of encoded keys that user $i\in[N]$ sends to other users and receives from other users during the key-distribution phase, i.e.,
\begin{IEEEeqnarray}{c}\label{def:Di}
\mathcal{D}_i=\left\{\mathcal{Z}_{i,j},\mathcal{Z}_{j,i}:j\in[N]\backslash\{i\}\right\},\quad\forall\,i\in[N].
\end{IEEEeqnarray}

\begin{Lemma}\label{Lemma:converse}
For any given security parameter $0\leq T<N-1$ and any set of colluding users $\mathcal{T}\subseteq[N]$ with $|\mathcal{T}|\leq T$, let $\mathcal{U}\subseteq[N]$ be an arbitrary subset such that $0\leq |\mathcal{U}|\leq N-1$ and $[N]\backslash\mathcal{T}\nsubseteq\mathcal{U}$, i.e., the dropout set $\mathcal{U}$ does not include all the remaining non-colluding users in $[N]\backslash\mathcal{T}$. 
Then, we have 
\begin{IEEEeqnarray}{c}
I(W_{sum}^{\mathcal{U}}; \{X_i\}_{i \in [N] },\{W_i\}_{i\in\mathcal{T}},\{\mathcal{Z}_i\}_{i\in\mathcal{T}}, \{\mathcal{D}_{i}\}_{i\in\mathcal{T}})=0. \notag
\end{IEEEeqnarray}
\end{Lemma}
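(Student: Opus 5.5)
The plan is to combine the security constraint \eqref{eq:security} for \emph{two} different dropout sets. The key observation is that the part of the colluders' view appearing in the lemma does not depend on which dropout set is realized. Define
\[
V\triangleq\big(\{X_i\}_{i\in[N]},\{\mathcal{Z}_i\}_{i\in\mathcal{T}},\{\mathcal{D}_i\}_{i\in\mathcal{T}}\big).
\]
By \eqref{def:Di}, $V$ is a sub-collection of $\mathsf{View}_{\mathcal{T}}^{\mathcal{U}'}$ for \emph{every} dropout set $\mathcal{U}'$, and it does not involve $\mathcal{U}'$ at all. The mask-elimination messages, which are the only $\mathcal{U}'$-dependent part of the view, are simply discarded by monotonicity of mutual information.

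First I split the target by the chain rule:
\[
I(W_{sum}^{\mathcal{U}};V,\{W_i\}_{i\in\mathcal{T}})=I(W_{sum}^{\mathcal{U}};\{W_i\}_{i\in\mathcal{T}})+I(W_{sum}^{\mathcal{U}};V\mid\{W_i\}_{i\in\mathcal{T}}).
\]
Since $[N]\backslash\mathcal{T}\nsubseteq\mathcal{U}$, there is a surviving non-colluding user $k\in[N]\backslash(\mathcal{U}\cup\mathcal{T})$. Its gradient $W_k$ appears in $W_{sum}^{\mathcal{U}}$ and is uniform and independent of $\{W_i\}_{i\ne k}$ by \eqref{model:file:inden}--\eqref{infor:indenpe}. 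Hence $W_{sum}^{\mathcal{U}}$ is uniform and independent of $\{W_i\}_{i\in\mathcal{T}}$, so the first term vanishes. It therefore suffices to show that $V$ is independent of $\{W_i\}_{i\in[N]}$ given $\{W_i\}_{i\in\mathcal{T}}$. That gives the second term zero, since $W_{sum}^{\mathcal{U}}$ is a function of $\{W_i\}_{i\in[N]}$.

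To prove this conditional independence I will use security twice.
\begin{itemize}
\item Applying \eqref{eq:security} with dropout set $\mathcal{U}$ and monotonicity gives $I(\{W_i\}_{i\in[N]};V\mid W_{sum}^{\mathcal{U}},\{W_i\}_{i\in\mathcal{T}})=0$. So the conditional law $P_{V\mid \{W_i\}_{i\in[N]}}$ is a function $g(W_{sum}^{\mathcal{U}},\{W_i\}_{i\in\mathcal{T}})$ (all gradient realizations have positive probability by uniformity).
\item Next I pick any $j\in[N]\backslash\{k\}$, which exists since $N>1$. I apply \eqref{eq:security} with the admissible dropout set $\mathcal{U}'=[N]\backslash\{j\}$, of size $N-1$, whose desired sum is $W_j$. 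This yields that the same conditional law equals $h(W_j,\{W_i\}_{i\in\mathcal{T}})$.
\end{itemize}
Now fix $W_j$, $\{W_i\}_{i\in\mathcal{T}}$, and every other coordinate except $W_k$. Note that $k\ne j$ and $k\notin\mathcal{T}$. Varying $W_k$ over $\mathbb{F}_q^L$ sweeps $W_{sum}^{\mathcal{U}}$ over all of $\mathbb{F}_q^L$, while $h$ stays unchanged. Hence $g$ does not depend on its first argument, and the conditional law of $V$ given all gradients depends only on $\{W_i\}_{i\in\mathcal{T}}$. This is exactly the desired conditional independence, and the lemma follows. The argument holds whether $j$ lies in $\mathcal{U}$, in $\mathcal{T}$, or among the survivors.

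I expect the main subtlety to be making the ``function of'' step rigorous. The entropy statements must be translated into statements about conditional distributions, and I must check that the sweeping argument only uses realizations of positive probability; uniformity of the gradients guarantees this. I also need to confirm that the chosen $\mathcal{U}'=[N]\backslash\{j\}$ is an admissible dropout set and that $V$ is genuinely contained in $\mathsf{View}_{\mathcal{T}}^{\mathcal{U}'}$. Both hold by the problem definition, since $0\le|\mathcal{U}'|\le N-1$ and the $X_i$, $\mathcal{Z}_i$, $\mathcal{D}_i$ terms are common to all views.
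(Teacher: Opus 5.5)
Your proof is correct, but it takes a different route from the paper's.

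\textbf{The paper's route.} It applies the security constraint only once, with an auxiliary dropout set $\mathcal{U}'$. It takes $\mathcal{U}'=\mathcal{U}\cup\{k\}$ when $|\mathcal{U}|<N-1$, and $\mathcal{U}'=\{k\}$ when $|\mathcal{U}|=N-1$, since $\mathcal{U}\cup\{k\}=[N]$ is then inadmissible. Either choice makes $W_{sum}^{\mathcal{U}}$ independent of $(W_{sum}^{\mathcal{U}'},\{W_i\}_{i\in\mathcal{T}})$, because the uniform $W_k$ appears in $W_{sum}^{\mathcal{U}}$ but not in $W_{sum}^{\mathcal{U}'}$. The chain rule then gives $0=I(W_{sum}^{\mathcal{U}};V\mid W_{sum}^{\mathcal{U}'},\{W_i\}_{i\in\mathcal{T}})=I(W_{sum}^{\mathcal{U}};V,W_{sum}^{\mathcal{U}'},\{W_i\}_{i\in\mathcal{T}})\geq I(W_{sum}^{\mathcal{U}};V,\{W_i\}_{i\in\mathcal{T}})$. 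Everything stays at the level of entropies.

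\textbf{Your route.} You invoke security for two dropout sets and then argue pointwise on conditional laws, which you justify correctly using the full support of the uniform gradients. This buys the stronger statement $I(\{W_i\}_{i\in[N]};V\mid\{W_i\}_{i\in\mathcal{T}})=0$. That statement no longer involves $\mathcal{U}$ and needs no case split. It also already implies the paper's next lemma, $H(\{W_i\}_{i\in[N]\backslash\mathcal{T}}\mid\{\mathcal{D}_i\}_{i\in\mathcal{T}},\{X_i\}_{i\in[N]\backslash\mathcal{T}})=(N-T)L$, uniformly in $T$.

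\textbf{A simplification.} For the lemma itself, your first application (with dropout set $\mathcal{U}$) is not needed. Your second dropout set $[N]\backslash\{j\}$ with $j\neq k$, combined with the paper's chain-rule step, already suffices. Since $W_{sum}^{\mathcal{U}}$ contains $W_k$, it is independent of $(W_j,\{W_i\}_{i\in\mathcal{T}})$. Hence $I(W_{sum}^{\mathcal{U}};V,\{W_i\}_{i\in\mathcal{T}})\leq I(W_{sum}^{\mathcal{U}};V,W_j,\{W_i\}_{i\in\mathcal{T}})=0$. Because $[N]\backslash\{j\}$ is always an admissible dropout set, this one-shot version also removes the paper's case distinction on $|\mathcal{U}|$.
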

\begin{proof}
For any colluding users $\mathcal{T}\subseteq[N]$ with $|\mathcal{T}|\leq T$ and $0\!\leq\! T\!<\!N\!-\!1$, let $\mathcal{U}\subseteq[N]$ be an arbitrary subset satisfying $[N]\backslash\mathcal{T}\nsubseteq\mathcal{U}$.
The lemma is proved by considering separately the two cases $0\!\leq\! |\mathcal{U}|\!<\!N\!-\!1$ and $|\mathcal{U}|=N-1$.

We first consider the case $0\leq |\mathcal{U}|<N-1$.
Since the dropout set $\mathcal{U}\subseteq [N]$ satisfies $[N]\backslash\mathcal{T}\nsubseteq\mathcal{U}$, there exists at least one surviving non-colluding user $k \in [N] \backslash(\mathcal{U}\cup\mathcal{T})$. Let $\mathcal{U}' = \mathcal{U}\cup\{k\}$, which satisfies $1\leq|\mathcal{U}'|\leq N-1$. The proof for this case then proceeds as follows.
\begin{IEEEeqnarray}{rCl}
0&\overset{(a)}{=}& I(\{W_{i}\}_{i\in[N]}; \mathsf{View}_{\mathcal{T}}^{\mathcal{U}'} | W_{sum}^{\mathcal{U}'}, \{W_{i}\}_{i\in\mathcal{T}}) \notag \\
& \ge & I(W_{sum}^{\mathcal{U}}; \{X_i\}_{i \in [N] }, \{\mathcal{Z}_i\}_{i\in\mathcal{T}}, \{\mathcal{D}_{i}\}_{i\in\mathcal{T}} | W_{sum}^{\mathcal{U}'}, \{W_{i}\}_{i \in \mathcal{T}}) \IEEEeqnarraynumspace \notag\\
&=&I(W_{sum}^{\mathcal{U}}; \{X_i\}_{i \in [N] }, \{\mathcal{Z}_i\}_{i\in\mathcal{T}}, \{\mathcal{D}_{i}\}_{i\in\mathcal{T}}, W_{sum}^{\mathcal{U}'}, \{W_{i}\}_{i \in \mathcal{T}}) - I(W_{sum}^{\mathcal{U}}; W_{sum}^{\mathcal{U}'}, \{W_{i}\}_{i \in \mathcal{T}}) \notag\\
&\overset{(b)}{=}&I(W_{sum}^{\mathcal{U}}; \{X_i\}_{i \in [N] }, \{\mathcal{Z}_i\}_{i\in\mathcal{T}}, \{\mathcal{D}_{i}\}_{i\in\mathcal{T}}, W_{sum}^{\mathcal{U}'}, \{W_{i}\}_{i \in \mathcal{T}}) \nonumber \\
&\geq& I(W_{sum}^{\mathcal{U}}; \{X_i\}_{i \in [N] },\{\mathcal{Z}_i\}_{i\in\mathcal{T}}, \{\mathcal{D}_{i}\}_{i\in\mathcal{T}}, \{W_i\}_{i\in\mathcal{T}}), \notag 
\end{IEEEeqnarray}
where $(a)$ follows from the security constraint in \eqref{eq:security}, and 
$(b)$ follows from the facts that the gradients $W_1,W_2,\ldots,W_N$ are mutually independent by \eqref{model:file:inden}, and that the index $k$ satisfies $k\in[N]\backslash\mathcal{U},k\notin[N]\backslash\mathcal{U}'$, and $k\notin\mathcal{T}$, which together imply that $I(W_{sum}^{\mathcal{U}}; W_{sum}^{\mathcal{U}'},\{W_{i}\}_{i \in \mathcal{T}})=0$.

Next, we consider the case $|\mathcal{U}|=N-1$. Let $\mathcal{U}'=[N]\backslash\mathcal{U}$, which satisfies $|\mathcal{U}'|=1$. Denote the unique element of $\mathcal{U}'$ by $k$, i.e., $\mathcal{U}'=\{k\}=[N]\backslash\mathcal{U}$. 
Furthermore, since $[N]\backslash\mathcal{T}\nsubseteq\mathcal{U}$, we have $k\notin\mathcal{T}$.
Then, by the security constraint in \eqref{eq:security}, we obtain
\begin{IEEEeqnarray}{rCl}
0&=& I(\{W_{i}\}_{i\in[N]}; \mathsf{View}_{\mathcal{T}}^{\mathcal{U}'} | W_{sum}^{\mathcal{U}'}, \{W_{i}\}_{i\in\mathcal{T}}) \notag \\
& \ge & I\Big(W_{k}; \{X_i\}_{i \in [N] }, \{\mathcal{Z}_i\}_{i\in\mathcal{T}}, \{\mathcal{D}_{i}\}_{i\in\mathcal{T}} \Big| \sum\limits_{i\in[N]\backslash(\{k\}\cup\mathcal{T})}W_{i}, \{W_{i}\}_{i \in \mathcal{T}}\Big) \IEEEeqnarraynumspace \notag\\
&\overset{(a)}{=}& I\Big(W_{k}; \{X_i\}_{i \in [N]\backslash\mathcal{T} }, \{\mathcal{Z}_i\}_{i\in\mathcal{T}}, \{\mathcal{D}_{i}\}_{i\in\mathcal{T}} \Big| \sum\limits_{i\in[N]\backslash(\{k\}\cup\mathcal{T})}W_{i}, \{W_{i}\}_{i \in \mathcal{T}}\Big) \IEEEeqnarraynumspace \notag\\
&\overset{(b)}{=}& I\Big(W_{k}; \{X_i\}_{i \in [N]\backslash\mathcal{T} }, \{\mathcal{Z}_i\}_{i\in\mathcal{T}}, \{\mathcal{D}_{i}\}_{i\in\mathcal{T}} \Big| \sum\limits_{i\in[N]\backslash(\{k\}\cup\mathcal{T})}W_{i} \Big) \IEEEeqnarraynumspace \notag\\
&=& I\Big(W_{k}; \{X_i\}_{i \in [N]\backslash\mathcal{T} }, \{\mathcal{Z}_i\}_{i\in\mathcal{T}}, \{\mathcal{D}_{i}\}_{i\in\mathcal{T}},\sum\limits_{i\in[N]\backslash(\{k\}\cup\mathcal{T})}W_{i} \Big) \IEEEeqnarraynumspace \notag\\
&\geq& I\big(W_{k}; \{X_i\}_{i \in [N]\backslash\mathcal{T} }, \{\mathcal{Z}_i\}_{i\in\mathcal{T}}, \{\mathcal{D}_{i}\}_{i\in\mathcal{T}} \big) \IEEEeqnarraynumspace \notag\\
&\overset{(c)}{=}& I(W_{k}; \{X_i\}_{i \in [N]\backslash\mathcal{T} },\{\mathcal{Z}_i\}_{i\in\mathcal{T}}, \{\mathcal{D}_{i}\}_{i\in\mathcal{T}}, \{W_i\}_{i\in\mathcal{T}}) \notag\\
&\overset{(d)}{=}& I(W_{k}; \{X_i\}_{i \in [N] },\{\mathcal{Z}_i\}_{i\in\mathcal{T}}, \{\mathcal{D}_{i}\}_{i\in\mathcal{T}}, \{W_i\}_{i\in\mathcal{T}}) \notag\\
&\overset{(e)}{=}& I(W_{sum}^{\mathcal{U}}; \{X_i\}_{i \in [N] },\{\mathcal{Z}_i\}_{i\in\mathcal{T}}, \{\mathcal{D}_{i}\}_{i\in\mathcal{T}}, \{W_i\}_{i\in\mathcal{T}}), \notag 
\end{IEEEeqnarray}
where $(a)$ and $(d)$ follow from the fact that $X_i$ is a deterministic function of $\{W_i,\mathcal{Z}_i,\mathcal{D}_i\}$ by \eqref{model:masked:graidient:2} and \eqref{def:Di} for any $i\in\mathcal{T}$; $(b)$ and $(c)$ follow from the facts that $k\notin\mathcal{T}$ and that $\{W_i\}_{i\in\mathcal{T}}$ is independent of $\{W_i\}_{i\in[N]\backslash\mathcal{T}}$ and $\{\mathcal{Z}_i\}_{i\in[N]}$ by \eqref{model:key:inden};
$(e)$ is because $\{k\}=[N]\backslash\mathcal{U}$.
Consequently, we complete the proof of the lemma.
\end{proof}

\begin{Lemma}
For any given security parameter $0\!\leq\! T\!<\!N\!-\!1$ and any subset of colluding users $\mathcal{T}\!\subseteq\![N]$ with $|\mathcal{T}|\!=\!T$, 
\begin{IEEEeqnarray}{c}\label{converse:lemma:1}
H(\{W_i\}_{i\in[N]\backslash\mathcal{T}}|\{\mathcal{D}_i\}_{i\in\mathcal{T}},\{X_i\}_{i\in[N]\backslash\mathcal{T}})=(N-T)L.    
\end{IEEEeqnarray}
\end{Lemma}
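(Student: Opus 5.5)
The plan is to show that $\{W_i\}_{i\in[N]\backslash\mathcal{T}}$ is independent of $(\{\mathcal{D}_i\}_{i\in\mathcal{T}},\{X_i\}_{i\in[N]\backslash\mathcal{T}})$. The claim then follows from \eqref{model:file:inden}--\eqref{infor:indenpe}, because
\begin{equation*}
H(\{W_i\}_{i\in[N]\backslash\mathcal{T}})=\sum_{i\in[N]\backslash\mathcal{T}}H(W_i)=(N-T)L .
\end{equation*}
The upper bound $H(\cdot\,|\,\cdot)\le(N-T)L$ is immediate. It therefore suffices to prove
\begin{equation*}
I(\{W_i\}_{i\in[N]\backslash\mathcal{T}};\{\mathcal{D}_i\}_{i\in\mathcal{T}},\{X_i\}_{i\in[N]\backslash\mathcal{T}})=0 .
\end{equation*}

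Fix $\mathcal{U}=\emptyset$, so that $W_{sum}^{\emptyset}=\sum_{i\in[N]}W_i$. Since $|\mathcal{T}|=T<N-1$, the set $[N]\backslash\mathcal{T}$ is nonempty, so $[N]\backslash\mathcal{T}\nsubseteq\emptyset$ and Lemma~\ref{Lemma:converse} applies to this choice. Let $B=\{\{X_i\}_{i\in[N]},\{\mathcal{Z}_i\}_{i\in\mathcal{T}},\{\mathcal{D}_i\}_{i\in\mathcal{T}}\}$, which is contained in $\mathsf{View}_{\mathcal{T}}^{\emptyset}$. By the chain rule,
\begin{equation*}
I(\{W_i\}_{i\in[N]};B\,|\,\{W_i\}_{i\in\mathcal{T}})\le I(W_{sum}^{\emptyset};B\,|\,\{W_i\}_{i\in\mathcal{T}})+I(\{W_i\}_{i\in[N]};B\,|\,W_{sum}^{\emptyset},\{W_i\}_{i\in\mathcal{T}}).
\end{equation*}
The second term vanishes by the security constraint \eqref{eq:security} with $\mathcal{U}=\emptyset$, since $B\subseteq\mathsf{View}_{\mathcal{T}}^{\emptyset}$. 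The first term is at most $I(W_{sum}^{\emptyset};B,\{W_i\}_{i\in\mathcal{T}})$, which equals zero by Lemma~\ref{Lemma:converse}. Hence $I(\{W_i\}_{i\in[N]};B\,|\,\{W_i\}_{i\in\mathcal{T}})=0$.

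To remove the conditioning, I will use the following chain, in which the equality uses $I(\{W_i\}_{i\in[N]\backslash\mathcal{T}};\{W_i\}_{i\in\mathcal{T}})=0$ from \eqref{model:file:inden}:
\begin{IEEEeqnarray}{rCl}
I(\{W_i\}_{i\in[N]\backslash\mathcal{T}};\{\mathcal{D}_i\}_{i\in\mathcal{T}},\{X_i\}_{i\in[N]\backslash\mathcal{T}}) &\le& I(\{W_i\}_{i\in[N]\backslash\mathcal{T}};B,\{W_i\}_{i\in\mathcal{T}}) \notag\\
&=& I(\{W_i\}_{i\in[N]\backslash\mathcal{T}};B\,|\,\{W_i\}_{i\in\mathcal{T}}) \notag\\
&\le& I(\{W_i\}_{i\in[N]};B\,|\,\{W_i\}_{i\in\mathcal{T}})=0. \notag
\end{IEEEeqnarray}

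The main subtlety is that the security constraint alone only guarantees independence conditioned on the aggregate $W_{sum}^{\mathcal{U}}$. Lemma~\ref{Lemma:converse} supplies exactly the missing fact that the aggregate itself is hidden from these particular variables. The one point to check is that $\mathcal{U}=\emptyset$ is a legitimate choice in both the security constraint and the lemma, which holds because $T<N-1$.
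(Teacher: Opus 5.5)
Your proof is correct, but it is organized differently from the paper's. The paper splits into the cases $T=0$ and $1\le T<N-1$. For $T=0$ it does essentially what you do: it applies the security constraint and then uses Lemma~\ref{Lemma:converse} to show $I(W_{sum}^{\mathcal{U}};\{X_i\}_{i\in[N]})=0$.

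For $T\ge 1$, the paper avoids Lemma~\ref{Lemma:converse} altogether. It picks a dropout set $\mathcal{U}$ whose survivors lie entirely inside the colluding set, $[N]\backslash\mathcal{U}\subseteq\mathcal{T}$. Then $W_{sum}^{\mathcal{U}}$ is a function of $\{W_i\}_{i\in\mathcal{T}}$, so the conditioning on the aggregate in \eqref{eq:security} is redundant. A single application of the security constraint gives independence given $\{W_i\}_{i\in\mathcal{T}}$, and independence of the gradients removes that conditioning.

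You instead fix $\mathcal{U}=\emptyset$ for every $T$. You handle the nontrivial aggregate with Lemma~\ref{Lemma:converse}, which itself invokes security at $\mathcal{U}'=\{k\}$ and uses the uniformity of $W_k$ to make $W_{sum}^{\emptyset}$ independent of $(W_{sum}^{\{k\}},\{W_i\}_{i\in\mathcal{T}})$.

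Your version buys a single case-free argument. It uses the security constraint only at dropout sets of size $0$ and $1$, so it would still work in a model that caps the number of upload-phase dropouts. The paper's $T\ge 1$ step buys a more elementary argument that needs neither Lemma~\ref{Lemma:converse} nor the one-time-pad property of a uniform gradient. The costs are that it requires dropout sets of size $U\ge N-T$ to be admissible, and that $T=0$ needs separate treatment, since there no admissible $\mathcal{U}$ has all its survivors inside $\mathcal{T}=\emptyset$.
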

\begin{proof}
The proof of the lemma is divided into two cases according to the value of the security parameter $T$: $T=0$ and $1\leq T<N-1$. We first focus on the former case.
In the case of $T=0$, for 
any given dropout set $\mathcal{U}\subseteq[N]$ satisfying $0\leq |\mathcal{U}|\leq N-1$, we have
\begin{IEEEeqnarray}{rCl}
0&\overset{(a)}{=}&I(\{{W}_i\}_{i \in [N]} ; \{X_i\}_{i\in[N]} | W_{sum}^{\mathcal{U}} ) \notag\\
&=&I(\{{W}_i\}_{i \in [N]},W_{sum}^{\mathcal{U}} ; \{X_i\}_{i\in[N]})-I(W_{sum}^{\mathcal{U}} ; \{X_i\}_{i\in[N]}) \notag\\
&\overset{(b)}{=}&I(\{{W}_i\}_{i \in [N]} ; \{X_i\}_{i\in[N]})-I(W_{sum}^{\mathcal{U}} ; \{X_i\}_{i\in[N]}) \notag\\
&\overset{(c)}{=}&I(\{{W}_i\}_{i \in [N]} ; \{X_i\}_{i\in[N]}) \notag\\
&=&H(\{{W}_i\}_{i \in [N]})-H(\{{W}_i\}_{i \in [N]} | \{X_i\}_{i\in[N]}) \notag\\
&\overset{(d)}{=}&NL-H(\{{W}_i\}_{i \in [N]} | \{X_i\}_{i\in[N]}), \notag
\end{IEEEeqnarray}
where $(a)$ follows from the security constraint in \eqref{eq:security};
$(b)$ follows from the facts that $W_{sum}^{\mathcal{U}}$ is a deterministic function of $\{\{{W}_i\}_{i \in [N]},\mathcal{U}\}$, and that the dropout set $\mathcal{U}$ is publicly known and is treated as a fixed subset of users; $(c)$ follows by Lemma \ref{Lemma:converse} such that $I(W_{sum}^{\mathcal{U}} ; \{X_i\}_{i\in[N]})=0$; $(d)$ is due to \eqref{model:file:inden} and \eqref{infor:indenpe}. Therefore, the proof of the lemma is completed for the case $T=0$.

Next, we turn to the case $1\leq T<N-1$.
For any subset $\mathcal{T}\subseteq[N]$ with size $T$, let $\mathcal{U}\subseteq[N]$ be a dropout set satisfying $0\leq |\mathcal{U}|\leq N-1$ and $[N]\backslash\mathcal{U}\subseteq\mathcal{T}$. 
Then, according to the security constraint in \eqref{eq:security}, 
\begin{IEEEeqnarray}{rCl}
0&=&I(\{{W}_i\}_{i \in [N]} ; \{\mathcal{Z}_i\}_{i\in\mathcal{T}},\{\mathcal{D}_{i}\}_{i\in\mathcal{T}}, \{X_i\}_{i\in[N]} | W_{sum}^{\mathcal{U}}, \{W_i\}_{i\in\mathcal{T}} )  \notag\\
&\overset{(a)}{=}&I(\{{W}_i\}_{i \in [N]\backslash\mathcal{T}} ; \{\mathcal{Z}_i\}_{i\in\mathcal{T}},\{\mathcal{D}_{i}\}_{i\in\mathcal{T}}, \{X_i\}_{i\in[N]} | \{W_i\}_{i\in\mathcal{T}} )  \notag\\
&\geq&I(\{{W}_i\}_{i \in [N]\backslash\mathcal{T}} ; \{\mathcal{D}_i\}_{i\in\mathcal{T}}, \{X_i\}_{i\in[N]\backslash\mathcal{T}} | \{W_i\}_{i\in\mathcal{T}} )  \notag\\
&\overset{(b)}{=}&I(\{{W}_i\}_{i \in [N]\backslash\mathcal{T}} ; \{\mathcal{D}_i\}_{i\in\mathcal{T}}, \{X_i\}_{i\in[N]\backslash\mathcal{T}}),  \label{1234} 
\end{IEEEeqnarray}
where $(a)$ is due to $[N]\backslash\mathcal{U}\subseteq\mathcal{T}$ such that $W_{sum}^{\mathcal{U}}$ can be determined by $\{W_i\}_{i\in\mathcal{T}}$, 
and $(b)$ follows from the independence of $\{W_i\}_{i\in\mathcal{T}}$ from $\{W_i\}_{i \in [N]\backslash\mathcal{T}}$ and $\{\mathcal{Z}_i\}_{i\in[N]}$ by \eqref{model:key:inden}, together with the fact that $\{\mathcal{D}_i\}_{i\in\mathcal{T}}$ and $\{X_i\}_{i\in[N]\backslash\mathcal{T}}$ are the functions of $\{W_i\}_{i \in [N]\backslash\mathcal{T}}$ and $\{\mathcal{Z}_i\}_{i\in[N]}$ by \eqref{model:key:distribution}--\ref{model:masked:graidient:2} and \eqref{def:Di}.
Furthermore, by \eqref{1234},
\begin{IEEEeqnarray}{rCl}
&&H(\{{W}_i\}_{i \in [N]\backslash\mathcal{T}} | \{\mathcal{D}_i\}_{i\in\mathcal{T}}, \{X_i\}_{i\in[N]\backslash\mathcal{T}}) \notag\\
&=&H(\{{W}_i\}_{i \in [N]\backslash\mathcal{T}})-I(\{{W}_i\}_{i \in [N]\backslash\mathcal{T}} ; \{\mathcal{D}_i\}_{i\in\mathcal{T}}, \{X_i\}_{i\in[N]\backslash\mathcal{T}}) \notag\\
&=&H(\{{W}_i\}_{i \in [N]\backslash\mathcal{T}}) \notag\\
&=&(N-T)L, \notag
\end{IEEEeqnarray}
which completes the proof of the lemma for the case $1\leq T<N-1$.
\end{proof}

\begin{Lemma}
For any given security parameter $0\le T<N-1$, any subset of colluding users $\mathcal{T} \subseteq [N]$ with size $T$, and any index $k \in [N]\backslash\mathcal{T}$,
\begin{IEEEeqnarray}{c}\label{converse:lemma:2}
H(\{W_i\}_{i\in[N]\backslash\mathcal{T}}|\{\mathcal{D}_i\}_{i\in\mathcal{T}},\mathcal{D}_k,\{X_i\}_{i\in[N]\backslash\mathcal{T}})=0.    
\end{IEEEeqnarray}
\end{Lemma}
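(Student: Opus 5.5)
The plan is to use correctness in the mask-elimination phase with surviving set $\mathcal{T}\cup\{k\}$. It has size $T+1$, so the dropout set $\mathcal{S}=[N]\backslash(\mathcal{T}\cup\{k\})$ has size $N-T-1=S_{max}$ and is feasible. The obstacle is that the messages $\mathcal{Y}^{\mathcal{U}}_{i,m}$ of these users depend on $(W_i,\mathcal{Z}_i)$, not only on $\mathcal{D}_i$. So I will first decode with $(W_i,\mathcal{Z}_i)_{i\in\mathcal{T}\cup\{k\}}$ as genie side information and then remove it by a conditional-independence argument.

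\emph{Step 1 (each $j\in[N]\backslash(\mathcal{T}\cup\{k\})$).} Take $\mathcal{U}=[N]\backslash\{j\}$, so that $W_{sum}^{\mathcal{U}}=W_j$. Correctness \eqref{eq:correctness} with the above $\mathcal{S}$ gives $H(W_j|X_j,\{\mathcal{Y}^{\mathcal{U}}_{i,m}\}_{i\in\mathcal{T}\cup\{k\},m\in[M_S]})=0$. By \eqref{model:masked:gradients} and \eqref{def:Di}, these messages are functions of $A\triangleq\{W_i,\mathcal{Z}_i\}_{i\in\mathcal{T}\cup\{k\}}$ and $\mathcal{D}_{\mathcal{T}\cup\{k\}}\triangleq\{\mathcal{D}_i\}_{i\in\mathcal{T}\cup\{k\}}$. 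Hence $H(W_j|X_j,\mathcal{D}_{\mathcal{T}\cup\{k\}},A)=0$, and therefore
\begin{IEEEeqnarray}{c}
H(W_j|X_j,\mathcal{D}_{\mathcal{T}\cup\{k\}})\le I(W_j,X_j;A|\mathcal{D}_{\mathcal{T}\cup\{k\}}). \notag
\end{IEEEeqnarray}

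\emph{Step 2: the right-hand side is zero.} Let $B\triangleq\{W_l,\mathcal{Z}_l\}_{l\notin\mathcal{T}\cup\{k\}}$. By \eqref{model:key:inden}, $A$ and $B$ are independent. By \eqref{model:key:distribution}, $\mathcal{D}_{\mathcal{T}\cup\{k\}}$ splits into a part $f(A)$ (the keys $\mathcal{Z}_{i,l}$ sent by $i\in\mathcal{T}\cup\{k\}$) and a part $g(B)$ (the keys $\mathcal{Z}_{l,i}$ with $l\notin\mathcal{T}\cup\{k\}$). Under the same split, $(W_j,X_j)$ is a function of $B$ together with $\{\mathcal{Z}_{i,j}\}_{i\in\mathcal{T}\cup\{k\}}\subseteq f(A)$, by \eqref{model:masked:graidient:2}. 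The standard fact for independent $A,B$ is that $A$ and $B$ remain conditionally independent given $(f(A),g(B))$. Applying it yields $I(W_j,X_j;A|f(A),g(B))=0$. Consequently each such $W_j$ is a function of $(X_j,\mathcal{D}_{\mathcal{T}\cup\{k\}})$.

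\emph{Step 3 (the remaining gradient $W_k$).} Take instead $\mathcal{U}=\mathcal{T}$, so that $W_{sum}^{\mathcal{U}}=\sum_{i\in[N]\backslash\mathcal{T}}W_i$. The same surviving set $\mathcal{T}\cup\{k\}$ and the same two-step argument apply, now with side information $\{X_i\}_{i\in[N]\backslash\mathcal{T}}$ in place of $X_j$. Each $X_i$ with $i\notin\mathcal{T}\cup\{k\}$ lies on the $B$-side, and $X_k$ is a function of $A$ and $g(B)$. This recovers the sum $\sum_{i\in[N]\backslash\mathcal{T}}W_i$ from $\{\mathcal{D}_i\}_{i\in\mathcal{T}},\mathcal{D}_k,\{X_i\}_{i\in[N]\backslash\mathcal{T}}$. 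Subtracting the $W_j$ obtained in Step 1 gives $W_k$, which proves \eqref{converse:lemma:2}. Alternatively, with $\mathcal{U}=[N]\backslash\{k\}$, the same argument decodes $W_k$ directly after moving $X_k$ into the conditioning.

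The main obstacle I expect is the bookkeeping in Step 2 and in the analogous step of Step 3. There I must check carefully that every quantity in the conditioning splits cleanly into an $A$-measurable part and a $B$-measurable part. This includes the messages $X_k$ and $\mathcal{Z}_{l,k}$ that link user $k$ to the others.
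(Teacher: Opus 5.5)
\textbf{Gap in Step 3.} Your Steps 1 and 2 are correct and match the paper's treatment of the users $j\in[N]\backslash(\mathcal{T}\cup\{k\})$. The paper also takes $\mathcal{U}=[N]\backslash\{j\}$ and $\mathcal{S}=[N]\backslash(\mathcal{T}\cup\{k\})$, shows that $\{\mathcal{Z}_i\}_{i\in\mathcal{T}\cup\{k\}}$ is conditionally independent of $(W_j,X_j)$ given $\{\mathcal{D}_i\}_{i\in\mathcal{T}\cup\{k\}}$, and then adds the genie.

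Step 3 is where the proof breaks. The two-step argument needs the target and the side information to be functions of $B$ and $f(A)$ only. Here, however, $\sum_{i\in[N]\backslash\mathcal{T}}W_i$ and $X_k$ both depend directly on $(W_k,\mathcal{Z}_k)\subseteq A$, as you observe for $X_k$. In fact the resulting bound is vacuous. By Step 1, each $W_j$ with $j\notin\mathcal{T}\cup\{k\}$ is a function of $(X_j,\mathcal{D}_{\mathcal{T}\cup\{k\}})$, so $W_k$ is a function of the target, the side information and $\mathcal{D}_{\mathcal{T}\cup\{k\}}$. Hence
$I(\sum_{i\notin\mathcal{T}}W_i,\{X_i\}_{i\notin\mathcal{T}};A\mid\mathcal{D}_{\mathcal{T}\cup\{k\}})\ge I(W_k;A\mid\mathcal{D}_{\mathcal{T}\cup\{k\}})=L$.
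In effect you are decoding $W_k$ with a genie that already contains $W_k$.

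Your ``alternative'' with $\mathcal{U}=[N]\backslash\{k\}$ fails for the same reason if the surviving set is $\mathcal{T}\cup\{k\}$. User $k$'s own messages force $(W_k,\mathcal{Z}_k)$ into the genie. Nor can you replace $k$ by another user $k'$ in a $(T+1)$-user surviving set, because $\mathcal{D}_{k'}$ is not in the lemma's conditioning.

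\textbf{The missing idea.} The paper lets user $k$ itself drop out of the mask-elimination phase and uses the complementary split. Take $\mathcal{U}=[N]\backslash\{k\}$ and any $\mathcal{S}\ni k$ with $1\le|\mathcal{S}|\le N-T-1$, which is possible because $T<N-1$. Set $A'=\{W_l,\mathcal{Z}_l\}_{l\neq k}$ and $B'=(W_k,\mathcal{Z}_k)$.
\begin{itemize}
\item $\mathcal{D}_k$ splits into the keys sent by $k$, which are functions of $B'$, and the keys received by $k$, which are functions of $A'$.
\item $(W_k,X_k)$ is a function of $B'$ and the received keys. Your independence fact then gives $I(W_k,X_k;A'\mid\mathcal{D}_k)=0$.
\item The messages of the surviving users $[N]\backslash\mathcal{S}$ are functions of $A'$ and $\{\mathcal{Z}_{k,l}\}_{l\neq k}\subseteq\mathcal{D}_k$. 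Correctness therefore gives $H(W_k\mid X_k,\mathcal{D}_k,A')=0$.
\end{itemize}
Combining the last two points yields $H(W_k\mid X_k,\mathcal{D}_k)=0$. Together with your Steps 1--2 this proves the lemma, and the detour through $\mathcal{U}=\mathcal{T}$ is not needed.
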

\begin{proof}
For any subset of colluding users $\mathcal{T}\subseteq[N]$ with size $0\leq T<N-1$ and any index $k\in[N]\backslash\mathcal{T}$, the lemma is proved by showing that $H(W_i | \mathcal{D}_i,X_i)=0$ for $i=k$ and $H(W_i|\{\mathcal{D}_j\}_{j\in\mathcal{T}},\mathcal{D}_k,X_i)=0$ for $i\in[N]\backslash(\mathcal{T}\cup\{k\})$.
We first focus on the former case.
For any $i\in[N]$, we have
\begin{IEEEeqnarray}{rCl}
0&\leq&I(W_i;\{\mathcal{Z}_j\}_{j\in[N]\backslash\{i\}} | \mathcal{D}_i,X_i ) \label{lemma2:first:case1}\\
&\leq& I(W_i,X_i,\mathcal{Z}_i;\{\mathcal{Z}_j\}_{j\in[N]\backslash\{i\}} |\mathcal{D}_i) \notag\\
&\overset{(a)}{=}& I(W_i,\mathcal{Z}_i;\{\mathcal{Z}_j\}_{j\in[N]\backslash\{i\}} |\mathcal{D}_i) \notag\\
&\leq&I(W_i,\mathcal{Z}_i,\{\mathcal{Z}_{i,j}\}_{j\in[N]\backslash\{i\}};\{\mathcal{Z}_j,\mathcal{Z}_{j,i}\}_{j\in[N]\backslash\{i\}}) \notag\\
&\overset{(b)}{=}&I(W_i,\mathcal{Z}_i;\{\mathcal{Z}_j\}_{j\in[N]\backslash\{i\}}) \notag\\
&\overset{(c)}{=}&0, \notag
\end{IEEEeqnarray}
where $(a)$ follows from the fact that $X_i$ is a deterministic function of $\{W_i,\mathcal{Z}_i,\mathcal{D}_i\}$ by \eqref{model:masked:graidient:2} and \eqref{def:Di}; $(b)$ follows by \eqref{model:key:distribution}; $(c)$ is due to \eqref{model:key:inden}.
Furthermore, for any $i\in[N]$, let $\mathcal{U}=[N]\backslash\{i\}$ and  $\mathcal{S}\subseteq[N]$ satisfying $i\in\mathcal{S}$ and $1\leq |\mathcal{S}|\leq N-T-1$. Then, by \eqref{lemma2:first:case1},
\begin{IEEEeqnarray}{rCl}
H(W_i | \mathcal{D}_i,X_i )
&=&H(W_i | \mathcal{D}_i, X_i )-I(W_i;\{\mathcal{Z}_j\}_{j\in[N]\backslash\{i\}} | \mathcal{D}_i, X_i )\nonumber\\
&=&H(W_i | \mathcal{D}_i,X_i,\{\mathcal{Z}_j\}_{j\in[N]\backslash\{i\}} )\nonumber\\ 
&\overset{(a)}{=}&H(W_i | \mathcal{D}_i, X_i,\{\mathcal{Z}_j\}_{j\in[N]\backslash\{i\}}, \{W_j\}_{j\in[N]\backslash\{i\}},\mathcal{U} )\label{equation:1111}\\ 
&\overset{(b)}{=}&H(W_i | \mathcal{D}_i, X_i,\{\mathcal{Z}_j\}_{j\in[N]\backslash\{i\}},\{W_j\}_{j\in[N]\backslash\{i\}},\{\mathcal{Y}_{j,m}^{\mathcal{U}}\}_{j\in[N]\backslash\mathcal{S},m\in[M]},\mathcal{U}) \label{equation:2222}\\ 
&\stackrel{(c)}{=}&0,\label{equation:3333}
\end{IEEEeqnarray}
where $(a)$ follows from the facts that $\{W_j\}_{j\in[N]\backslash\{i\}}$ is independent of $\{W_i,\mathcal{D}_i,X_i,\{\mathcal{Z}_j\}_{j\in[N]\backslash\{i\}}\}$ by \eqref{model:key:distribution}, \eqref{model:masked:graidient:2}, \eqref{model:key:inden}, and \eqref{def:Di}, and that the dropout set $\mathcal{U}$ is publicly known and is treated as a fixed subset of users; $(b)$ follows by \eqref{model:masked:gradients}; $(c)$ follows by applying $\mathcal{U}\!=\![N]\backslash\{i\}$ to the correctness constraint in \eqref{eq:correctness}.

We next focus on the latter case.
For any $k \in [N]\backslash\mathcal{T}$ and $i\in[N]\backslash(\mathcal{T}\cup\{k\})$,
\begin{IEEEeqnarray}{rCl}
0 &\leq& I(W_i,X_i;\{\mathcal{Z}_j\}_{j\in\mathcal{T}},\mathcal{Z}_k|\{\mathcal{D}_j\}_{j\in\mathcal{T}},\mathcal{D}_k) \label{lemma:condition}\\
&\overset{(a)}{\leq}& I(W_i,\mathcal{Z}_i,\{\mathcal{Z}_{j,i}\}_{j\in[N]\backslash(\mathcal{T}\cup\{i,k\})};\{\mathcal{Z}_j\}_{j\in\mathcal{T}},\mathcal{Z}_k|\{\mathcal{D}_j\}_{j\in\mathcal{T}},\mathcal{D}_k) \notag\\
&\overset{(b)}{\leq}& I(W_i,\mathcal{Z}_i,\{\mathcal{Z}_{j}\}_{j\in[N]\backslash(\mathcal{T}\cup\{i,k\})};\{\mathcal{Z}_j\}_{j\in\mathcal{T}},\mathcal{Z}_k|\{\mathcal{D}_j\}_{j\in\mathcal{T}},\mathcal{D}_k) \notag\\
&\overset{(c)}{\leq}& I(W_i,\mathcal{Z}_i,\{\mathcal{Z}_{j}\}_{j\in[N]\backslash(\mathcal{T}\cup\{i,k\})};\{\mathcal{Z}_j\}_{j\in\mathcal{T}},\mathcal{Z}_k) \notag\\
&\overset{(d)}{=}&0, \notag
\end{IEEEeqnarray}
where $(a)$ follows from the fact that $X_i$ is a function of $\{W_i,\mathcal{Z}_i,\{\mathcal{Z}_{j,i}\}_{j\in[N]\backslash(\mathcal{T}\cup\{i,k\})},\{\mathcal{D}_j\}_{j\in\mathcal{T}},\mathcal{D}_k\}$ by \eqref{model:masked:graidient:2} and \eqref{def:Di}; $(b)$ and $(c)$ follow by \eqref{model:key:distribution}; $(d)$ is due to \eqref{model:key:inden}.
Furthermore, for any $k \in [N]\backslash\mathcal{T}$ and $i\in[N]\backslash(\mathcal{T}\cup\{k\})$, let $\mathcal{U}=[N]\backslash\{i\}$ and let $\mathcal{S}=[N]\backslash(\mathcal{T}\cup\{k\})$ with size $|\mathcal{S}|=N-T-1$. Then, we have
\begin{IEEEeqnarray}{rCl}
H(W_i|\{\mathcal{D}_j\}_{j\in\mathcal{T}},\mathcal{D}_k,X_i)&\overset{(a)}{=}&H(W_i|\{\mathcal{D}_j\}_{j\in\mathcal{T}},\mathcal{D}_k,X_i,\{\mathcal{Z}_j\}_{j\in\mathcal{T}},\mathcal{Z}_k) \notag\\
&\overset{(b)}{=}&H(W_i|\{\mathcal{D}_j\}_{j\in\mathcal{T}},\mathcal{D}_k,X_i,\{\mathcal{Z}_j\}_{j\in\mathcal{T}},\mathcal{Z}_k,\{W_j\}_{j\in\mathcal{T}},W_k,\mathcal{U}) \notag \\
&\overset{(c)}{=}&H(W_i|\{\mathcal{D}_j\}_{j\in\mathcal{T}},\mathcal{D}_k,X_i,\{\mathcal{Z}_j\}_{j\in\mathcal{T}},\mathcal{Z}_k,\{W_j\}_{j\in\mathcal{T}},W_k,\{\mathcal{Y}_{j,m}^{\mathcal{U}}\}_{j\in\mathcal{T}\cup\{k\},m\in[M]}) \notag \\
&\stackrel{(d)}{=}&0,\notag
\end{IEEEeqnarray}
where $(a)$ follows from the fact that $\{\{\mathcal{Z}_j\}_{j\in\mathcal{T}},\mathcal{Z}_k\}$ is independent of $\{W_i,X_i\}$ conditioned on $\{\mathcal{D}_j\}_{j\in\mathcal{T}},\mathcal{D}_k\}$ by \eqref{lemma:condition}; and $(b),(c)$, and $(d)$ follow from arguments similar to those in \eqref{equation:1111}, \eqref{equation:2222}, and \eqref{equation:3333}, respectively. 

Consequently, for any subset of colluding users $\mathcal{T}\subseteq[N]$ with size $0\leq T<N-1$ and any index $k\in[N]\backslash\mathcal{T}$, 
\begin{IEEEeqnarray}{rCl}
&&H(\{W_i\}_{i\in[N]\backslash\mathcal{T}}|\{\mathcal{D}_i\}_{i\in\mathcal{T}},\mathcal{D}_k,\{X_i\}_{i\in[N]\backslash\mathcal{T}}) \notag\\
&=&H(W_k,\{W_i\}_{i\in[N]\backslash(\mathcal{T}\cup\{k\})}|\{\mathcal{D}_i\}_{i\in\mathcal{T}},\mathcal{D}_k,X_k,\{X_i\}_{i\in[N]\backslash(\mathcal{T}\cup\{k\})}) \notag \\
&=&0, \notag
\end{IEEEeqnarray}
which completes the proof of the lemma. 
\end{proof}

\begin{Lemma}
For any given security parameter $0\le T<N-1$, any subset of colluding users $\mathcal{T} \subseteq [N]$ with size $T$, and any index $k \in [N]\backslash\mathcal{T}$,
\begin{IEEEeqnarray}{c}
H(\{\mathcal{Z}_{k,j}\}_{j\in[N]\backslash(\mathcal{T}\cup\{k\})}|\{\mathcal{Z}_{k,j}\}_{j\in\mathcal{T}})
+\sum\limits_{i\in[N]\backslash(\mathcal{T}\cup\{k\})}H(\mathcal{Z}_{i,k}|\{\mathcal{Z}_{i,j}\}_{j\in\mathcal{T}})\geq (N-T)L.\label{lemma3:converse}
\end{IEEEeqnarray}
\end{Lemma}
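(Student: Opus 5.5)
The natural route is to combine the two preceding lemmas, \eqref{converse:lemma:1} and \eqref{converse:lemma:2}. Subtracting them gives
\begin{IEEEeqnarray}{c}
I(\{W_i\}_{i\in[N]\backslash\mathcal{T}};\mathcal{D}_k|\{\mathcal{D}_i\}_{i\in\mathcal{T}},\{X_i\}_{i\in[N]\backslash\mathcal{T}})=(N-T)L. \notag
\end{IEEEeqnarray}
This mutual information is at most $H(\mathcal{D}_k|\{\mathcal{D}_i\}_{i\in\mathcal{T}},\{X_i\}_{i\in[N]\backslash\mathcal{T}})$. Dropping the conditioning on the $X_i$ can only increase entropy, so it is at most $H(\mathcal{D}_k|\{\mathcal{D}_i\}_{i\in\mathcal{T}})$. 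The remaining work is to identify this conditional entropy with the left-hand side of \eqref{lemma3:converse}.

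\textbf{Step 1: remove what is already known.} By \eqref{def:Di}, $\mathcal{D}_k$ consists of $\{\mathcal{Z}_{k,j},\mathcal{Z}_{j,k}\}_{j\ne k}$. The pieces $\mathcal{Z}_{k,j}$ and $\mathcal{Z}_{j,k}$ with $j\in\mathcal{T}$ are contained in $\{\mathcal{D}_i\}_{i\in\mathcal{T}}$, so they contribute nothing. Hence
\begin{IEEEeqnarray}{c}
H(\mathcal{D}_k|\{\mathcal{D}_i\}_{i\in\mathcal{T}})=H(\{\mathcal{Z}_{k,j}\}_{j\in[N]\backslash(\mathcal{T}\cup\{k\})},\{\mathcal{Z}_{i,k}\}_{i\in[N]\backslash(\mathcal{T}\cup\{k\})}|\{\mathcal{D}_i\}_{i\in\mathcal{T}}). \notag
\end{IEEEeqnarray}

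\textbf{Step 2: split the joint entropy by sender.} Applying subadditivity across the senders $k$ and $i\in[N]\backslash(\mathcal{T}\cup\{k\})$ bounds this by
\begin{IEEEeqnarray}{c}
H(\{\mathcal{Z}_{k,j}\}_{j\notin\mathcal{T}\cup\{k\}}|\{\mathcal{D}_i\}_{i\in\mathcal{T}})+\sum_{i}H(\mathcal{Z}_{i,k}|\{\mathcal{D}_{i'}\}_{i'\in\mathcal{T}}). \notag
\end{IEEEeqnarray}
In each term, shrink the conditioning to only the messages sent by the same user into $\mathcal{T}$: $\{\mathcal{Z}_{k,j}\}_{j\in\mathcal{T}}$ in the first term, and $\{\mathcal{Z}_{i,j}\}_{j\in\mathcal{T}}$ in the $i$-th term. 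Removing conditioning can only increase entropy, so the inequality survives. This yields exactly the left-hand side of \eqref{lemma3:converse}, and chaining with the identity above gives the bound $(N-T)L$.

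The main point to verify is the first step: that the difference of \eqref{converse:lemma:1} and \eqref{converse:lemma:2} is indeed the stated conditional mutual information, and that the upper bound by $H(\mathcal{D}_k|\cdot)$ is legitimate. Both follow from the chain rule. No independence across users is needed, because every subsequent step only drops conditioning or uses subadditivity.
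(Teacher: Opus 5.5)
Your proof is correct and follows the paper's argument. The paper likewise combines \eqref{converse:lemma:1} and \eqref{converse:lemma:2} to obtain $(N-T)L\le H(\mathcal{D}_k|\{\mathcal{D}_i\}_{i\in\mathcal{T}})$, and then relates this quantity to the left-hand side. The only difference is that the paper shows $H(\mathcal{D}_k|\{\mathcal{D}_i\}_{i\in\mathcal{T}})$ equals the left-hand side exactly, using the mutual independence of $\mathcal{Z}_1,\ldots,\mathcal{Z}_N$ in \eqref{model:key:inden}. Your subadditivity-plus-dropping-conditioning argument only upper-bounds it, which is all the lemma needs and avoids invoking key independence at that step.
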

\begin{proof}
From \eqref{converse:lemma:1} and \eqref{converse:lemma:2},
\begin{IEEEeqnarray}{rCl}
H(\mathcal{D}_k|\{\mathcal{D}_i\}_{i\in\mathcal{T}})&\geq&H(\mathcal{D}_k|\{\mathcal{D}_i\}_{i\in\mathcal{T}},\{X_i\}_{i\in[N]\backslash\mathcal{T}}) \notag\\
&\geq&I(\{W_i\}_{i\in[N]\backslash\mathcal{T}};\mathcal{D}_k|\{\mathcal{D}_i\}_{i\in\mathcal{T}},\{X_i\}_{i\in[N]\backslash\mathcal{T}}) \notag\\
&=&H(\{W_i\}_{i\in[N]\backslash\mathcal{T}}|\{\mathcal{D}_i\}_{i\in\mathcal{T}},\{X_i\}_{i\in[N]\backslash\mathcal{T}})-H(\{W_i\}_{i\in[N]\backslash\mathcal{T}}|\{\mathcal{D}_i\}_{i\in\mathcal{T}},\mathcal{D}_k,\{X_i\}_{i\in[N]\backslash\mathcal{T}}) \notag\\
&=&(N-T)L. \notag
\end{IEEEeqnarray}
Then, according to the facts that $\{\mathcal{Z}_{i,j}\}_{j\in[N]\backslash\{i\}}$ is a deterministic function of $\mathcal{Z}_i$ for any $i\in[N]$ by \eqref{model:key:distribution}, and that $\mathcal{Z}_1,\mathcal{Z}_2,\ldots,\mathcal{Z}_N$ are mutually independent by \eqref{model:key:inden}, we obtain
\begin{IEEEeqnarray}{rCl}
(N-T)L&\leq& H(\mathcal{D}_k|\{\mathcal{D}_i\}_{i\in\mathcal{T}}) \notag\\
&=& H(\{\mathcal{Z}_{k,j},\mathcal{Z}_{j,k}\}_{j\in[N]\backslash\{k\}}|\{\mathcal{Z}_{i,j},\mathcal{Z}_{j,i}\}_{i\in\mathcal{T},j\in[N]\backslash\{i\}}) \notag\\
&=&H(\{\mathcal{Z}_{k,j}\}_{j\in[N]\backslash(\mathcal{T}\cup\{k\})},\{\mathcal{Z}_{j,k}\}_{j\in[N]\backslash(\mathcal{T}\cup\{k\})}|\{\mathcal{Z}_{i,j},\mathcal{Z}_{j,i}\}_{i\in\mathcal{T},j\in[N]\backslash\{i\}}) \notag\\
&=&H(\{\mathcal{Z}_{k,j}\}_{j\in[N]\backslash(\mathcal{T}\cup\{k\})},\{\mathcal{Z}_{j,k}\}_{j\in[N]\backslash(\mathcal{T}\cup\{k\})}|\{\mathcal{Z}_{j,i}\}_{j\in[N]\backslash\mathcal{T},i\in\mathcal{T}}) \notag\\
&=&H(\{\mathcal{Z}_{k,j}\}_{j\in[N]\backslash(\mathcal{T}\cup\{k\})}|\{\mathcal{Z}_{k,i}\}_{i\in\mathcal{T}})
+H(\{\mathcal{Z}_{j,k}\}_{j\in[N]\backslash(\mathcal{T}\cup\{k\})}|\{\mathcal{Z}_{j,i}\}_{j\in[N]\backslash(\mathcal{T}\cup\{k\}),i\in\mathcal{T}}) \notag\\
&=&H(\{\mathcal{Z}_{k,j}\}_{j\in[N]\backslash(\mathcal{T}\cup\{k\})}|\{\mathcal{Z}_{k,i}\}_{i\in\mathcal{T}})
+\sum\limits_{j\in[N]\backslash(\mathcal{T}\cup\{k\})}H(\mathcal{Z}_{j,k}|\{\mathcal{Z}_{j,i}\}_{i\in\mathcal{T}}).\notag
\end{IEEEeqnarray}
This completes the proof of the lemma.
\end{proof}

\subsection{Maximum Dropout Tolerance in the Mask-Elimination Phase} \label{App:Infeasibility}
In this subsection, we prove that for any achievable secure aggregation scheme, the maximum feasible value of $S_{max}$ is $N-T-1$ for every security parameter $0\leq T<N-1$ and every dropout set $\mathcal{U}\subseteq[N]$ with size $0\leq U\leq N-1$ during the masked-gradient upload phase. From Section \ref{Achievability:proof}, we know that there exists an achievable secure aggregation scheme that can tolerate the presence of $S$ dropout users in the mask-elimination phase for all $0\leq S\leq N-T-1$. Therefore, it remains to prove that no achievable secure aggregation scheme can support $S_{max}\geq N-T$. Since Lemma~\ref{Lemma:converse} holds for every achievable secure aggregation scheme and every security parameter $0\leq T<N-1$, the lemma can be invoked to establish the  converse result. We proceed by contradiction.

Assume that $S_{max}\geq N-T$. Then, for any security parameter $0\leq T<N-1$ and any given dropout set $\mathcal{U}\subseteq[N]$ with size $0\leq U\leq N-1$, there must exist 
a colluding user set $\mathcal{T}\subseteq[N]$ satisfying $|\mathcal{T}|=T$ and $[N]\backslash\mathcal{T}\nsubseteq\mathcal{U}$, as well as a dropout set $\mathcal{S}\subseteq[N]$ satisfying $[N]\backslash\mathcal{S}\subseteq\mathcal{T}$ and $|\mathcal{S}|\leq S_{max}$. Accordingly, we obtain
\begin{IEEEeqnarray}{rCl}
L&=& H(W_{sum}^{\mathcal{U}}) \notag \\
&\overset{(a)}{=}&H(W_{sum}^{\mathcal{U}}) -H(W_{sum}^{\mathcal{U}}|\{X_i\}_{i \in [N]\backslash\mathcal{U} }, \{\mathcal{Y}_{i,m}^{\mathcal{U}}\}_{i \in [N]\setminus{\mathcal{S}},m\in[M_S]}) \notag\\
&=&I(W_{sum}^{\mathcal{U}}; \{X_i\}_{i \in [N]\backslash\mathcal{U} }, \{\mathcal{Y}_{i,m}^{\mathcal{U}}\}_{i \in [N]\setminus{\mathcal{S}},m\in[M_S]}) \notag \\
&\overset{(b)}{\leq}&I(W_{sum}^{\mathcal{U}}; \{X_i\}_{i \in [N]\backslash\mathcal{U} }, \{W_i\}_{i\in\mathcal{T}},\{\mathcal{Z}_i\}_{i\in\mathcal{T}},\{\mathcal{D}_{i}\}_{i\in\mathcal{T}},\mathcal{U}) \notag \\
&\overset{(c)}{=}&I(W_{sum}^{\mathcal{U}}; \{X_i\}_{i \in [N]\backslash\mathcal{U} }, \{W_i\}_{i\in\mathcal{T}},\{\mathcal{Z}_i\}_{i\in\mathcal{T}},\{\mathcal{D}_{i}\}_{i\in\mathcal{T}}) \notag \\
&\overset{(d)}{=}&0, \notag
\end{IEEEeqnarray}
where $(a)$ follows from the correctness constraint in \eqref{eq:correctness};
$(b)$ follows by $[N]\backslash\mathcal{S}\subseteq\mathcal{T}$ and the fact that $\{\mathcal{Y}_{i,m}^{\mathcal{U}}\}_{m\in[M_S]}$ is the deterministic function of $\{W_i,\mathcal{Z}_i, \mathcal{D}_{i},\mathcal{U}\}$ for any $i \in [N]\setminus{\mathcal{S}}$ by \eqref{model:masked:gradients} and \eqref{def:Di};
$(c)$ follows from the fact that the dropout set $\mathcal{U}$ is publicly known and is treated as a fixed subset of users;
$(d)$ is due to Lemma \ref{Lemma:converse}.

This leads to a contradiction because $L\ge 1$. Therefore, for any achievable secure aggregation scheme, the maximum number of dropouts that can be tolerated during the mask-elimination phase must satisfy $S_{max}<N-T$, regardless of the dropout set $\mathcal{U}$ in the masked-gradient upload phase. Since $S_{max}$ is an integer, its maximum feasible value is $N-T-1$.

\subsection{Converse for the Key-Distribution Communication Rate}
In this subsection, we establish a lower bound on the optimal key-distribution communication rate $R_K^*$. For any given security parameter $0\leq T<N-1$, we complete the proof by summing the inequality in \eqref{lemma3:converse} over all subsets $\mathcal{T}\subseteq[N]$ with size $|\mathcal{T}|=T$ and all $k\in[N]\backslash\mathcal{T}$, as follows:
\begin{IEEEeqnarray}{rCl}
\binom{N}{T}(N-T)^2L
&\leq& \sum\limits_{\substack{\mathcal{T}\subseteq [N]\\|\mathcal{T}| = T}}\sum\limits_{k\in[N]\backslash\mathcal{T}}H(\{\mathcal{Z}_{k,j}\}_{j\in[N]\backslash(\mathcal{T}\cup\{k\})}|\{\mathcal{Z}_{k,j}\}_{j\in\mathcal{T}}) \notag\\
&&+\sum\limits_{\substack{\mathcal{T}\subseteq [N]\\|\mathcal{T}| = T}}\sum\limits_{k\in[N]\backslash\mathcal{T}}\sum\limits_{i\in[N]\backslash(\mathcal{T}\cup\{k\})}H(\mathcal{Z}_{i,k}|\{\mathcal{Z}_{i,j}\}_{j\in\mathcal{T}}) \notag\\
&\leq& \sum\limits_{\substack{\mathcal{T}\subseteq [N]\\|\mathcal{T}| = T}}\sum\limits_{k\in[N]\backslash\mathcal{T}}\sum\limits_{j\in[N]\backslash(\mathcal{T}\cup\{k\})}H(\mathcal{Z}_{k,j})
+\sum\limits_{\substack{\mathcal{T}\subseteq [N]\\|\mathcal{T}| = T}}\sum\limits_{k\in[N]\backslash\mathcal{T}}\sum\limits_{i\in[N]\backslash(\mathcal{T}\cup\{k\})}H(\mathcal{Z}_{i,k}) \notag \\
&=&2\cdot \sum\limits_{\substack{\mathcal{T}\subseteq [N]\\|\mathcal{T}| = T}}\sum\limits_{k\in[N]\backslash\mathcal{T}}\sum\limits_{j\in[N]\backslash(\mathcal{T}\cup\{k\})}H(\mathcal{Z}_{k,j}) \notag\\
&=&2\binom{N-2}{T} \sum\limits_{i\in[N]}\sum\limits_{j\in[N]\backslash\{i\}}H(\mathcal{Z}_{i,j}). \notag
\end{IEEEeqnarray}
Then, by \eqref{definition:RK}, we obtain the following lower bound on the optimal key-distribution communication rate $R_K^*$:
 \begin{IEEEeqnarray}{c}  
    R_K^{*}=\frac{\sum_{i\in[N]}\sum_{j\in[N]\backslash\{i\}}H(\mathcal{Z}_{i,j})}{H(W_{sum}^{\mathcal{U}})}\geq\frac{\binom{N}{T}(N-T)^2}{2\binom{N-2}{T}}=\frac{N(N-1)(N-T)}{2(N-T-1)}. \notag
\end{IEEEeqnarray}

\subsection{Converse for the Key Rate}

In this subsection, we derive a lower bound on the optimal key rate $R_Z^*$ and a lower bound on a linear combination of $R_Z^*$ and the optimal key-distribution communication rate $R_K^*$.

For any given security parameter $0\leq T<N-1$, summing the inequality in \eqref{lemma3:converse} over all subsets $\mathcal{T}\subseteq[N]$ with size $|\mathcal{T}|=T$ and all $k\in[N]\backslash\mathcal{T}$ yields
\begin{IEEEeqnarray}{rCl}
\binom{N}{T}(N-T)^2L
&\leq& \sum\limits_{\substack{\mathcal{T}\subseteq [N]\\|\mathcal{T}| = T}}\sum\limits_{k\in[N]\backslash\mathcal{T}}H(\{\mathcal{Z}_{k,j}\}_{j\in[N]\backslash(\mathcal{T}\cup\{k\})}|\{\mathcal{Z}_{k,j}\}_{j\in\mathcal{T}}) \notag\\
&&+\sum\limits_{\substack{\mathcal{T}\subseteq [N]\\|\mathcal{T}| = T}}\sum\limits_{k\in[N]\backslash\mathcal{T}}\sum\limits_{i\in[N]\backslash(\mathcal{T}\cup\{k\})}H(\mathcal{Z}_{i,k}|\{\mathcal{Z}_{i,j}\}_{j\in\mathcal{T}}) \notag \\
&=&\sum\limits_{\substack{\mathcal{T}\subseteq [N]\\|\mathcal{T}| = T}}\sum\limits_{k\in[N]\backslash\mathcal{T}}H(\{\mathcal{Z}_{k,j}\}_{j\in[N]\backslash\{k\}})-\sum\limits_{\substack{\mathcal{T}\subseteq [N]\\|\mathcal{T}| = T}}\sum\limits_{k\in[N]\backslash\mathcal{T}}H(\{\mathcal{Z}_{k,j}\}_{j\in\mathcal{T}}) \notag\\
&&\!+\!\sum\limits_{\substack{\mathcal{T}\subseteq [N]\\|\mathcal{T}| = T}}\sum\limits_{k\in[N]\backslash\mathcal{T}}\sum\limits_{i\in[N]\backslash(\mathcal{T}\cup\{k\})}\!\!H(\{\mathcal{Z}_{i,j}\}_{j\in\mathcal{T}\cup\{k\}})\!-\!\sum\limits_{\substack{\mathcal{T}\subseteq [N]\\|\mathcal{T}| = T}}\sum\limits_{k\in[N]\backslash\mathcal{T}}\sum\limits_{i\in[N]\backslash(\mathcal{T}\cup\{k\})}\!\!H(\{\mathcal{Z}_{i,j}\}_{j\in\mathcal{T}}) \notag\\
&=&\sum\limits_{k\in[N]}\sum\limits_{\substack{\mathcal{T}\subseteq [N]\backslash\{k\}\\|\mathcal{T}|=T}}H(\{\mathcal{Z}_{k,j}\}_{j\in[N]\backslash\{k\}})-\sum\limits_{k\in[N]}\sum\limits_{\substack{\mathcal{T}\subseteq [N]\backslash\{k\}\\|\mathcal{T}|=T}}H(\{\mathcal{Z}_{k,j}\}_{j\in\mathcal{T}}) \notag\\
&&+(T+1)\sum\limits_{i\in[N]}\sum\limits_{\substack{\mathcal{T}\subseteq [N]\backslash\{i\}\\|\mathcal{T}|=T+1}}\!\!\!H(\{\mathcal{Z}_{i,j}\}_{j\in\mathcal{T}}) \!-\!\sum\limits_{i\in[N]}\sum\limits_{\substack{\mathcal{T}\subseteq [N]\backslash\{i\}\\|\mathcal{T}|=T}}\sum\limits_{k\in[N]\backslash(\mathcal{T}\cup\{i\})}\!\!\!H(\{\mathcal{Z}_{i,j}\}_{j\in\mathcal{T}}) \notag \\
&\overset{(a)}{\leq}&\binom{N-1}{T}\sum\limits_{k\in[N]}H(\{\mathcal{Z}_{k,j}\}_{j\in[N]\backslash\{k\}})-\frac{\binom{N-1}{T}}{\binom{N-1}{T+1}}\frac{T}{T+1}\sum\limits_{k\in[N]}\sum\limits_{\substack{\mathcal{T}\subseteq [N]\backslash\{k\}\\|\mathcal{T}|=T+1}}\!\!\!H(\{\mathcal{Z}_{k,j}\}_{j\in\mathcal{T}}) \notag\\
&&+(T+1)\sum\limits_{i\in[N]}\sum\limits_{\substack{\mathcal{T}\subseteq [N]\backslash\{i\}\\|\mathcal{T}|=T+1}}\!\!\!H(\{\mathcal{Z}_{i,j}\}_{j\in\mathcal{T}}) \notag\\
&&-(N-T-1)\frac{\binom{N-1}{T}}{\binom{N-1}{T+1}}\frac{T}{T+1}\sum\limits_{i\in[N]}\sum\limits_{\substack{\mathcal{T}\subseteq [N]\backslash\{i\}\\|\mathcal{T}|=T+1}}\!\!\!H(\{\mathcal{Z}_{i,j}\}_{j\in\mathcal{T}}) \notag\\
&=&\binom{N-1}{T}\sum\limits_{k\in[N]}H(\{\mathcal{Z}_{k,j}\}_{j\in[N]\backslash\{k\}})+\frac{N-2T-1}{N-T-1}\sum\limits_{i\in[N]}\sum\limits_{\substack{\mathcal{T}\subseteq [N]\backslash\{i\}\\|\mathcal{T}|=T+1}}\!\!\!H(\{\mathcal{Z}_{i,j}\}_{j\in\mathcal{T}}), \label{converse:N2T+1}
\end{IEEEeqnarray}
where $(a)$ is due to the well-known Han’s inequality \cite[Theorem 17.6.1]{cover1999elements} such that for any $i\in[N]$,
\begin{IEEEeqnarray}{c}
\sum\limits_{\substack{\mathcal{T}\subseteq [N]\backslash\{i\}\\|\mathcal{T}|=T}}H(\{\mathcal{Z}_{i,j}\}_{j\in\mathcal{T}})\geq
\frac{\binom{N-1}{T}}{\binom{N-1}{T+1}}\frac{T}{T+1}\sum\limits_{\substack{\mathcal{T}\subseteq [N]\backslash\{i\}\\|\mathcal{T}|=T+1}}H(\{\mathcal{Z}_{i,j}\}_{j\in\mathcal{T}}). \notag
\end{IEEEeqnarray}

In the case of $N\geq 2T+1$, the inequality in \eqref{converse:N2T+1} can be further derived as follows:
\begin{IEEEeqnarray}{rCl}
\binom{N}{T}(N-T)^2L
&\leq&\binom{N-1}{T}\sum\limits_{k\in[N]}H(\{\mathcal{Z}_{k,j}\}_{j\in[N]\backslash\{k\}})+\frac{N-2T-1}{N-T-1}\sum\limits_{i\in[N]}\sum\limits_{\substack{\mathcal{T}\subseteq [N]\backslash\{i\}\\|\mathcal{T}|=T+1}}\!\!\!H(\{\mathcal{Z}_{i,j}\}_{j\in\mathcal{T}}) \notag\\
&\leq&\binom{N-1}{T}\sum\limits_{k\in[N]}H(\{\mathcal{Z}_{k,j}\}_{j\in[N]\backslash\{k\}})+\frac{N-2T-1}{N-T-1}\binom{N-1}{T+1}\sum\limits_{i\in[N]}H(\{\mathcal{Z}_{i,j}\}_{j\in[N]\backslash\{i\}}) \notag\\
&=&\frac{N-T}{T+1}\binom{N-1}{T}\sum\limits_{i\in[N]}H(\{\mathcal{Z}_{i,j}\}_{j\in[N]\backslash\{i\}}). \notag
\end{IEEEeqnarray}
Then, by \eqref{model:key:distribution} and \eqref{definition:RZ}, we can obtain the following lower bound on the optimal key rate $R_Z^*$:
\begin{IEEEeqnarray}{rCl}
R_Z^* &=&\frac{\sum_{i\in[N]}H(\mathcal{Z}_i)}{H(W_{sum}^{\mathcal{U}})} \notag\\
&\geq&\frac{\sum_{i\in[N]}H(\{\mathcal{Z}_{i,j}\}_{j\in[N]\backslash\{i\}})}{L}\notag\\
&\geq& \frac{\binom{N}{T}(N-T)^2}{\frac{N-T}{T+1}\binom{N-1}{T}} \notag\\
&=&N(T+1). \notag
\end{IEEEeqnarray}

In the case of $T+1< N<2T+1$, the lower bound on $R_Z^*$ can be obtained by transforming the inequality in \eqref{converse:N2T+1} into the following form:
\begin{IEEEeqnarray}{rCl}
\binom{N}{T}(N-T)^2L
&\leq&\binom{N-1}{T}\sum\limits_{k\in[N]}H(\{\mathcal{Z}_{k,j}\}_{j\in[N]\backslash\{k\}})-\frac{2T+1-N}{N-T-1}\sum\limits_{i\in[N]}\sum\limits_{\substack{\mathcal{T}\subseteq [N]\backslash\{i\}\\|\mathcal{T}|=T+1}}\!\!\!H(\{\mathcal{Z}_{i,j}\}_{j\in\mathcal{T}}) \notag\\
&\overset{(a)}{\leq}&\binom{N-1}{T}\sum\limits_{k\in[N]}\!H(\{\mathcal{Z}_{k,j}\}_{j\in[N]\backslash\{k\}})\!-\!\frac{2T+1-N}{N-T-1}\frac{\binom{N-1}{T+1}(T+1)}{N-1}\sum\limits_{i\in[N]}\!H(\{\mathcal{Z}_{i,j}\}_{j\in[N]\backslash\{i\}}) \notag\\
&=&\frac{2(N-T-1)}{N-1}\binom{N-1}{T}\sum\limits_{i\in[N]}H(\{\mathcal{Z}_{i,j}\}_{j\in[N]\backslash\{i\}}) \notag\\
&\overset{(b)}{\leq}&\frac{2(N-T-1)}{N-1}\binom{N-1}{T}L\cdot R_Z^*, \notag
\end{IEEEeqnarray}
where $(a)$ follows from the Han’s inequality \cite[Theorem 17.6.1]{cover1999elements} such that for any $i\in[N]$,
\begin{IEEEeqnarray}{c}
\frac{1}{\binom{N-1}{T+1}}\sum\limits_{\substack{\mathcal{T}\subseteq [N]\backslash\{i\}\\|\mathcal{T}|=T+1}}\frac{H(\{\mathcal{Z}_{i,j}\}_{j\in\mathcal{T}})}{T+1} \geq
\frac{H(\{\mathcal{Z}_{i,j}\}_{j\in[N]\backslash\{i\}})}{N-1}, \notag
\end{IEEEeqnarray}
and $(b)$ follows by \eqref{model:key:distribution} and \eqref{definition:RZ}. Therefore, in the case of $T+1< N<2T+1$, we have
\begin{IEEEeqnarray}{c}
R_Z^*\geq\frac{\binom{N}{T}(N-T)^2}{\frac{2(N-T-1)}{N-1}\binom{N-1}{T}}=\frac{N(N-1)(N-T)}{2(N-T-1)}. \notag
\end{IEEEeqnarray}
Consequently, the lower bound on the optimal key rate $R_Z^*$ is given by
\begin{IEEEeqnarray}{c}
R_Z^*\geq\left\{
\begin{array}{@{}ll}
\frac{N(N-1)(N-T)}{2(N-T-1)},&\text{if}\,\,\, T+1< N<2T+1\\
N(T+1),&\text{if}\,\,\, N\geq 2T+1
\end{array}\right..  \notag
\end{IEEEeqnarray}  

Moreover, in the case of $N\geq 2T+1$, the inequality in \eqref{converse:N2T+1} can also be developed as the following form:
\begin{IEEEeqnarray}{rCl}
\binom{N}{T}(N-T)^2L
&\leq&\binom{N-1}{T}\sum\limits_{k\in[N]}H(\{\mathcal{Z}_{k,j}\}_{j\in[N]\backslash\{k\}})+\frac{N-2T-1}{N-T-1}\sum\limits_{i\in[N]}\sum\limits_{\substack{\mathcal{T}\subseteq [N]\backslash\{i\}\\|\mathcal{T}|=T+1}}\!\!\!H(\{\mathcal{Z}_{i,j}\}_{j\in\mathcal{T}}) \notag\\
&\leq&\binom{N-1}{T}\sum\limits_{k\in[N]}H(\{\mathcal{Z}_{k,j}\}_{j\in[N]\backslash\{k\}})+\frac{N-2T-1}{N-T-1}\sum\limits_{i\in[N]}\sum\limits_{\substack{\mathcal{T}\subseteq [N]\backslash\{i\}\\|\mathcal{T}|=T+1}}\sum\limits_{j\in\mathcal{T}}H(\mathcal{Z}_{i,j}) \notag\\
&=&\binom{N-1}{T}\sum\limits_{k\in[N]}H(\{\mathcal{Z}_{k,j}\}_{j\in[N]\backslash\{k\}})+\frac{N-2T-1}{N-T-1}\binom{N-2}{T}\sum\limits_{i\in[N]}\sum\limits_{j\in[N]\backslash\{i\}}H(\mathcal{Z}_{i,j}). \notag
\end{IEEEeqnarray}
From \eqref{definition:RZ} and \eqref{definition:RK}, we obtain the following lower bound on a linear combination of $R_Z^*$ and $R_K^*$:
\begin{IEEEeqnarray}{c}
N(N-1)(N-T)\leq (N-1)R_Z^*+(N-2T-1)R_K^*. \notag
\end{IEEEeqnarray}

\subsection{Converse for the Masked-Gradient Communication Rate} \label{App:ProofRu}
In this subsection, we establish a lower bound on the optimal masked-gradient communication rate $R_{\mathcal{U}}^*$ for each given dropout set $\mathcal{U}$ with size $0\leq U\leq N-1$ during the masked-gradient upload phase.

For any set of colluding users $\mathcal{T}$ with size $0\leq |\mathcal{T}|<N-1$ and any given user $k\in[N]$,  let $\mathcal{U}\subseteq[N]$ and $\mathcal{S}\subseteq[N]$ denote the dropout sets satisfying $0\leq |\mathcal{U}|<N-1,k\notin\mathcal{U},k\in\mathcal{S}$, and $1\leq |\mathcal{S}| \le N-T-1$. Moreover, let $\mathcal{U}' = \mathcal{U} \cup \{k\}$. Then, the masked gradient $X_k$ sent by user $k\in[N]$ must satisfy
\begin{IEEEeqnarray}{rCl}
H(X_k) & \ge& H(X_k | \{X_i\}_{i \in [N]\setminus\mathcal{U}'}, \{\mathcal{Y}_{i,m}^{\mathcal{U}}\}_{i \in [N]\setminus\mathcal{S},m\in[M_S]}, W_{sum}^{\mathcal{U}'})  \IEEEeqnarraynumspace \notag\\
&\ge& I(X_k ; W_k | \{X_i\}_{i \in [N]\setminus\mathcal{U}'}, \{\mathcal{Y}_{i,m}^{\mathcal{U}}\}_{i \in [N]\setminus\mathcal{S},m\in[M_S]}, W_{sum}^{\mathcal{U}'})  \notag\\
&=& H(W_k | \{X_i\}_{i \in [N]\setminus\mathcal{U}'}, \{\mathcal{Y}_{i,m}^{\mathcal{U}}\}_{i \in [N]\setminus\mathcal{S},m\in[M_S]}, W_{sum}^{\mathcal{U}'}) \nonumber \\
&&-H(W_k | \{X_i\}_{i \in [N]\setminus\mathcal{U}}, \{\mathcal{Y}_{i,m}^{\mathcal{U}}\}_{i \in [N]\setminus\mathcal{S},m\in[M_S]}, W_{sum}^{\mathcal{U}'})  \notag \\
&\overset{(a)}{\geq}& H(W_k | \{W_i\}_{i \in [N]\setminus\mathcal{U}'},\{\mathcal{Z}_i\}_{i\in[N]}, \{W_{i}\}_{i \in [N]\setminus\mathcal{S}},\mathcal{U}, W_{sum}^{\mathcal{U}'}) \nonumber \\
&&-H(W_k | \{X_i\}_{i \in [N]\setminus\mathcal{U}}, \{\mathcal{Y}_{i,m}^{\mathcal{U}}\}_{i \in [N]\setminus\mathcal{S},m\in[M_S]}, W_{sum}^{\mathcal{U}'})   \notag\\
&\overset{(b)}{=}& H(W_k | \{W_i\}_{i \in [N]\setminus\mathcal{U}'},\{\mathcal{Z}_i\}_{i\in[N]}, \{W_{i}\}_{i \in [N]\setminus\mathcal{S}}, W_{sum}^{\mathcal{U}'}) \nonumber \\
&&-H(W_k | \{X_i\}_{i \in [N]\setminus\mathcal{U}}, \{\mathcal{Y}_{i,m}^{\mathcal{U}}\}_{i \in [N]\setminus\mathcal{S},m\in[M_S]}, W_{sum}^{\mathcal{U}'})   \notag\\
&\overset{(c)}{=}& L -H(W_k | \{X_i\}_{i \in [N]\setminus\mathcal{U}}, \{\mathcal{Y}_{i,m}^{\mathcal{U}}\}_{i \in [N]\setminus\mathcal{S},m\in[M_S]}, W_{sum}^{\mathcal{U}'})  \notag\\
&\overset{(d)}{=}& L -H(W_k | \{X_i\}_{i \in [N]\setminus\mathcal{U}},\{\mathcal{Y}_{i,m}^{\mathcal{U}}\}_{i \in [N]\setminus\mathcal{S},m\in[M_S]},W_{sum}^{\mathcal{U}}, W_{sum}^{\mathcal{U}'}) \notag\\
&\overset{(e)}{=}& L, \label{eq:ru_final_bound}
\end{IEEEeqnarray}
where $(a)$ follows from the facts that $X_i$ is the function of $W_i$ and $\{\mathcal{Z}_j\}_{j\in[N]}$ for any $i\in[N]\backslash\mathcal{U}'$ by \eqref{model:key:distribution}-\eqref{model:masked:graidient:2} and that $\{\mathcal{Y}_{i,m}^{\mathcal{U}}\}_{m\in[M_S]}$ is the  function of $\{W_i,\{\mathcal{Z}_j\}_{j\in[N]},\mathcal{U}\}$ for any $i \in [N]\backslash\mathcal{S}$ by \eqref{model:key:distribution} and \eqref{model:masked:gradients};
$(b)$ follows by the fact that the dropout set $\mathcal{U}$ is publicly known and is treated as a fixed subset of users;
$(c)$ follows from the fact that the random variables $\{\{W_i\}_{i\in[N]},\{\mathcal{Z}_i\}_{i\in[N]}\}$ are generated independently of each other by \eqref{model:key:inden} and therefore $W_k$ is independent of $\{W_i\}_{i \in [N]\setminus\mathcal{U}'},\{\mathcal{Z}_i\}_{i\in[N]}, \{W_{i}\}_{i \in [N]\setminus\mathcal{S}}$, and $W_{sum}^{\mathcal{U}'}$;
$(d)$ is due to the correctness constraint in \eqref{eq:correctness};
$(e)$ is because $W_{sum}^{\mathcal{U}}-W_{sum}^{\mathcal{U}'}=W_k$.
This implies that, in order to ensure that the server can correctly recover the desired gradient sum $W_{sum}^{\mathcal{U}}$ for every feasible dropout set $\mathcal{U}\subseteq[N]$ with $0\leq|\mathcal{U}|\leq N-1$, the masked gradient $X_k$
transmitted by each user $k\in[N]$ must have entropy at least $L$.
 
Therefore, for any set of dropout users $\mathcal{U}\subseteq[N]$ with $|\mathcal{U}|=U$ and $0\leq U\leq N-1$, by \eqref{performance:RU} and \eqref{eq:ru_final_bound}, we obtain the following lower bound on the optimal masked-gradient communication rate $R_{\mathcal{U}}^*$:
\begin{IEEEeqnarray}{c}
R_{\mathcal{U}}^*=\frac{\sum\limits_{k\in [N]\backslash \mathcal{U}}H(X_k)}{L}\geq N-U.  \notag
\end{IEEEeqnarray}

\subsection{Converse for the Mask-Elimination Communication Rate}\label{App:ProofRs}

In this subsection, for any given dropout set $\mathcal{U}\subseteq[N]$ with size $0\leq U\leq N-1$ in the masked-gradient upload phase, we establish a lower bound on the optimal masked-elimination communication rate $R_S^{\mathcal{U}*}$ for $1\leq S\leq N-T-1$ and for $S=0$ with $T<N-U$. The case $S=0$ with $T\geq N-U$ remains open.

In the case $1\leq S\leq N-T-1$, for each such dropout set $\mathcal{U}\subseteq[N]$ with size $0\leq U\leq N-1$, let $\mathcal{S}\subseteq[N]$ be the dropout set in the mask-elimination phase such that $\mathcal{S}\cap([N]\backslash\mathcal{U})\neq\emptyset$. Then, for any set of colluding users $\mathcal{T} \subseteq [N]\backslash\mathcal{S}$ with  $|\mathcal{T}|=T$, we must have $[N]\backslash\mathcal{T}\nsubseteq\mathcal{U}$. Therefore, we obtain
\begin{IEEEeqnarray}{rCl}
L
&\overset{(a)}{=}&I(W_{sum}^{\mathcal{U}}; \{X_i\}_{i \in [N]\backslash\mathcal{U} }, \{\mathcal{Y}_{i,m}^{\mathcal{U}}\}_{i \in[N]\backslash\mathcal{S} ,m\in[M_S]}) \notag \\
&=&I(W_{sum}^{\mathcal{U}}; \{X_i\}_{i \in [N]\backslash\mathcal{U} }, \{\mathcal{Y}_{i,m}^{\mathcal{U}}\}_{i \in\mathcal{T} ,m\in[M_S]},\{\mathcal{Y}_{i,m}^{\mathcal{U}}\}_{i \in[N]\backslash(\mathcal{S}\cup\mathcal{T}) ,m\in[M_S]}) \notag \\
&=& I(W_{sum}^{\mathcal{U}}; \{X_i\}_{i \in [N]\backslash\mathcal{U} }, \{\mathcal{Y}_{i,m}^{\mathcal{U}}\}_{i \in\mathcal{T} ,m\in[M_S]}) \notag\\
&&+I(W_{sum}^{\mathcal{U}}; \{\mathcal{Y}_{i,m}^{\mathcal{U}}\}_{i \in[N]\backslash(\mathcal{S}\cup\mathcal{T}), m\in[M_S]}|\{X_i\}_{i \in [N]\backslash\mathcal{U} }, \{\mathcal{Y}_{i,m}^{\mathcal{U}}\}_{i \in\mathcal{T} ,m\in[M_S]}) \notag \\
&\overset{(b)}{\leq}&I(W_{sum}^{\mathcal{U}}; \{X_i\}_{i \in [N]\backslash\mathcal{U} }, \{W_i\}_{i\in\mathcal{T}},\{\mathcal{Z}_i\}_{i\in\mathcal{T}},\{\mathcal{D}_{i}\}_{i\in\mathcal{T}},\mathcal{U}) \notag\\
&&+I(W_{sum}^{\mathcal{U}}; \{\mathcal{Y}_{i,m}^{\mathcal{U}}\}_{i \in[N]\backslash(\mathcal{S}\cup\mathcal{T}), m\in[M_S]}|\{X_i\}_{i \in [N]\backslash\mathcal{U} }, \{\mathcal{Y}_{i,m}^{\mathcal{U}}\}_{i \in\mathcal{T} ,m\in[M_S]}) \notag\\
&\overset{(c)}{=}&I(W_{sum}^{\mathcal{U}}; \{\mathcal{Y}_{i,m}^{\mathcal{U}}\}_{i \in[N]\backslash(\mathcal{S}\cup\mathcal{T}), m\in[M_S]}|\{X_i\}_{i \in [N]\backslash\mathcal{U} }, \{\mathcal{Y}_{i,m}^{\mathcal{U}}\}_{i \in\mathcal{T} ,m\in[M_S]}) \notag \\
&\leq&H(\{\mathcal{Y}_{i,m}^{\mathcal{U}}\}_{i \in[N]\backslash(\mathcal{S}\cup\mathcal{T}), m\in[M_S]}|\{X_i\}_{i \in [N]\backslash\mathcal{U} }, \{\mathcal{Y}_{i,m}^{\mathcal{U}}\}_{i \in\mathcal{T} ,m\in[M_S]}) \notag \\
&\leq&H(\{\mathcal{Y}_{i,m}^{\mathcal{U}}\}_{i \in[N]\backslash(\mathcal{S}\cup\mathcal{T}), m\in[M_S]}), \label{converse:3}
\end{IEEEeqnarray}
where $(a)$ follows from the correctness constraint in \eqref{eq:correctness};
$(b)$ follows from the fact that $\{\mathcal{Y}_{i,m}^{\mathcal{U}}\}_{m\in[M_S]}$ is the deterministic function of $\{W_i,\mathcal{Z}_i, \mathcal{D}_{i},\mathcal{U}\}$ for any $i \in \mathcal{T}$ by \eqref{model:masked:gradients} and \eqref{def:Di};
$(c)$ follows by Lemma \ref{Lemma:converse} and the fact that the dropout set $\mathcal{U}$ is publicly known and is treated as a fixed subset of users. 
Furthermore, 
by summing the inequality in \eqref{converse:3} over all subsets $\mathcal{T} \subseteq [N]\backslash\mathcal{S}$ with size $|\mathcal{T}|=T$, 
\begin{IEEEeqnarray}{rCl}
\binom{N-S}{T} L &\leq& \sum\limits_{\mathcal{T} \subseteq [N]\backslash\mathcal{S},|\mathcal{T}|=T} H(\{\mathcal{Y}_{i,m}^{\mathcal{U}}\}_{i \in[N]\backslash(\mathcal{S}\cup\mathcal{T}), m\in[M_S]}) \notag \\
&\leq&\sum\limits_{\mathcal{T} \subseteq [N]\backslash\mathcal{S},|\mathcal{T}|=T}\sum\limits_{i \in[N]\backslash(\mathcal{S}\cup\mathcal{T})} H(\{\mathcal{Y}_{i,m}^{\mathcal{U}}\}_{m\in[M_S]}) \notag \\
&=&\binom{N-S-1}{T}\sum\limits_{i \in[N]\backslash\mathcal{S}} H(\{\mathcal{Y}_{i,m}^{\mathcal{U}}\}_{m\in[M_S]}). \label{converse:4} \IEEEeqnarraynumspace
\end{IEEEeqnarray}
Therefore, for any given dropout set $\mathcal{S}\subseteq[N]$ satisfying $\mathcal{S}\cap([N]\backslash\mathcal{U})\neq\emptyset$  and $1\le S\le N-T-1$, by \eqref{performance:RS} and \eqref{converse:4}, we obtain the following lower bound on the optimal masked-elimination communication rate $R_S^{\mathcal{U}*}$:
\begin{IEEEeqnarray}{rCl}
R_S^{\mathcal{U}*}&=&\frac{\max\limits_{\mathcal{S}'\subseteq[N],|\mathcal{S}'|=S}\sum\limits_{i\in[N]\backslash\mathcal{S}'}\sum\limits_{m\in[M_{S}]}H(\mathcal{Y}_{i,m}^{\mathcal{U}})}{H(W_{sum}^{\mathcal{U}})} \notag \\
&\geq&\frac{\sum\limits_{i\in[N]\backslash\mathcal{S}}\sum\limits_{m\in[M_S]}H(\mathcal{Y}_{i,m}^{\mathcal{U}})}{L} \notag\\
&\geq&\frac{\sum\limits_{i\in[N]\backslash\mathcal{S}}H(\{\mathcal{Y}_{i,m}^{\mathcal{U}}\}_{m\in[M_S]})}{L} \notag\\
&\geq&\frac{\binom{N-S}{T}}{\binom{N-S-1}{T}} \notag\\
&=&\frac{N-S}{N-S-T}. \notag
\end{IEEEeqnarray}

In the case $S=0$ with $T<N-U$, we have $\mathcal{S}=\emptyset$. Then, for each given dropout set $\mathcal{U}\subseteq[N]$ with size $0\leq U\leq N-1$ and any set of colluding users $\mathcal{T} \subseteq [N]\backslash\mathcal{S}$ with $|\mathcal{T}|=T$, we must have $[N]\backslash\mathcal{T}\nsubseteq\mathcal{U}$. Therefore, the above proof remains fully valid in this case, and accordingly, we obtain 
\begin{IEEEeqnarray}{rCl}
R_S^{\mathcal{U}*}\geq\frac{N-S}{N-S-T}. \notag
\end{IEEEeqnarray}

\section{Comparison with Related Work}\label{section:Comparison}
The works most closely related to this paper are \cite{zhao2022information,so2022lightsecagg,zhang2025securegroup}, which consider the $T$-colluding information-theoretic secure aggregation problem with user dropouts.
Although their formulations impose a set-inclusion relation between the dropout sets in the masked-gradient upload and mask-elimination phases, their proposed schemes can also directly support the dropout scenarios considered in this paper. This is because mask-elimination messages in these schemes are generated from the established correlated keys, allowing a user to participate in the mask-elimination phase regardless of whether it successfully uploaded its masked gradient in the preceding phase.
Table~\ref{tab:comparison} compares the performance of these schemes with that of the proposed scheme. 
For a fair comparison, we consider the setting in which all schemes are required to support all feasible dropout scenarios satisfying $0\leq U\leq N-1$ and $0\leq S\leq N-T-1$.
The key-distribution communication rate is not reported for the schemes in \cite{zhao2022information,zhang2025securegroup}. The scheme in \cite{zhao2022information} assumes that the correlated random keys are established by a trusted third party, whereas the scheme in \cite{zhang2025securegroup} assumes a prescribed symmetric groupwise key-distribution structure. Consequently, the communication cost required to establish the correlated randomness is not explicitly modeled in either scheme.

Compared with the schemes in \cite{zhao2022information,so2022lightsecagg,zhang2025securegroup}, the main advantage of the proposed scheme is its ability to reduce the mask-elimination communication rate by adapting to the realized number of dropouts. To tolerate every feasible number of dropouts $0\leq S\leq N-T-1$ during the mask-elimination phase, the previous
schemes are designed for the worst-case value $S=N-T-1$ and therefore incur a fixed mask-elimination communication rate of $T+1$, regardless of the realized value of $S$. By contrast, the proposed scheme dynamically adjusts its communication cost to the realized number of dropouts and achieves the mask-elimination communication rate $\frac{N-S}{N-T-S}$ for every feasible $S$. This rate is no larger than $T+1$, with equality in the worst-case scenario $S=N-T-1$. For $T>0$, it is strictly smaller
whenever $S<N-T-1$. 
It is worth emphasizing that this communication gain arises from adapting the mask-elimination communication to the realized dropout scenario, rather than from relaxing the set-inclusion constraint between the dropout sets in the masked-gradient upload and mask-elimination phases.

Another important distinction between this paper and the existing works in \cite{zhao2022information,so2022lightsecagg,zhang2025securegroup}
lies in the fundamental-limit characterization. These works do not investigate the fundamental limits of either the key rate or the key-distribution communication rate. In contrast, we establish, for the first time, information-theoretic lower bounds on these two rates and prove that the rates achieved by the proposed scheme are either optimal or within a multiplicative factor of two of their respective information-theoretic optima.
Moreover, existing works characterize the mask-elimination communication rate by taking the worst case over the dropout scenarios in both the masked-gradient upload and mask-elimination phases. Such a joint worst-case characterization only establishes  the optimality of a single worst-case mask-elimination communication rate. It does not reveal how the required mask-elimination
communication varies with the particular dropout set realized during the masked-gradient upload phase.
In contrast, for each given dropout set realized during the masked-gradient upload phase, we characterize the fundamental limits of the masked-gradient communication rate and the mask-elimination communication rate. The resulting converse bounds establish the optimality of the proposed scheme in terms of the masked-gradient communication rate for all feasible parameters and in terms of the mask-elimination communication rate whenever $1\leq S\leq N-T-1$ or
$(S=0 \text{ and } T<N-U)$. Together with the converse bounds on the
key rate and key-distribution communication rate, these results provide a unified fundamental-limit characterization of all four rates considered in this paper.


\begin{table}[htbp]
\centering
\caption{Performance comparison between our proposed scheme and the secure aggregation schemes in \cite{zhao2022information,so2022lightsecagg,zhang2025securegroup}.}
\label{tab:comparison}
\begin{NiceTabular}{@{\;}l@{\;}c@{\;}c@{\;}c@{\;}c@{\;}}[hvlines]
\rule[-3.0mm]{0pt}{8mm}\diagbox{\raisebox{0.5mm}{Performance Metrics}}{Secure Aggregation Schemes}
&  Zhao-Sun Scheme \cite{zhao2022information} & So et al.'s Scheme \cite{so2022lightsecagg} & Zhang et al.'s Scheme \cite{zhang2025securegroup} & Our Scheme \\
Key Rate $R_Z$ & $N+T\sum_{g=T+1}^{N}\binom{N}{g}$ & $N(T+1)$ & $\binom{N}{N-T}(N-T)$ & $N(T+1)$ \\ 
Key-Distribution Communication Rate $R_K$ & \cmarkk & $N(N-1)$ & \cmarkk & $N(N-1)$ \\
Masked-Gradient Communication Rate $R_{\mathcal{U}}$ & $N-U$ & $N-U$ & $N-U$ & $N-U$ \\ 
Mask-Elimination Communication Rate $R_S^{\mathcal{U}}$ & $T+1$ & $T+1$ & $T+1$ & $\frac{N-S}{N-T-S}$ \\
Required Finite Field $\mathbb{F}_q$ & $q\geq N+T+1$ & $q\geq N$ & $q\rightarrow\infty$ & $q\geq N$ \\
Key Establishment Mechanism & Trusted Third Party & User Collaboration & Groupwise Key & User Collaboration \\
Construction Type & Explicit & Explicit & Probabilistic & Explicit \end{NiceTabular}
\end{table}

\section{Conclusion}\label{conclusion}
In this paper, we investigated secure aggregation with user dropouts under the $T$-colluding security constraint. We formulated an adaptive secure aggregation framework that  supports all feasible dropout scenarios, thereby enabling the communication strategy to dynamically adapt to the actual dropout scenario observed during protocol execution. Within this framework, we established converse bounds on four fundamental resources: the randomness required to generate correlated keys, the communication required to distribute them, the communication for uploading masked gradients, and the additional communication required to eliminate residual masks. In particular, the converse bounds on the latter two resources are characterized separately for each dropout set realized during the masked-gradient upload phase.
For a system with $N$ users, we developed an adaptive secure aggregation scheme that achieves the rate tuple
$\big(N(T+1), N(N-1), N-U, \frac{N-S}{N-T-S} \big)$
for all feasible dropout scenarios, where $U$ and $S$ denote the actual numbers of dropout users during protocol execution in the masked-gradient upload and mask-elimination phases, respectively. By comparing these achievable rates with the converse bounds, we showed that the proposed scheme achieves the optimal masked-gradient communication rate for every feasible dropout set. It also achieves the optimal mask-elimination communication rate whenever $1\leq S\leq N-T-1$ or $(S=0 \text{ and } T<N-U)$. Moreover, the achieved key rate and key-distribution communication rate are each either information-theoretically optimal or within a multiplicative factor of two of their respective optima. 
Overall, this paper provides the first fundamental characterization of the communication and randomness resources required for secure aggregation with user dropouts. 
Future work includes determining the optimal mask-elimination communication rate in the remaining regime where $S=0$ and $T\geq N-U$, closing the remaining gaps in the key rate and the key-distribution communication rate, and extending the proposed framework to incorporate more practical system constraints.

\appendix[Security Proof of the Proposed Scheme]\label{appendix:security}

In this appendix, we formally prove that the proposed secure aggregation scheme satisfies the privacy constraint in \eqref{eq:security}.
For any colluding user set $\mathcal{T}\subseteq[N]$ with $|\mathcal{T}|\leq T$, let $\widetilde{\mathcal{T}}\subseteq[N]$ be an arbitrary set satisfying $|\widetilde{\mathcal{T}}|=T$ and $\mathcal{T}\subseteq\widetilde{\mathcal{T}}$. Then, for any given dropout scenario satisfying $\mathcal{U}\subseteq[N]$ with size $0\leq U\leq N-1$ and $\mathcal{S}\subseteq[N]$ with size $0\leq S\leq N-T-1$, the proposed secure aggregation scheme satisfies 
\begin{IEEEeqnarray}{rCl}
&& I\Big(\{{W}_i\}_{i \in [N]} ; \{\mathcal{Z}_i\}_{i\in\mathcal{T}},\{\mathcal{Z}_{i,j},\mathcal{Z}_{j,i}\}_{j\in[N]\backslash\{i\},i\in\mathcal{T}},\{X_i\}_{i\in[N]},\{\mathcal{Y}_{i,m}^{\mathcal{U}}\}_{i\in[N],m\in[M]} \Big| W_{sum}^{\mathcal{U}}, \{W_i\}_{i\in\mathcal{T}} \Big)  \notag\\
&\overset{(a)}{\leq}& I\Big(\{{W}_i\}_{i \in [N]} ; \{\mathcal{Z}_i\}_{i\in\mathcal{T}},\{\mathcal{Z}_{j,i}\}_{j\in[N],i\in\widetilde{\mathcal{T}}},\{X_i\}_{i\in[N]},\{\mathcal{Y}_{i,m}^{\mathcal{U}}\}_{i\in[N],m\in[M]} \Big| W_{sum}^{\mathcal{U}}, \{W_i\}_{i\in\mathcal{T}} \Big)  \notag\\
&\overset{(b)}{=}&I\Big(\{{W}_i\}_{i \in [N]} ; \{C_i\}_{i\in\mathcal{T}}, \{D^{(i,m)}_{k,\ell}\}_{i\in\mathcal{T},m\in[M],k\in[T],\ell\in[\lambda_m]},
\{f^{(j,m)}_{\ell}(\alpha_i)\}_{j\in[N],m\in[M],\ell\in[\lambda_m],i\in\widetilde{\mathcal{T}}},\notag\\
&&\quad\qquad\qquad\qquad\qquad\qquad\qquad\quad \{W_i+C_i\}_{i\in[N]},\{f^{(\mathcal{U},m)}_{\ell}(\alpha_i)\}_{m\in[M],\ell\in[\lambda_m],i\in[N]} \Big| W_{sum}^{\mathcal{U}}, \{W_i\}_{i\in\mathcal{T}} \Big) \notag\\
&\overset{(c)}{=}&I\Big(\{{W}_i\}_{i \in [N]} ; \{C_i\}_{i\in\mathcal{T}}, \{D^{(i,m)}_{k,\ell}\}_{i\in\mathcal{T},m\in[M],k\in[T],\ell\in[\lambda_m]},
\{f^{(j,m)}_{\ell}(\alpha_i)\}_{j\in[N],m\in[M],\ell\in[\lambda_m],i\in\widetilde{\mathcal{T}}},\sum_{j\in[N]\backslash\mathcal{U}}C_j,\notag\\
&&\quad \{\sum_{j\in[N]\backslash\mathcal{U}}f^{(j,m)}_{\ell}(\alpha_i)\}_{m\in[M],\ell\in[\lambda_m],i\in\widetilde{\mathcal{T}}}, \{W_i+C_i\}_{i\in[N]},\{f^{(\mathcal{U},m)}_{\ell}(\alpha_i)\}_{m\in[M],\ell\in[\lambda_m],i\in[N]} \Big| W_{sum}^{\mathcal{U}}, \{W_i\}_{i\in\mathcal{T}} \Big) \notag\\
&\overset{(d)}{=}&I\Big(\{{W}_i\}_{i \in [N]} ; \{C_i\}_{i\in\mathcal{T}}, \{D^{(i,m)}_{k,\ell}\}_{i\in\mathcal{T},m\in[M],k\in[T],\ell\in[\lambda_m]},
\{f^{(j,m)}_{\ell}(\alpha_i)\}_{j\in[N],m\in[M],\ell\in[\lambda_m],i\in\widetilde{\mathcal{T}}},\sum_{j\in[N]\backslash\mathcal{U}}C_j,\notag\\
&&\quad \{\sum_{j\in[N]\backslash\mathcal{U}}P^{(j,m)}_{k,\ell}\}_{m\in[M],k\in[N-T-m+1],\ell\in[\lambda_m]}, \{\sum_{j\in[N]\backslash\mathcal{U}}D^{(j,m)}_{k,\ell}\}_{m\in[M],k\in[T],\ell\in[\lambda_m]}, \notag\\
&&\qquad \{\sum_{j\in[N]\backslash\mathcal{U}}\!\!f^{(j,m)}_{\ell}(\alpha_i)\}_{m\in[M],\ell\in[\lambda_m],i\in\widetilde{\mathcal{T}}}, \{W_i\!+\!C_i\}_{i\in[N]},\{f^{(\mathcal{U},m)}_{\ell}(\alpha_i)\}_{m\in[M],\ell\in[\lambda_m],i\in[N]} \Big| W_{sum}^{\mathcal{U}}, \{W_i\}_{i\in\mathcal{T}} \Big) \notag\\
&\overset{(e)}{=}&I\Big(\{{W}_i\}_{i \in [N]} ; \{C_i\}_{i\in\mathcal{T}}, \{D^{(i,m)}_{k,\ell}\}_{i\in\mathcal{T},m\in[M],k\in[T],\ell\in[\lambda_m]},
\{f^{(j,m)}_{\ell}(\alpha_i)\}_{j\in[N],m\in[M],\ell\in[\lambda_m],i\in\widetilde{\mathcal{T}}},\sum_{j\in[N]\backslash\mathcal{U}}C_j,\notag\\
&&\quad\qquad \{\sum_{j\in[N]\backslash\mathcal{U}}P^{(j,m)}_{k,\ell}\}_{m\in[M],k\in[N-T-m+1],\ell\in[\lambda_m]}, \{\sum_{j\in[N]\backslash\mathcal{U}}D^{(j,m)}_{k,\ell}\}_{m\in[M],k\in[T],\ell\in[\lambda_m]}, \notag\\
&&\qquad\qquad\qquad\qquad\qquad \{\sum_{j\in[N]\backslash\mathcal{U}}f^{(j,m)}_{\ell}(\alpha_i)\}_{m\in[M],\ell\in[\lambda_m],i\in\widetilde{\mathcal{T}}}, \{W_i+C_i\}_{i\in[N]} \Big| W_{sum}^{\mathcal{U}}, \{W_i\}_{i\in\mathcal{T}} \Big) \notag\\
&\overset{(f)}{=}&I\Big(\{{W}_i\}_{i \in [N]} ; \{C_i\}_{i\in\mathcal{T}}, \{D^{(i,m)}_{k,\ell}\}_{i\in\mathcal{T},m\in[M],k\in[T],\ell\in[\lambda_m]},
\{f^{(j,m)}_{\ell}(\alpha_i)\}_{j\in[N],m\in[M],\ell\in[\lambda_m],i\in\widetilde{\mathcal{T}}},\notag\\
&&\quad\qquad\qquad\qquad\qquad\qquad\qquad\qquad\qquad\qquad\qquad\qquad\qquad\qquad \{W_i+C_i\}_{i\in[N]} \Big| W_{sum}^{\mathcal{U}}, \{W_i\}_{i\in\mathcal{T}} \Big) \notag\\
&\overset{(g)}{=}&I\Big(\{{W}_i\}_{i \in [N]} ; \{C_i\}_{i\in\mathcal{T}}, \{D^{(i,m)}_{k,\ell}\}_{i\in\mathcal{T},m\in[M],k\in[T],\ell\in[\lambda_m]},
\{f^{(j,m)}_{\ell}(\alpha_i)\}_{j\in[N]\backslash\mathcal{T},m\in[M],\ell\in[\lambda_m],i\in\widetilde{\mathcal{T}}},\notag\\
&&\qquad\qquad\qquad\qquad\qquad\qquad\qquad\qquad\qquad\qquad\qquad\quad\qquad\qquad \{W_i+C_i\}_{i\in[N]\backslash\mathcal{T}} \Big| W_{sum}^{\mathcal{U}}, \{W_i\}_{i\in\mathcal{T}} \Big) \label{security:proof} \\
&\overset{(h)}{=}&I\Big(\{{W}_i\}_{i \in [N]} ; 
\{f^{(j,m)}_{\ell}(\alpha_i)\}_{j\in[N]\backslash\mathcal{T},m\in[M],\ell\in[\lambda_m],i\in\widetilde{\mathcal{T}}}, \{W_i+C_i\}_{i\in[N]\backslash\mathcal{T}} \Big| W_{sum}^{\mathcal{U}}, \{W_i\}_{i\in\mathcal{T}} \Big) \notag\\
&\leq&I\Big(\{{W}_i\}_{i \in [N]} ; 
\{f^{(j,m)}_{\ell}(\alpha_i)\}_{j\in[N]\backslash\mathcal{T},m\in[M],\ell\in[\lambda_m],i\in\widetilde{\mathcal{T}}}, \{W_i+C_i\}_{i\in[N]} \Big) \notag\\
&=&I\Big(\{{W}_i\}_{i \in [N]} ; 
\{f^{(j,m)}_{\ell}(\alpha_i)\}_{j\in[N]\backslash\mathcal{T},m\in[M],\ell\in[\lambda_m],i\in\widetilde{\mathcal{T}}} \Big) \notag\\
&&\qquad\qquad\qquad\qquad\qquad\qquad\qquad + I\Big(\{{W}_i\}_{i \in [N]} ; \{W_i+C_i\}_{i\in[N]}\Big|\{f^{(j,m)}_{\ell}(\alpha_i)\}_{j\in[N]\backslash\mathcal{T},m\in[M],\ell\in[\lambda_m],i\in\widetilde{\mathcal{T}}} \Big) \notag\\
&\overset{(i)}{=}&I\Big(\{{W}_i\}_{i \in [N]} ; \{W_i+C_i\}_{i\in[N]}\Big|\{f^{(j,m)}_{\ell}(\alpha_i)\}_{j\in[N]\backslash\mathcal{T},m\in[M],\ell\in[\lambda_m],i\in\widetilde{\mathcal{T}}} \Big) \notag\\
&\overset{(j)}{=}&I\Big(\{{W}_i\}_{i \in [N]} ; \{W_i+C_i\}_{i\in[N]} \Big) \notag\\
&\overset{(k)}{=}&0, \notag
\end{IEEEeqnarray}
where $(a)$ follows from the fact that $\{\mathcal{Z}_{i,j}\}_{j\in[N]\backslash\{i\}}$ is a deterministic function of $\mathcal{Z}_i$ for any $i\in\mathcal{T}$ by \eqref{encoding:polynomial}--\eqref{encoded:key};
$(b)$ is due to \eqref{scheme;random:key}--\eqref{uplaod:X} and \eqref{encoding:message};
$(c)$ follows from the fact that $\sum_{j\in[N]\backslash\mathcal{U}}C_j$ can be recovered from $\{W_i+C_i\}_{i\in[N]\backslash\mathcal{U}}$ and $W_{sum}^{\mathcal{U}}$;
$(d)$ follows from the facts that $\{\sum_{j\in[N]\backslash\mathcal{U}}P^{(j,m)}_{k,\ell}\}_{m\in[M],k\in[N-T-m+1],\ell\in[\lambda_m]}$ is a function of $\sum_{j\in[N]\backslash\mathcal{U}}C_j$ by \eqref{intical:case:P}-\eqref{matrix:P}, and that $\{\sum_{j\in[N]\backslash\mathcal{U}}D^{(j,m)}_{k,\ell}\}_{k\in[T]}$ is a deterministic function of $\{\sum_{j\in[N]\backslash\mathcal{U}}P^{(j,m)}_{k,\ell}\}_{k\in[N-T-m+1]}$ and $\{\sum_{j\in[N]\backslash\mathcal{U}}f^{(j,m)}_{\ell}(\alpha_i)\}_{i\in\widetilde{\mathcal{T}}}$ for any given $m\in[M]$ and $\ell\in[\lambda_m]$ by \eqref{encoding:polynomial};
$(e)$ follows from \eqref{encoding:polynomial:sum};
$(f)$ follows from arguments similar to those used in $(c)$ and $(d)$;
$(g)$ follows from the fact that $\{f^{(j,m)}_{\ell}(\alpha_i)\}_{i\in\widetilde{\mathcal{T}}}$ is a deterministic function of $C_j$ and $\{D^{(j,m)}_{k,\ell}\}_{k\in[T]}$ for any given $j\in\mathcal{T},m\in[M]$, and $\ell\in[\lambda_m]$ by \eqref{intical:case:P}--\eqref{encoding:polynomial};
$(h)$ follows from the fact that $\{C_i\}_{i\in\mathcal{T}}$ and $\{D^{(i,m)}_{k,\ell}\}_{i\in\mathcal{T},m\in[M],k\in[T],\ell\in[\lambda_m]}$ are generated independently of all other random variables appearing in \eqref{security:proof};
$(i)$ follows from the fact that $\{f^{(j,m)}_{\ell}(\alpha_i)\}_{j\in[N]\backslash\mathcal{T},m\in[M],\ell\in[\lambda_m],i\in\widetilde{\mathcal{T}}}$ is generated independently of the gradients $\{{W}_i\}_{i \in [N]}$ by \eqref{encoding:polynomial}; 
$(j)$ follows from the fact that the evaluations
$\{f^{(j,m)}_{\ell}(\alpha_i)\}_{i\in\widetilde{\mathcal{T}}}$ are generated using the $T$ random noise symbols $\{D^{(j,m)}_{k,\ell}\}_{k\in[T]}$ for any
$j\in[N]\backslash\mathcal{T},m\in[M]$, and $\ell\in[\lambda_m]$ by \eqref{encoding:polynomial}, while $\{D^{(j,m)}_{k,\ell}\}_{j\in[N]\backslash\mathcal{T},m\in[M],k\in[T],\ell\in[\lambda_m]}$ is independent of $\{W_i,C_i\}_{i\in[N]}$;
$(k)$ follows from the fact that $C_1,C_2,\ldots,C_N$ are mutually independent and uniformly distributed over the finite field $\mathbb{F}_q$.

\bibliographystyle{ieeetr}
\bibliography{reference.bib}

\end{document}